\theoremstyle{definition}
\newtheorem{definition}{Definition}[section]
\newtheorem{proposition}{Proposition}
\newtheorem{theorem}{Theorem}
\newtheorem{corollary}{Corollary}
\newtheorem{assumption}{Assumption}
\newtheorem{remark}{Remark}
\newcommand{\topic}[1]{\noindent\textbf{#1}}
\newcommand{\lr}[1]{\left(#1\right)}
\newcommand{\nn}[1]{\left|#1\right|}
\newcommand{\IP}[2]{\left\langle#1,#2\right\rangle}
\begin{document}
\title{%
    Modeling Neural Networks with Privacy Using \\
    Neural Stochastic Differential Equations}

\author{
\IEEEauthorblockN{%
Sanghyun Hong\IEEEauthorrefmark{1}, 
Fan Wu\IEEEauthorrefmark{2}, 
Anthony Gruber\IEEEauthorrefmark{3}, 
Kookjin Lee\IEEEauthorrefmark{2},
}
\IEEEauthorblockA{\IEEEauthorrefmark{1}Oregon State University}
\IEEEauthorblockA{\IEEEauthorrefmark{2}Arizona State University}
\IEEEauthorblockA{\IEEEauthorrefmark{3}Sandia National Laboratories}
}

\maketitle

\begin{abstract}
In this work, we study the feasibility of using
neural ordinary differential equations (NODEs) 
to model systems with intrinsic privacy properties.
Unlike conventional feedforward neural networks,
which have unlimited expressivity and
can represent arbitrary mappings between inputs and outputs,
NODEs constrain their learning to the solution of a system of differential equations.
We first examine whether this constraint reduces memorization and,
consequently, the membership inference risks associated with NODEs.
We conduct a comprehensive evaluation of NODEs
under membership inference attacks and show that
they exhibit twice the resistance compared to 
conventional models such as ResNets.
By analyzing the variance in membership risks across different NODE models,
we find that their limited expressivity leads to 
reduced overfitting to the training data.
We then demonstrate, both theoretically and empirically,
that membership inference risks can be further mitigated by
utilizing a stochastic variant of NODEs:
neural stochastic differential equations (NSDEs).
We show that NSDEs are differentially-private (DP) learners
that provide the same provable privacy guarantees as DP-SGD,
the de-facto mechanism for training private models.
NSDEs are also effective in mitigating membership inference attacks,
achieving risk levels comparable to private models trained with DP-SGD
while offering an improved privacy-utility trade-off.
Moreover, we propose a drop-in-replacement strategy 
that efficiently integrates NSDEs into 
conventional feedforward architectures to enhance their privacy.
\end{abstract}

\IEEEpeerreviewmaketitle

\section{Introduction}
\label{sec:intro}

Neural networks are ``universal function approximators":
they can learn arbitrary continuous mappings 
from training data to target outputs~\cite{hornik1989multilayer}.
This expressivity allows highly parameterized networks
to model complex, non-linear relationships and 
capture hidden patterns that traditional methods may struggle to detect.
However, this strength also introduces an undesirable property: \emph{memorization}~\cite{feldman2020neural}.
Memorization occurs when the learned mappings are overly specific to 
individual training instances---regardless of whether 
these mappings generalize to unseen data---%
potentially leading to the leakage of their membership
through model queries~\cite{yeom, shokri, songattack, lira, zhang2024does}.
Given that neural networks are often deployed in sensitive domains,
such as healthcare, where training data may include confidential medical records,
this membership leakage naturally raises privacy concerns.

Recent work has introduced a new class of neural networks:
neural ordinary differential equations (NODEs)~\cite{node}.
Unlike conventional neural networks, 
NODEs constrain the learned mappings to
solutions of parameterized differential equations%
---a fundamental mathematical tool 
for describing processes in the physical world.
By shifting from manually designing governing equations to data-driven modeling, 
NODEs have emerged as a powerful paradigm in scientific computing
for modeling dynamical systems across various domains,
such as time-series forecasting%
~\cite{kidger2020neuralcde, 
       jeon2022gtgan, wu2024identifying}, %
scientific machine learning (ML)%
       ~\cite{greydanus2019hamiltonian, vermaclimode,
       lee2021machine, gruber2024reversible},
and computer vision tasks%
~\cite{dupont2019augmented, yildiz2019ode2vae, xia2021heavy, cho2022adamnodes}. %
These works demonstrate that NODEs,
despite their constrained expressivity to learnable differential equations,
can achieve performance comparable to that of conventional neural networks
across a range of tasks.

In this work, we study the privacy implications of employing
differential equation-based neural networks.
Specifically, we ask the following research question:
\emph{Does this emerging class of neural networks (and their variants)
offer inherent privacy advantages, such as reduced membership inference risks, 
compared to conventional neural networks?}
Because NODEs restrict model expressivity to the system of differential equations, 
they may influence memorization and overfitting%
---key factors attributing to membership inference risks.
Their stochastic variants, 
neural stochastic differential equations (NSDEs),
introduce randomness into the model dynamics and thus,
may satisfy formal privacy guarantees.
Moreover, the modular design of these models
could enable their integration into conventional feedforward architectures 
as privacy-enhancing components.

We first conduct a systematic evaluation of 
the membership inference risks associated with NODEs.
To this end, we develop a risk analysis framework 
that runs six off-the-shelf membership inference attacks
across a variety of neural networks
including NODEs and baseline feedforward models.
In our evaluation on four image recognition benchmarks,
commonly used in prior work on assessing membership risks,
we find that NODEs exhibit up to $2\times$ \emph{lower} membership risks
compared to conventional feedforward networks, like ResNets~\cite{he2016deep}.
We also find that this reduced risk is partly 
attributable to the lower degree of overfitting in NODEs,
while achieving accuracy comparable to baselines.

Moreover, we analyze factors that may %
affect their membership risks.
We first examine configurations
one can control at the block-level: solvers, tolerance, and step size.
Our analysis indicates that these factors,
which change the complexity of learned dynamics
while keeping the same model capacity,
can increase/decrease the likelihood of overfitting.
We also test the model-level configurations:
the number of blocks and advanced variants of NODEs%
~\cite{dupont2019augmented, xia2021heavy, norcliffe2020second}.
They increase the model capacity 
or the complexity of learned dynamics, 
and therefore, increase membership risks.

Our previous evaluation shows that,
while NODEs are less vulnerable, 
the risks to membership inference still exist.
We address this problem
by employing a stochastic adaptation of NODEs: NSDEs~\cite{liu2019neural}.
NSDEs also model underlying dynamics using differential equations,
but augment them with a diffusion term that introduces Gaussian noise.

We first formally show that
this diffusion term acts as a differentially-private (DP) mechanism 
during training, and turns NSDEs into DP-learners.
We establish a theoretical bound on privacy leakage 
per mini-batch SGD iteration and 
present a mechanism to account for total privacy leakage 
over the entire training process.
Surprisingly, we find that the total privacy leakage ($\varepsilon$)
guaranteed by NSDE training matches that of DP-SGD~\cite{dpsgd},
the de-facto standard for training private models.
The leakage $\varepsilon$ is proportional to 
the number of training iterations and the noise level ($\sigma$).

We then empirically evaluate the effectiveness of NSDEs 
in further reducing the membership inference risks.
Our results show that NSDEs reduce the effectiveness of
existing membership inference attacks by 1.8--10$\times$ 
compared to NODEs.
Despite offering the same privacy guarantee,
NSDE models maintain utility that is comparable to
(or slightly higher than) ResNet models trained with DP-SGD.
This shows NSDEs as a promising countermeasure 
that achieves an improved privacy-utility trade-off.
We also compare NSDEs with prior heuristic defenses~\cite{DBLP:journals/corr/abs-2006-05336}
designed to counter membership inference attacks 
but lacking formal privacy guarantees.
We show that NSDEs are $2$--$5\times$ more effective 
in mitigating the membership inference risks.

Implementing our privacy mechanism
requires training an NSDE model from scratch,
which can be a computationally demanding option 
for %
model providers.
To address this practical challenge, 
we propose an effective strategy to apply NSDEs 
to existing pre-trained models,
particularly conventional model architectures like ResNets.
We leverage the common practice of \emph{transfer learning}
where the last layer of a pre-trained model 
is replaced with a classification head,
followed by fine-tuning on data for a few iterations.
Our proposed strategy, \emph{replace-then-finetune}, 
substitutes the last few layers of a pre-trained model 
with an NSDE block.
We test our strategy on ResNet14 pre-trained on %
the four benchmark tasks.
Across all benchmarks,
the fine-tuned models maintain accuracy 
comparable to the pre-trained models,
while reducing the membership risks %
substantially (e.g., by 10$\times$ in CIFAR-10).
We hope our work encourages the adoption of NSDEs
as a viable mechanism for further reducing privacy risks
while maintaining model utility.
\smallskip

\topic{Contributions.}
In summary, our contributions are:
\begin{itemize}[topsep=0.2em, itemsep=0.2em, leftmargin=1.4em]
    \item We are the first to study the membership inference risks of NODEs.
    Our comprehensive evaluation shows that NODEs exhibit 2$\times$ lower membership risks compared to baseline models. We attribute this to NODEs' reduced overfitting, which is from their restricted expressivity---learning within the constraints of a system of ODEs.
    \item We propose NSDEs as a defense mechanism to mitigate membership inference attacks. We formally show that NSDEs provide the same provable privacy guarantee as DP-SGD. We also present an accounting mechanism to track the total privacy leakage during NSDE training.
    \item In our evaluation, we show that NSDEs reduce membership inference risks by 2--5$\times$ compared to heuristic defenses without provable guarantees. Compared to private models trained with DP-SGD, NSDEs offer comparable (provable) protection against membership inference attacks while achieving slightly higher accuracy.
    \item We present a practical strategy to leverage these privacy benefits 
    without fully replacing existing models with NSDEs: \emph{replace-then-finetune}. 
    This strategy reduces membership inference risks by 10$\times$ 
    while achieving a higher utility than private models traind with DP-SGD.
\end{itemize}

\section{Background}
\label{sec:prelim}

\subsection{Neural Ordinary Differential Equations}
\label{subsec:neural-odes}

NODEs model the dynamics of system states as %
a system of ODEs such that:
\begin{align*}
    \frac{\mathrm d ~h(t)}{\mathrm d t} = ~f_{\Theta}(~h(t), t), 
\end{align*}
where $~h(t)$ are the state variables and
$~f_{\Theta}$ is a neural network,
parameterized by its parameters $\Theta$,
defining the rate of change in $~h(t)$.
The scalar variable $t$ %
refers to a time variable 
and the system states correspond to observables in time-dependent physical processes~\cite{strogatz2018nonlinear}
or measurements in time-series data (e.g., \cite{rubanova2019latent}). 
However, in the context of hidden representations 
within a deep feedforward network (FFN),
$t$ instead refers to %
a continuous depth parameter. More precisely, recall that the flow map $~\varphi = ~\varphi(t,~h)$ of an ODE system associates to each state $~h$ its ``flowed out'' state after evolving along the ODE for $t$ time, i.e., $~\varphi(0,~h)=~h$ and $(d/dt)~\varphi(t,~h) = ~f_{\Theta}(~h(t), t)$ in the case of the NODE above.  Then, the number of ``layers'' in the NODE architecture defined by $~f_{\Theta}$ is analogous to the number of ``time steps'' used to generate the flowed-out state $~h(t)$, and hence each network prediction based on this quantity.  
Our focus is on this use case, which allows for depth control through direct manipulation of the ODE time-stepping. %
\smallskip

\topic{Training and inference.}
In NODEs, the forward pass involves solving an initial value problem (IVP) describing the evolution of some given initial states under the flow of the learnable ODE in question.
Given an initial condition $~h(0)=~h_0$ for the state variables (or hidden state representations),
the states at any time index~$\{~h(t_i)\}_{i=1}^{n}$ 
can be obtained by solving the associated IVP, lending NODEs their interpretation as continuous-depth FFNs.
To solve the necessary IVP, a black-box ODE time-integrator with solver options \textsc{opt} can be employed: 
\begin{equation*}
    ~h(t_1),\ldots, ~h(t_n) = \text{ODESolve}(~f_{\Theta}, ~h(0), [t_1,\ldots,t_n], \textsc{opt}).
\end{equation*}
Different choices of numerical time-integration yield different realizations of FNNs, e.g., under the forward Euler method with step size 1, autonomous ($t$-independent) NODEs degenerate to ResNets~\cite{he2016deep}, $~h(t_{i+1}) = ~h(t_i) + ~f_{\Theta}(~h(t_i))$. With this setup, the forward pass for both training and inference is equivalent to recurrently applying a nonlinear transformation, defined by $~f_{\Theta}$, to a hidden state, followed by a skip connection. In classification tasks, the last hidden representation $~h(t_n)$ is fed into a classifier network, producing logits, and the loss of the model is obtained via computing the cross-entropy. The continuous-depth network $~M(t, ~h) = ~M_t~(~h) \approx ~\varphi(t,~h)$ can then be thought of as discretizing the ODE flow map, i.e., $~M_{t_n}(~h_0) = ~h(t_n)$, where $~h(t_n)$ is the solution of the IVP defined above.  As mentioned before, any intermediate time steps $t_1,...,t_{n-1}$ used in the production of the final state  $~h(t_n)$ are analogous to the labels on hidden layers in an FFN. 

For the backward process required during training, 
NODEs use either standard backpropagation through the rolled-out computational path, 
or the adjoint sensitivity method~\cite{pontryagin2018mathematical}.
The primary challenge lies in achieving 
stable and memory-efficient gradient computation for $~f_\Theta$ 
with respect to its parameters, 
as backpropagating directly through an iterative numerical solver can be memory intensive and exacerbate vanishing or exploding gradient issues.  
\smallskip

\topic{NODE variants.}
Many variants have been developed 
to improve their expressivity and efficiency.
Augmented NODEs (ANODEs)~\cite{dupont2019augmented} 
propose an additional dimension to the state variable $[~h(t), ~a(t)]$,
to increase expressivity.
ANODEs were further extended into second-order NODEs  (SONODEs)~\cite{norcliffe2020second},
which formulates a system of coupled first-order NODEs, 
implementing the broader class of augmented models.
Other variants incorporate concepts from gradient descent optimization,
such as momentum, resulting in heavy ball NODEs (HBNODEs)~\cite{xia2021heavy} 
and Nesterov NODEs~\cite{nguyenimproving}, 
both of which improve computational efficiency 
by reducing the number of function evaluations in the forward pass. 
\smallskip

NSDEs~\cite{liu2019neural, li2020scalable, kidger2021neural} extend NODEs to the modeling of stochastic processes, typically introducing randomness
in the form of additive Gaussian noise.  These systems can be formulated in terms of deterministic, vector-valued ``drift'' and stochastic, matrix-valued ``diffusion'' terms as follows:
\begin{align*}
    \mathrm d ~h = ~f_{\Theta_1}(~h,t)\, \mathrm d t + ~G_{\Theta_2}(~h,t)\,\mathrm d ~B_t,
\end{align*}
where $d~B_t$ denotes the Wiener increment\textemdash a zero mean, unit variance, delta-correlated Gaussian process describing ``white noise''.
A simple but popular discretization of NSDEs is given by the Euler--Maruyama method~\cite{platen2010numerical} with step size 1,
expressed as $~h(t_{i+1}) = ~h(t_i) + ~f_{\Theta_1}(~h, t_i)\Delta t + ~G_{\Theta_2}(~h,t_i)~w_{t_i}$, 
where $\Delta t$ is the time step size and $~w_{t_i} \sim \mathcal N(0,\Delta t)$ is randomly sampled.
In this formulation, Gaussian noise is injected alongside each residual connection. 
Because of this property, NSDEs have shown enhanced robustness 
against adversarial examples~\cite{liu2020how} compared to standard NODEs.
Our study further explores this property
as a formal mechanism for mitigating privacy risks.

\subsection{Membership Inference Attacks}
\label{subsec:mia}

Membership inference attacks aim to determine
whether a specific sample is a \emph{member} of the training data.
To do this, the attacker exploits the difference
in the target model's response to the specific sample
when it is a member versus a non-member.
Membership inference can be considered a threat on its own,
but a model's vulnerability to inference attacks
also implies its potential to leak other private information outside of this context,
an idea which aligns closely with the definition of differential privacy~\cite{dwork2006differential}. %
\smallskip

\topic{Existing attacks.}
Prior work has developed various attacks 
to exploit differences and identify \emph{membership}.
In our work, we evaluate NODEs
against each of the representative attacks described below:

\begin{itemize}[itemsep=0.1em, leftmargin=1.4em]
    \item Yeom \textit{et al.}~\cite{yeom} %
    focus on differences in \emph{loss} values:
    for a specific instance $z = (x, y)$,
    a model will be more accurate in its predictions
    when it has seen $z$ during training.
    The attack predicts $z$ as a member 
    if and only if, for a threshold $\tau$, 
    the model's loss on $z$ is below $\tau$; 
    otherwise, it is classified as a non-member.
    \item Shokri \textit{et al.}~\cite{shokri} %
    leverage \emph{shadow models} which better approximate 
    the differences in a model's responses
    to members versus non-members.
    When constructing shadow models,
    the attacker artificially generates datasets
    for training and testing %
    so that the members and non-members are known in advance.
    The attacker collects responses from shadow models
    for both members and non-members and 
    trains a classifier to predict membership of $z$
    based on the target model's response.
    \item Song and Mittal~\cite{songattack}
    leverage the \emph{prediction correctness}
    to compute the threshold $\tau$ for identifying membership. 
    Instead of training classifiers to perform attacks,
    $\tau$ is designed such that
    correct predictions with high confidence yield the lowest values
    while confident but incorrect predictions achieve the highest.
    \item Watson~\textit{et al.}~\cite{watson2022on}
    proposed {\it per-example difficulty} calibration
    in which an attacker leverages shadow models 
    trained without a particular example 
    to compute the average confidence level of a model.
    The average is then subtracted 
    from the example's confidence obtained from 
    the target model to calibrate the difficulty of the sample.
    \item Carlini~\textit{et al.}~\cite{lira}
    introduced Likelihood Ratio Attack (LiRA),
    which refines how shadow training data is used by the adversary.
    For each sample in the shadow training data,
    LiRA trains two sets of shadow models:
    one set with the sample included (``in") and another without (``out").
    The attacker collects logits
    from both members and non-members across these shadow models,
    computes a membership score for a sample $z$
    as the ratio of the likelihoods of observing its logit
    in ``in" versus ``out" models, and 
    uses this score to identify membership.
    
    \item The latest work by Zarifzade~\textit{et al.}~\cite{zarifzadeh2023low} 
    introduced a Robust Membership Inference Attack (RMIA), 
    which models membership inference as a fine-grained likelihood-ratio test.
    RMIA operates by contrasting the target model's output distribution 
    against a reference distribution, 
    constructed from a small set of reference models and population data.
    This approach enables more robust differentiation between members and non-members. 
\end{itemize}
\smallskip 

\topic{Metrics.}
Initial work~\cite{yeom} %
uses inference accuracy as a metric for evaluating attack success.
Subsequent works use the area under the ROC curve (AUC)
as an additional metric to report the balance between
true-positive (TPR) and false-positive rates (FPR).
More recently, Carlini~\cite{lira} proposes evaluating attack success 
in the worst-case \textit{low-FPR regime}.
Our work reports all these metrics in evaluation.

\subsection{Defenses against Membership Inference}
\label{subsec:mia-defenses}

Prior work has developed defenses %
to obscure the difference in model output between members and non-members.
\\ \vspace{-0.8em}

\topic{Reducing overfitting.}
One key difference is the degree of overfitting 
or the extent to which a model memorizes its training data. 
Yeom~\textit{et al.}~\cite{yeom} found that 
defending against membership inference attacks
and reducing overfitting go hand-in-hand,
so that the goals of privacy and performance are aligned.
Many defenses~\cite{shokri, songattack, 10.1145/3319535.3363201} 
therefore focus on a regularization effect,
and existing regularization methods are also being used 
to reduce membership risk~\cite{DBLP:journals/corr/abs-2006-05336}.

\begin{itemize}[itemsep=0.1em, leftmargin=1.4em]
    \item \emph{$\ell^p$-regularization and dropout}
    are two of the most commonly used regularization techniques in practice,
    shown effective against membership inference attacks~\cite{shokri}.
    \item \emph{Early stopping} is a straightforward way to prevent overfitting~\cite{songattack}.
    One halts training when the model's performance stops improving.
    This has the effect of stopping the model before it can be begin to memorize.
    \item \emph{MMD with Mix-up}~\cite{10.1145/3422337.3447836} 
    combines two different mechanisms: 
    (1) a maximum mean discrepancy (MMD) loss %
    to reduce the training accuracy to match validation accuracy, and
    (2) training on mix-up augmented data,
    where a linear interpolation between two randomly drawn examples
    (and their labels) is used instead of individual samples.
    Mix-up has been shown to increase overall accuracy 
    and further reduce overfitting.
    \item \emph{MemGuard}~\cite{10.1145/3319535.3363201} 
    uses an ``attack-as-a-defense" approach:
    It adds carefully-calibrated noise to each confidence score vector 
    predicted by the target model. Once added, 
    the resulting adversarial examples mislead 
    the attacker's membership identification (e.g., the attack classifiers).
\end{itemize}
\smallskip

\topic{Differential privacy}~\cite{dwork2006differential} (DP)
is a mathematical framework that provides 
a probabilistic privacy guarantee.  Precisely, the following definition implies that,  as a random variable, the privacy loss is bounded in the worst case by a constant $\varepsilon>0$ with probability $0 < 1-\delta < 1$.

\begin{definition}[($\varepsilon,\delta$)-DP]
Given any two datasets $D$ and $D'$ that differ by only a single record, 
a mechanism $M$ satisfies ($\varepsilon,\delta$)-DP if, 
for any subset $S$ of the image of $M$,
\begin{align*}
    Pr[M(D) \in S] \leq e^{\varepsilon} Pr[M(D') \in S] + \delta
\end{align*}
where $\varepsilon$ is the privacy budget (or leakage) and 
$\delta$ is the failure probability.
\end{definition}

A primary advantage of DP is its immunity to post-processing: 
an $(\varepsilon,\delta)$-DP algorithm cannot be weakened through manipulation of its output.
Based on this property,
Abadi~\textit{et al.} 
proposed an effective, differentially-private adaptation of 
stochastic gradient descent: DP-SGD~\cite{dpsgd}.
The key in DP-SGD is the \emph{moment accountant},
a mechanism that tracks total privacy leakage 
possible under a worst-case adversary.
This leakage is typically denoted as $\varepsilon$,
a convention that  will be followed throughout the rest of the paper.
Because of the formal privacy guarantee that DP-SGD offers,
it has become a de-facto standard for constructing private models.
However, this approach inherently leads to a utility loss
due to the addition of Gaussian noise to the gradients
required for accurate learning.
We evaluate how effective those defenses are
in mitigating the membership risks of NODEs
and formally demonstrate that NSDEs are DP learners,
offering an improved privacy-utility tradeoff.

\section{Membership Risks of NODEs}
\label{sec:membership-risks}

\begin{table*}[t]
\centering
\caption{%
    \textbf{Comparison of membership inference risks.}
    We evaluate two different types of neural networks against six existing attacks on four benchmarks. We compare their risks based on four different metrics used in the prior work.
}
\label{tbl:main-mia}
\adjustbox{max width=\linewidth}{
\begin{tabular}{@{}ll | cccc | cccc | cccc | cccc@{}}
\toprule
 &  %
 & \multicolumn{4}{c|}{\textbf{TPR @ 0.1\% FPR}} & \multicolumn{4}{c|}{\textbf{TPR @ 1\% FPR}} & \multicolumn{4}{c|}{\textbf{AUC}} & \multicolumn{4}{c}{\textbf{Inference acc.}} \\ \midrule
\begin{tabular}{l}\textbf{Model}\end{tabular} & 
    \textbf{Method} &
    \textbf{F-M} & \textbf{C-10} & \textbf{C-100} & \textbf{T-I} & 
    \textbf{F-M} & \textbf{C-10} & \textbf{C-100} & \textbf{T-I} & 
    \textbf{F-M} & \textbf{C-10} & \textbf{C-100} & \textbf{T-I} & 
    \textbf{F-M} & \textbf{C-10} & \textbf{C-100} & \textbf{T-I} \\ \midrule \midrule
\multirow{6}{*}{\begin{tabular}{l}ResNet-14\end{tabular}} & 
    Yeom et al.~\cite{yeom} & 
    0.00\% & 0.00\% & 0.00\% & 0.02\% & 
    0.00\% & 0.00\% & 0.69\% & 1.11\% & 
    0.563 & 0.590 & 0.767 & 0.683 &
    59.91\%& 58.92\% & 73.81\% & 64.27\% \\
    & Shokri et al.~\cite{shokri} & 
    0.05\% & 0.01\% & 0.01\% & 0.10\% & 
    0.50\% & 0.14\% & 0.06\% & 1.01\% & 
    0.468 & 0.411 & 0.336 & 0.501 & 
    50.00\% & 50.00\% & 50.00\% & 50.42\% \\
    & Song and Mittal~\cite{song2021scorebased} & 
    0.00\% & 0.00\% & 0.25\% & 0.13\% & 
    0.23\% & 0.00\% & 1.76\% & 1.23\% & 
    0.471 & 0.590 & 0.494 & 0.499 & 
    51.29\% & 58.92\% & 51.19\% & 50.28\% \\
    & Watson et al.~\cite{watson2022on} & 
    0.00\% & 0.00\% & 0.00\% & 0.12\% &
    0.00\% & 0.00\% & 0.62\% & 1.21\% &
    0.554 & 0.590 & 0.767 & 0.548 &
    56.21\% & 58.92\% & 73.81\% & 53.43\% \\
    & Carlini et al.~\cite{lira} & 
    1.92\% & 3.96\% & 10.52\% & 0.28\% & 
    5.87\% & 10.57\% & 25.95\% & 1.97\% &  
    0.599 & 0.679 & 0.865 & 0.554 & 
    55.70\% & 61.15\% & 76.46\% & 53.72\% \\ 
    & Zarifzadeh et al.~\cite{zarifzadeh2023low} & 
    1.13\% & 3.73\% & 13.98\% & 0.32\% & 
    6.54\% & 8.78\% & 28.25\% & 2.25\% & 
    0.497 & 0.581 & 0.858 & 0.578 & 
    56.83\% & 59.13\% & 75.53\% & 55.39\% \\
    \midrule
\multirow{6}{*}{\begin{tabular}{l}NODE\end{tabular}} & 
    Yeom et al.~\cite{yeom} & 
    0.00\% & 0.00\% & 0.01\% & 0.09\% &
    0.06\% & 0.00\% & 0.71\% & 1.12\% &
    0.518 & 0.558 & 0.701 & 0.645 & 
    52.55\% & 56.59\% & 70.75\% & 61.23\% \\
    & Shokri et al.~\cite{shokri} & 
    0.09\% & 0.04\% & 0.04\% & 0.10\% &
    0.90\% & 0.43\% & 0.37\% & 1.01\% &
    0.493 & 0.448 & 0.398 & 0.502 & 
    50.15\% & 50.01\% & 50.00\% & 50.21\% \\
    & Song and Mittal~\cite{song2021scorebased} & 
    0.00\% & 0.00\% & 0.15\% & 0.12\% &
    0.53\% & 0.00\% & 1.66\% & 1.19\% &
    0.507 & 0.529 & 0.498 & 0.500 &
    50.98\% & 54.22\% & 52.48\% & 50.45\% \\
    & Watson et al.~\cite{watson2022on} & 
    0.00\% & 0.00\% & 0.01\% & 0.12\% &
    0.04\% & 0.00\% & 0.71\% & 1.17\% &
    0.517 & 0.558 & 0.701 & 0.542 & 
    52.04\% & 56.69\% & 70.75\% & 53.24\% \\
    & Carlini et al.~\cite{lira} & 
    0.19\% & 1.01\% & 3.30\% & 0.24\% &
    1.48\% & 4.22\% & 12.11\% & 1.74\% & 
    0.529 & 0.616 & 0.782 & 0.543 & 
    52.81\% & 57.65\% & 72.88\% & 52.95\% \\ 
    & Zarifzadeh et al.~\cite{zarifzadeh2023low} & 
    0.67\% & 1.93\% & 5.41\% & 0.26\% &
    3.16\% & 6.57\% & 17.58\% & 2.04\% &
    0.523 & 0.592 & 0.797 & 0.570 &
    53.05\% & 58.17\% & 73.37\% & 55.04\% \\ 
    \bottomrule
\end{tabular}
}
\end{table*}

We first evaluate the membership risks of NODEs.
Our hypothesis is that these models
are vulnerable to existing membership inference attacks
as their performance on popular benchmarking tasks
is comparable to conventional feedfoward networks.
However, because their expressivity is limited to modeling a system of ODEs,
the risk could be lower than that of conventional networks.

\subsection{Threat Model}
\label{subsec:threat-model}

The attacker aims to determine whether a specific sample $z=(x, y)$ 
is included in the data used for training the target model $f$.
The adversary only has \emph{black-box} access to $f$.%
\footnote{Nasr~\textit{et al.}~\cite{nasr2019comprehensive} 
showed that a \emph{white-box} membership adversary
with full access to the model and its parameters, 
does not particularly perform better than black-box attacks.}
We assume that the attacker can train shadow models $f_s$
with known membership for all shadow training samples
and use them to infer membership from the model output. %
Because the attacker aims for $f_s$ to closely resemble the behaviors of $f$, 
we assume the attacker's shadow training dataset
and the original training data come from the same underlying data distribution 
and may overlap partially.
We also assume a worst-case black-box adversary
with full knowledge of $f$'s architecture and training configurations
such that they employ this knowledge to construct and train $f_s$.
When evaluating defenses in \S\ref{sec:eval},
we consider an \emph{adaptive} adversary
who knows the defense mechanisms deployed on $f$ 
and their hyper-parameters.

\subsection{Experimental Setup}
\label{sec:experimental-setup}

\topic{Datasets.}
We use four object recognition benchmarks:
FashionMNIST~\cite{xiao2017fashion},
CIFAR-10 and CIFAR-100~\cite{Krizhevsky09learningmultiple}, 
and Tiny-ImageNet~\cite{imagenet_cvpr09}.
These datasets are commonly employed 
in prior work~\cite{yeom, shokri, song2021scorebased, lira}. 
We employ them to evaluate and compare the membership risks of NODEs 
with those reported in prior studies on feedforward neural networks.
\smallskip

\topic{Models.}
For each dataset, we train models with architectures 
based on those studied by Oganesyan~\textit{et al.}~\cite{oganesyan2020stochasticity}.
They have a first few convolutional layer for downsampling, 
followed by three residual or ODE blocks 
(see Appendix for details).
Our baseline is a ResNet architecture (\textbf{ResNet-14})
with 3 residual blocks where each block within a group has the same size 
and each group has 16, 32, and 64 filters respectively,
widened by a factor of 2.
To construct our NODE models,
we substitute each group of residual blocks 
with an ODE block. %
\textbf{NODE} (ODENet-16-32-64) uses the same dimensions.
We also examine \textbf{ODENet-64}, which contains only one ODE block
after the downsampling layers---a typical structure used in prior work~\cite{node, dupont2019augmented}.
\smallskip

\topic{Metrics.}
Following best practices from prior work~\cite{lira}, 
we measure the membership risks by computing 
the true-positive rate at a low false-positive rate (\emph{TPR@0.1--1\% FPR}). 
We also compare \emph{AUC} and inference accuracy (or \emph{accuracy}).
Because this empirical membership risk is
highly connected to a model's generalization gap, 
we report both the best train and test accuracies of target models.

\subsection{Quantifying the Membership Risks}
\label{sec:quantifying}

We evaluate our models against six membership inference attacks outlined in \S\ref{subsec:mia}. 
For the attacks employing shadow models, 
we train 16 models on 16 different subsets of the original training dataset.
Following the strategy of prior work~\cite{lira},
we construct these 16 different shadow training datasets
such that each training sample is included in half of the sets 
and excluded from the other half.
We then choose one of them as the target model $f$ 
and the remaining 15 as shadow models $f_s$. 
We run 16-fold cross-validation by choosing a different target model
from the 16 models in each round, and we report the averaged metrics.
For attacks like those by Yeom~\textit{et al.},
which do not use shadow models. %

\topic{Results.}
Table~\ref{tbl:main-mia} summarizes our results.
Due to the page limit, 
we include the ODENet-64 results in Appendix and focus on ResNet-14 and NODE models.
We contrast the membership risks of NODEs and ResNet-14 models.
Overall, we find that \emph{NODEs are 2$\times$ less vulnerable 
to membership inference attacks}, compared to ResNet14 in most cases. 
In CIFAR-10, attacks except those by
Carlini~\textit{et al.} and Zarifzadeh~\textit{et al.}
show a TPR of $\sim$0.0\% at 0.1--1.0\% FPR.
Against the two effective attacks, 
NODEs reduce their effectiveness by 1.3--4$\times$ than ResNet-14.
On CIFAR-100, other attacks also show some effectiveness,
with TPRs of up to 1.5\%, but the LiRA and RMIA attacks
remain the most effective.
Both attacks are 1.6--2.6$\times$ less effective 
on NODEs at the same low-FPR regime.
On FashionMNIST and Tiny-ImageNet,
we also observe a reduction in membership inference success.
However, as the baseline attacks 
were already less effective on these datasets,
the relative reduction is limited at most 2$\times$.
We find the reduced effectiveness in AUC and inference accuracy.
Note that our results do not imply that 
NODEs are free from membership risks.
The attacks are still effective,
with TPRs ranging from 1.1\% to 10.6\% at low FPRs
in CIFAR-10 and 100.
\smallskip

\topic{Generalization gap}
is a known factor contributing to membership risks.
Prior work~\cite{yeom, shokri} shows that 
risks increase when a model \emph{overfits}%
---it shows a large disparity between the training and testing accuracy.
Hence, we analyze whether the reduced membership risks
observed in NODEs are because of lower overfitting.
Table~\ref{tbl:main-mia-overfit} summarizes the gaps.

\begin{table}[ht]
\centering
\caption{%
    \textbf{Generalization gap},
    measured for both ResNet-14 and NODE models 
    across four benchmark datasets.
}
\label{tbl:main-mia-overfit}
\vspace{-0.6em}
\adjustbox{max width=\linewidth}{
\begin{tabular}{@{}r|rr|rr|rr|rr@{}}
    \toprule
    \textbf{Task} & \multicolumn{2}{c|}{\textbf{F-M}} & \multicolumn{2}{c|}{\textbf{C-10}} & \multicolumn{2}{c|}{\textbf{C-100}} & \multicolumn{2}{c}{\textbf{T-I}} \\ \midrule
    \textbf{Model} & \textbf{R-14} & \textbf{NODE} & \textbf{R-14} & \textbf{NODE} & \textbf{R-14} & \textbf{NODE} & \textbf{R-14} & \textbf{NODE}\\ \midrule \midrule
    \textbf{Train} & 100.0\% & 95.5\% & 98.1\% & 94.3\% & 95.0\% & 82.1\% & 64.7\% & 55.9\% \\
    \textbf{Test} & 88.4\% & 90.0\% & 86.0\% & 84.4\% & 54.3\% & 52.2\% & 42.4\% & 40.5\% \\ \midrule
    \textbf{$\Delta$} & 11.6\% & 5.5\% & 12.1\% & 9.9\% & 40.7\% & 29.9\% & 22.3\% & 15.4\% \\ \bottomrule
\end{tabular}
}
\end{table}

\begin{figure}[h]
    \centering
    \vspace{-0.8em}
    \includegraphics[width=0.9\linewidth]{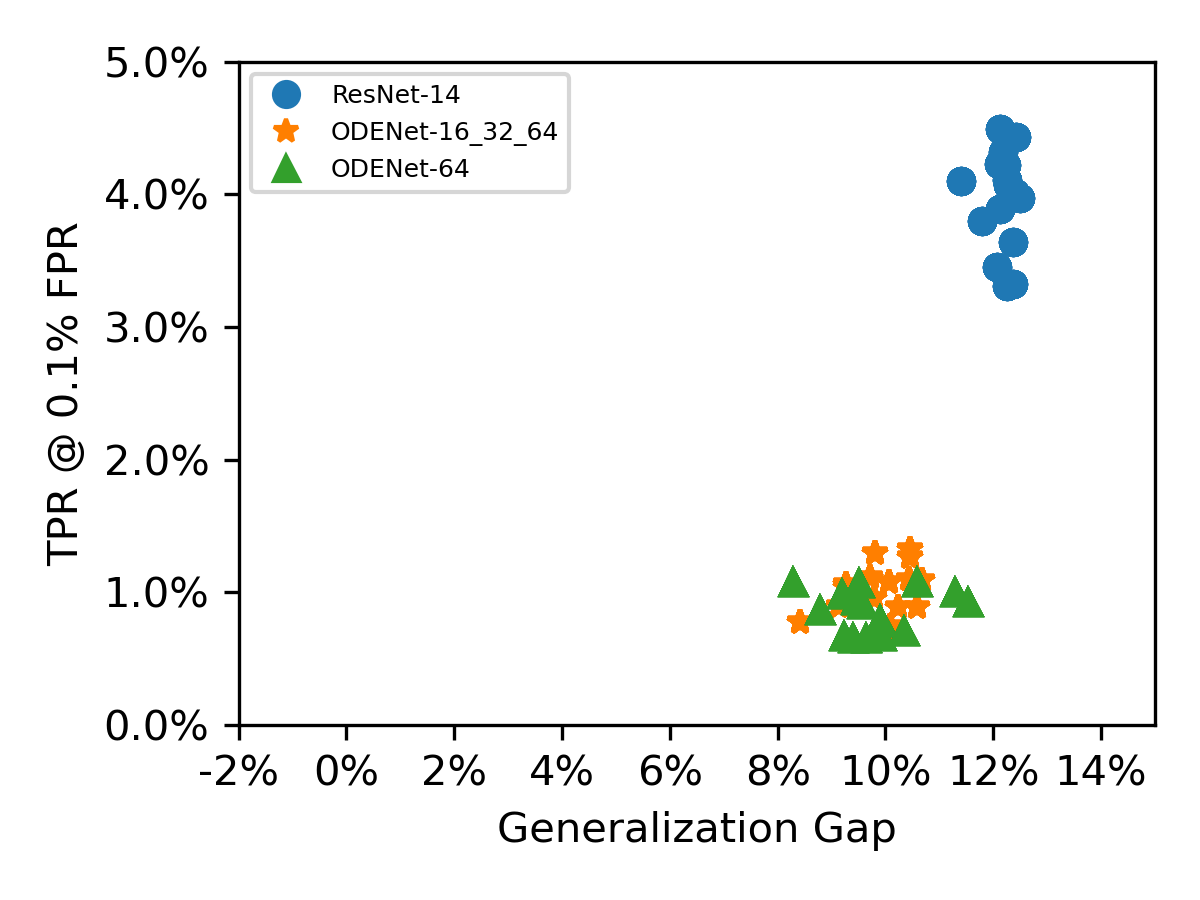}
    \vspace{-1.6em}
    \caption{\textbf{Membership risks and overfitting} in CIFAR-10.}
    \label{fig:generalization_gap}
\end{figure}

Figure~\ref{fig:generalization_gap} analyzes the interaction 
between membership risks and the generalization gap in CIFAR-10.
We plot the attack success and the generalization gap
observed for each of the 16 target models,
focusing on LiRA~\cite{lira}, 
where we observe the largest reduction.
We first find that overfitting correlates with
the success of membership inference attacks, 
as shown in Table~\ref{tbl:main-mia-overfit}.
In ResNets, the generalization gap is $>$12\%,
and the TPR @ 0.1\% FPR is 3--5\%.
In contrast, NODE models (ODENet-16\_32\_64 and -64), 
with 0.5--1.5\% TPR @ 0.1\% FPR, 
show the generalization gap in 8--12\%.
Please refer to Appendix for results in CIFAR-100.

\subsection{Characterizing the Membership Risks}
\label{subsec:characterization}

We now analyze factors that 
attribute to the membership inference risks.
For this analysis, we use LiRA,
where NODEs demonstrate the largest reduction in their risk.
We examine block-level configurations,
such as the choice of ODE solvers, and model-level factors,  
including the number of ODE blocks and 
the type of differential equations used to model a system.
Figure~\ref{fig:node-risk-ablation} summarizes our results.

\smallskip

\begin{figure*}[t]
    \centering
    \includegraphics[width=.32\linewidth]{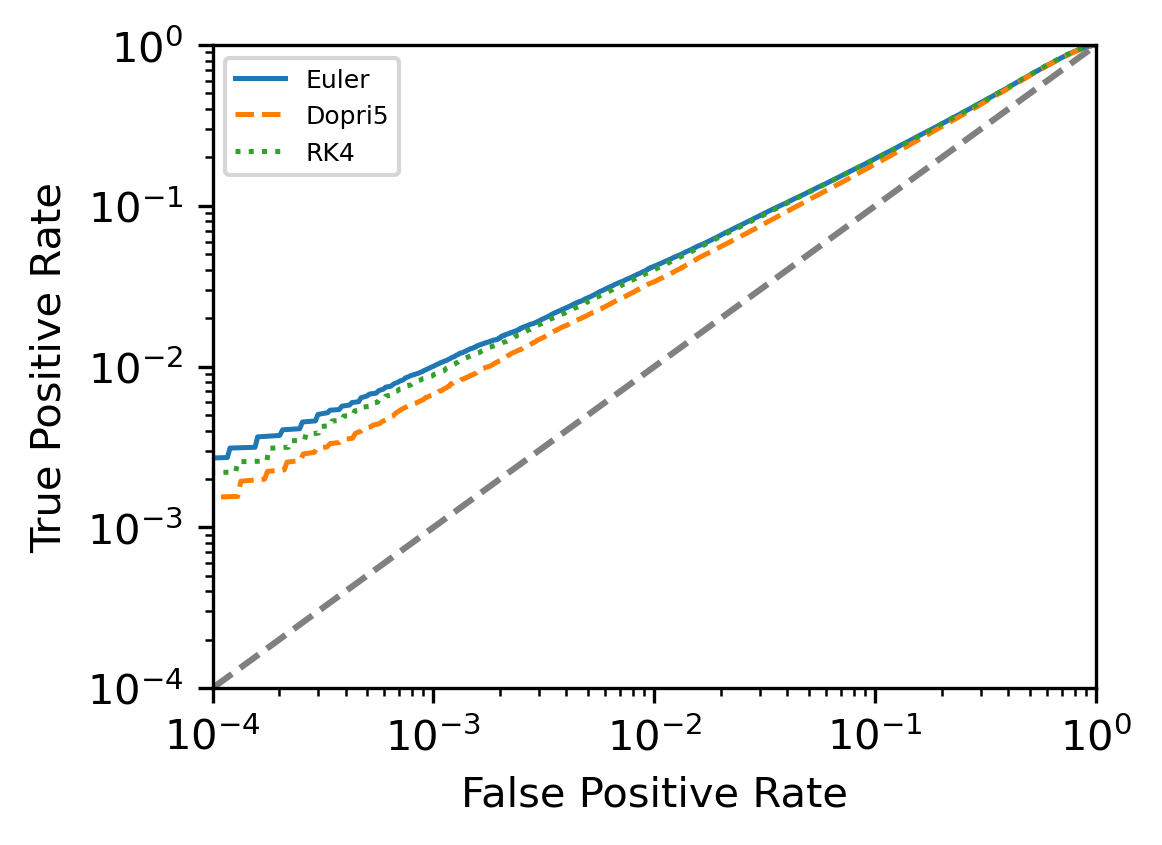}
    \hfill
    \includegraphics[width=.32\linewidth]{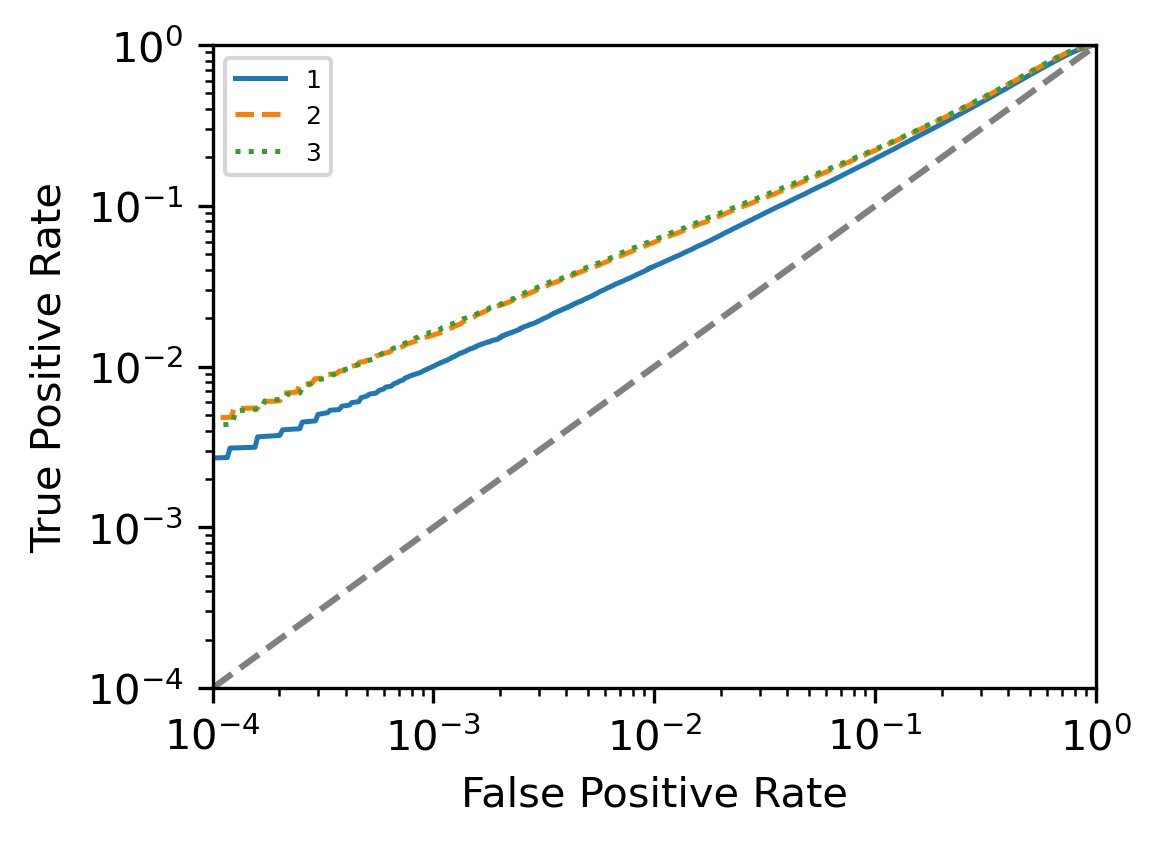}
    \hfill
    \includegraphics[width=.32\linewidth]{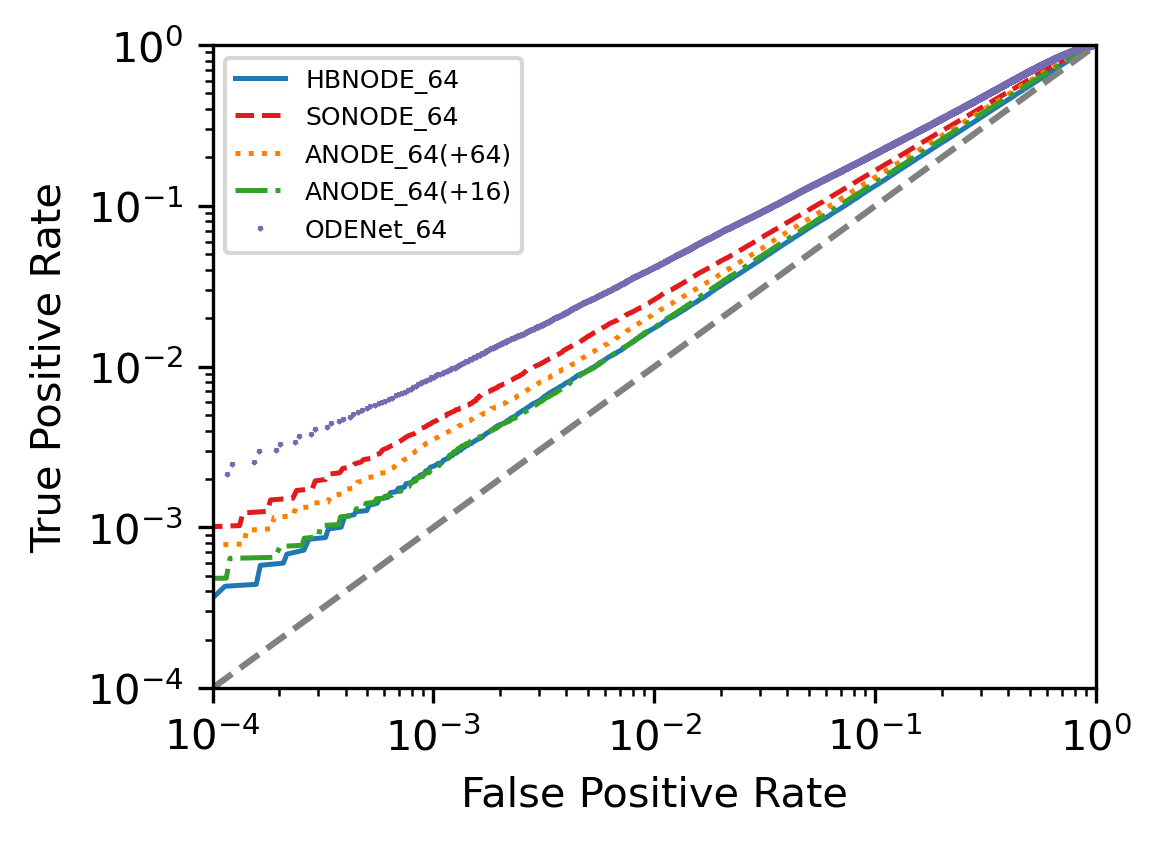}
    \vspace{-0.8em}
    \caption{%
        \textbf{Characterization of membership risks in NODEs with different configurations.}
        We vary the choice of ODE solvers (left), the number of ODE blocks (middle), 
        and the equations used for modeling a system (right). 
        We show the results in CIFAR-10 against LiRA~\cite{lira}, 
        the strongest membership inference attack.
        Please refer to Appendix for details and results in CIFAR-100.
    }
    \label{fig:node-risk-ablation}
\end{figure*}

\topic{Impact of a solver.}
In all our experiments, 
we use Euler's method as the default solver;
but, other popular choices exist in the literature.
We therefore analyze how the choice of solver 
impacts the membership risks of NODEs.
We train and test the same NODEs with two other solvers:
a fixed-step solver, Runge--Kutta (RK4) %
and
an adaptive-step solver, Dormand--Prince (Dopri5).%

The leftmost figure %
illustrates the success rate of LiRA on NODEs 
trained and tested with three different solvers.
Overall, we observe a marginal difference in attack success across these models.
In both CIFAR-10 and CIFAR-100 
(shown in Appendix),
the TPR remains 0.7--1.0\% and 3.0--3.7\% %
@ 0.1\% FPR, respectively.
Upon closer examination,
Euler and RK4 yield similar risks,
while Dopri5 shows 0.3\% lower TPR at the same FPR.
We attribute this to Dopri5's use of adaptive step-sizes.
During the training of a NODE model with Dopri5, 
it stops processing of an input 
when the error rate becomes sufficiently small%
--within the predefined tolerance value.
It acts as an early-stopping mechanism~\cite{songattack}, 
preventing the model from becoming overconfident on specific training samples.
RK4, by modeling data with higher-order polynomials,
may achieve an increased training (and testing) accuracy.
However, the ability to perform complex modeling
can introduce a risk of overfitting to specific training samples,
which can also increase membership inference risks.
In CIFAR-100, a more complex task than CIFAR-10,
we observe 0.3\% and 1.0\% higher TPR at 0.1\% and 1\% FPR, respectively,
compared to the model using Euler.
\smallskip

\topic{Impact of the number of ODE blocks.}
Our NODEs replace a group of residual blocks with an ODE block,
a configuration chosen for its comparable accuracy to ResNet models.
One can also increase the number of ODE blocks further.
We hypothesize that, because additional ODE blocks enhance
the model's ability to represent complex systems,
this may lead to an increase in membership risks.

The middle figure compares the membership risks of NODEs 
as the number of ODE blocks increases from 1 to 3.
We find that adding more ODE blocks 
leads to higher membership risks,
likely due to the increased number of parameters,
which raises the likelihood of \emph{overfitting}.
However, this increase does not yield
significant performance gains,
suggesting that simply adding more ODE blocks 
is not an effective strategy for improving the performance.
\smallskip

\topic{Non-stochastic NODE variants.}
In \S\ref{subsec:neural-odes}, we review NODE variants 
designed to enhance expressivity and stability, 
ultimately improving generalization.
We examine whether these benefits 
come with a reduction in membership risks. 
We evaluate ANODE~\cite{dupont2019augmented}, SONODE~\cite{norcliffe2020second} 
and HBNODE~\cite{xia2021heavy}. 
We select ODENet-64 as our baseline model
because it is composed of a single ODE block,
making it easy to extend to NODE variants.
We implement SONODE and HBNODE using an ODE block 
identical to that in ODENet-64.
We augment ODENet-64 with an additional 16 and 64 dimensions (ANODE).
These variants are expected to enhance the model's generalization capabilities, 
allowing us to assess the impact of better system modeling on membership risks.

The rightmost figure compares the membership risks of 
these models with those of the baseline ODENet-64 model.
We first observe that NODE variants, except for the HBNODE model,
achieves higher test accuracy (84.6--86.5\%) than the baseline (84.4\%).
SONODE performs the best with 86.5\%,
followed by ANODE (+64) at 85.1\% and ANODE (+16) at 84.6\%.
But HBNODE achieves 82.1\%.
Interestingly, we find that 
\emph{all the NODE variants are less vulnerable} to %
LiRA.
SONODE, ANODE (+64), ANODE (+16), and HBNODE
achieve TPRs of 0.44\%, 0.35\%, 0.24\%, and 0.24\% at 0.1\% FPR, respectively,
compared to ODENet-64 with a TPR of 0.86\% at 0.1\% FPR.
We attribute this to the reduced overfitting shown by NODE variants.
The differences between training and testing accuracy are
5.9\%, 4.0\%, 3.0\%, and 2.8\% for
SONODE, ANODE (+64), ANODE (+16), and HBNODE, respectively,
compared to 9.7\% gap in ODENet-64.
However, for HBNODE, there is a possibility that the lower risk 
is attributed to the lack of generalization.
Modeling NODEs with momentum (HBNODE) enhances the stability, 
but does not increase a model's capacity.
In our evaluation, the training accuracy of HBNODE is 84.9\%,
2.8--7\% lower than that of the other variants.

\section{NSDEs Are DP-Learners}
\label{sec:our-defense}

In the previous section, 
we demonstrated that NODEs and their non-stochastic variants 
are still vulnerable to membership inference attacks.
We now shift our focus to the stochastic variants of NODEs, 
particularly neural stochastic differential equations 
(NSDEs)~\cite{liu2020how}.
In NSDEs, which contain ``drift'' and ``diffusion'' terms, 
stochasticity arises from the \emph{randomness in the diffusion term},
typically modeled as white or colored Gaussian noise.
We hypothesize that this diffusion term leads NSDEs to act as a differentially-private (DP) mechanism~\cite{dwork2006differential}, preserving the privacy of its inputs along its evolution.
Here we first formally show that NSDEs are DP-learners.
Because of the diffusion term, 
those models learn ``differentially-private'' systems of ODEs, 
achieving privacy without the need of additional mechanisms,  
such as DP-SGD~\cite{dpsgd}, which %
alters the training process.
Next in \S\ref{sec:eval},
we empirically assess the membership risks of NSDEs
in comparison to non-stochastic NODE variants
and ``private" feedforward models trained with DP-SGD.

\subsection{Theoretical Analysis}
\label{subsec:theoretical-analysis}

NSDEs model the dynamics of system states as:
\begin{align*}
    d~h = ~f(~h,t)\,dt + ~G(~h,t)\,d~B_t,
\end{align*}
in terms of a drift function $~f:\Omega \to \mathbb{R}^n$ 
and a diffusion term $~G:\Omega\to\mathbb{R}^{n\times n}$, where the explicit dependence on parameters $\Theta$ has been dropped for notational convenience.
Our privacy bounds will make use of the 
\textit{sensitivity} of a function $~g:\Omega\to\mathbb{R}^n$ (c.f. \cite{dwork2006differential}):
\begin{align*}
    S_{~g} = \max_{\nn{~h-~h'}\leq 1}\nn{~g(~h)-~g(~h')}.
\end{align*}
Here, we adopt the following natural assumptions
from the work by Liu~\textit{et al.}~\cite{liu2020how},
which guarantees the well-posedness of the SDE formulation under consideration.
\begin{assumption}[Sublinear Growth]
    $~f$ and $~G$ grow at most linearly, i.e., there exists a constant $c>0$ such that $\nn{~f(~h,t)}+\nn{~G(~h,t)} \leq c(1+\nn{~h})$ for all $~h\in\Omega$ and $t\geq 0$. 
\end{assumption}
\begin{assumption}[Lipschitz Continuity]
    $~f$ and $~G$ are $L$-Lipschitz, i.e., there exists a constant $L>0$ 
    such that $\nn{~f(~h,t)-~f(~h',t)} + \nn{~G(~h,t)-~G(~h',t)}\allowbreak 
    \leq L\nn{~h-~h'}$ for all $~h,~h'\in\Omega$ and $t\geq 0$. 
\end{assumption}
With these conditions in place, the following theorem establishes 
the differential privacy of the SDE mechanism.

\begin{theorem}[SDEs Are Differentially-Private]
\label{thm:SDEDP}
    Suppose $\sigma$, $T>0$ and $~f(~h,t):\Omega\to\mathbb{R}^n$ 
    is Lipschitz-continuous with constant $L_{~f}$. 
    Consider the well-posed SDE:
    \[d~h = ~f(~h,t)\,dt + \frac{\sigma}{\sqrt{T}}\,d~B_t.\]
    For any privacy budget $0<\varepsilon<1$ and failure probability $\delta>0$,
    the mechanism $~M(~h) = ~h(T)$, defined by the flow map of the SDE at time $T$,
    is $(\varepsilon, \delta)$-differentially private, provided that
    $\sigma \geq \sqrt{2\ln(1.25/\delta)}(TL_{~f}/\varepsilon)$%
    \footnote{Note that $\varepsilon$ and $\delta$ here
    are different from those used in \S\ref{subsec:mia-defenses}.}.
\end{theorem}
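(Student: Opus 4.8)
The plan is to reduce the statement to the classical Gaussian mechanism bound by identifying the flow map of the SDE with a Gaussian perturbation of a deterministic map whose sensitivity is controlled. First I would fix two datasets $D, D'$ differing in a single record, and let $~h_0, ~h_0'$ be the corresponding initial conditions fed into the SDE block; since a single record changes the input, I would normalize so that $\nn{~h_0 - ~h_0'} \leq 1$ (this is the analogue of gradient clipping / input normalization, and should be stated as part of the setup). Write the solution as $~h(T) = ~h_0 + \int_0^T ~f(~h(s),s)\,ds + \frac{\sigma}{\sqrt T} ~B_T$, and note that $~B_T \sim \mathcal N(0, T ~I)$, so the additive noise term $\frac{\sigma}{\sqrt T}~B_T$ is exactly $\mathcal N(0,\sigma^2 ~I)$. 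Thus $~M(~h_0)$ is the deterministic quantity $~\Phi(~h_0) := ~h_0 + \int_0^T ~f(~h(s),s)\,ds$ plus $\mathcal N(0,\sigma^2 ~I)$ noise, which is precisely the Gaussian mechanism applied to $~\Phi$.

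The next step is to bound the sensitivity $S_{~\Phi} = \sup_{\nn{~h_0 - ~h_0'}\leq 1}\nn{~\Phi(~h_0) - ~\Phi(~h_0')}$. Let $~h(\cdot)$ and $~h'(\cdot)$ be the two solutions with the common diffusion coefficient (a coupling argument: the same Brownian path drives both, so the $~B_T$ terms cancel in the difference). Then $u(t) := \nn{~h(t)-~h'(t)}$ satisfies, via the integral form and the triangle inequality, $u(t) \leq u(0) + L_{~f}\int_0^t u(s)\,ds$, and Grönwall's inequality gives $u(t) \leq u(0)\,e^{L_{~f} t} \leq e^{L_{~f} T}$ for $t \leq T$. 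A slightly sharper route, which I would prefer because it yields the clean linear-in-$T$ bound stated in the theorem rather than an exponential one, is to bound the difference of the \emph{drift integrals} directly: $\nn{~\Phi(~h_0)-~\Phi(~h_0')} \leq u(0) + \int_0^T L_{~f}\,u(s)\,ds \leq 1 + L_{~f}\int_0^T u(s)\,ds$; but since the target sensitivity in the theorem is $T L_{~f}$, I suspect the intended argument treats the flow over $[0,T]$ as a single map whose deviation from the identity has Lipschitz-type growth $T L_{~f}$ — i.e.\ it implicitly uses a one-step or small-Lipschitz regime where the Grönwall exponential is linearized, or it absorbs the $u(0)$ term and higher-order corrections into the regime $0<\varepsilon<1$. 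I would make this explicit: under Assumption~2 with the relevant constant, $S_{~\Phi} \leq T L_{~f}$ (possibly $+1$, to be folded into constants).

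Finally I would invoke the standard Gaussian mechanism guarantee (Dwork--Roth): a mechanism that outputs $~\Phi(\cdot) + \mathcal N(0,\sigma^2~I)$ is $(\varepsilon,\delta)$-DP for $0<\varepsilon<1$ whenever $\sigma \geq \sqrt{2\ln(1.25/\delta)}\, S_{~\Phi}/\varepsilon$. Substituting $S_{~\Phi} \leq T L_{~f}$ yields exactly the stated threshold $\sigma \geq \sqrt{2\ln(1.25/\delta)}(T L_{~f}/\varepsilon)$. Post-processing immunity of DP then ensures that feeding $~h(T)$ into any downstream classifier preserves the guarantee. The main obstacle I anticipate is the sensitivity step: making rigorous the passage from the Grönwall bound $e^{L_{~f}T}$ to the claimed linear bound $T L_{~f}$ — this requires either a small-$T$ (or small-Lipschitz) regime, a per-step accounting consistent with the Euler--Maruyama discretization, or a careful redefinition of which map the sensitivity is measured against. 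I would need to pin down exactly which of these the authors intend, as the constant in the noise threshold depends on it; everything else (the Gaussian identification of $\frac{\sigma}{\sqrt T}~B_T$, the coupling, the appeal to the Gaussian mechanism, post-processing) is routine.
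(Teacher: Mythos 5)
Your proposal follows the same route as the paper's proof: write the solution as a deterministic map plus the additive noise $\frac{\sigma}{\sqrt{T}}~B_T\sim\mathcal{N}\lr{~0,\sigma^2~I}$, bound the sensitivity of the deterministic part, and invoke the Gaussian mechanism (the paper's Proposition~\ref{prop:GaussianDP}, i.e., Dwork--Roth Theorem~A.1), with post-processing handling anything downstream. The one place you diverge---and where you were right to hesitate---is the sensitivity step. The paper does \emph{not} perform a Gr\"onwall analysis: it writes $~h(T)=~F(~h)+~B_T$ with $~F(~h)=\int_0^T ~f(~h,t)\,dt$ and bounds
\[
S_{~F}\;\leq\;\max_{\nn{~h-~h'}\leq 1}\int_0^T L_{~f}\nn{~h-~h'}\,dt\;\leq\;TL_{~f},
\]
which is only valid under the ``frozen-argument'' reading you suspected: $~h$ inside the integrand is treated as the fixed input to the mechanism, not the evolving trajectory $~h(t)$. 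Under that reading $~F$ is a genuine function of $~h$ with Lipschitz constant at most $TL_{~f}$ and the bound is immediate; under the trajectory reading the correct estimate is the Gr\"onwall one, $\nn{~h(t)-~h'(t)}\leq\nn{~h(0)-~h'(0)}\,e^{L_{~f}t}$, which yields an exponential rather than linear sensitivity, exactly as you computed. You also correctly note that the initial condition contributes an additive $\nn{~h_0-~h_0'}\leq 1$ to the sensitivity; the paper drops this by omitting $~h(0)$ from its formal solution. So your attempt is not missing any idea that the paper supplies: the obstacle you flagged is resolved in the paper only by the frozen-argument interpretation (consistent with its one-step Euler--Maruyama remark), and your proposed fixes are precisely what would be needed to justify the stated linear-in-$T$ constant for the true flow map.
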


\begin{proof}
    Recall that the Wiener increment $d~B_t$ satisfies 
    $\int_s^t d~B_\tau = ~B_t-~B_s\sim\mathcal{N}(~0,(t-s)~I)$. 
    It follows that the solution to the hypothesized SDE can be formally expressed as:
    \[~h(T) = \int_0^T~f(~h,t)\,dt + \frac{\sigma}{\sqrt{T}}\int_0^T d~B_t := ~F(~h) + ~B_T,\]
    where $~B_T\sim\mathcal{N}\lr{~0, \sigma^2~I}$ represents additive Gaussian noise
    and $~F$ denotes deterministic part of the SDE evolution.
    Proposition~\ref{prop:GaussianDP} in Appendix shows that 
    this mechanism obeys $(\varepsilon,\delta)$-differential privacy 
    whenever $\varepsilon<1$ and $\sigma\geq\sqrt{2\ln(1.25/\delta)}(S_{~F}/\varepsilon)$.
    The claimed bound on the variance $\sigma$ follows 
    since the sensitivity $S_{~F}$ is bounded above due to Lipschitz continuity in $~f$:
    \begin{align*}
        S_{~F} &= \max_{\nn{~h-~h'}\leq 1}\nn{~F(~h)-~F(~h')} \\
        &\leq \max_{\nn{~h-~h'}\leq 1}\int_0^T\nn{~f(~h,t)-~f(~h',t)}dt \\ 
        &\leq \max_{\nn{~h-~h'}\leq 1}\int_0^T L_{~f}\nn{~h-~h'}dt \leq TL_{~f}. \qedhere
    \end{align*}
\end{proof}

\begin{remark}
    Importantly, Theorem~\ref{thm:SDEDP}, while stated continuous time,
    also holds in discrete time---for instance, with the Euler--Maruyama integration method%
    ---as long as the function $~F(~h) = \int_0^T ~f(~h,t)\,dt + ~\epsilon(~h)$ appropriately accounts for the quadrature error introduced by this discretization.
    Assuming Lipschitz continuity in the error term $~\epsilon$, 
    the claimed variance bound holds with $TL_{~f}$ replaced by $TL_{~f} + L_{~\varepsilon}\!\geq\!S_{~F}$.
    While this quadrature error 
    may affect solution accuracy, it does not affect the privacy guarantee, 
    since each Euler sub-step adds a statistically independent fraction of 
    the total required Gaussian noise, and therefore the total amount of noise added is equivalent even if the exact and approximate Gaussian contributions differ.
    Notably, only the approximate output $~h(T)$ at the final time is private and appropriate for release in this case.
\end{remark}

Theorem~\ref{thm:SDEDP} provides an additive noise threshold which guarantees differential privacy in the NSDE forward pass.  Although Theorem~\ref{thm:SDEDP} is stated for a constant diffusion term 
$~G(~h,t) = (\sigma/\sqrt{T})~I$,
it continues to hold for a general Lipschitz-continuous,
at most linear $~G(~h,t)$, provided the Gaussian random variable $\int_0^T ~G(~h,t)\,dt$ satisfies the stated variance bound for each $~h$.
Moreover, since computing gradients of the NSDE model 
with respect to its parameters accesses the private input data 
only through the SDE output $~h(T)$, 
the following %
Corollary %
demonstrating that differential privacy extends to the entire NSDE training process.

\begin{corollary}[NSDE Training is Differentially Private]\label{cor:DPSDEtrain}
    Consider target privacy parameters $0\!<\!\varepsilon'\!<\!1$ and $\delta'\!>\!0$ 
    with a maximum iteration count of $K$.
    Then, NSDE training of the drift $~f$ in Theorem~\ref{thm:SDEDP} 
    will be $(\varepsilon', K\delta+\delta')$-differentially private 
    as long as the variance satisfies $\sigma\!\geq\! 4\sqrt{K\ln(1.25/\delta)\ln(1/\delta')}(TL/\varepsilon')$.
    Here, $\delta\!>\!0$ is arbitrary %
    and $L = \max_{~f} L_{~f}$ denotes the maximum Lipschitz constant 
    attainable by $~f$ during training.
\end{corollary}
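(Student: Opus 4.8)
The plan is to view NSDE training as a $K$-fold adaptive composition of the single-pass mechanism $~M$ from Theorem~\ref{thm:SDEDP}, and then apply the advanced composition theorem for differential privacy.

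First I would argue that each training iteration is, on its own, an $(\varepsilon_0,\delta)$-DP mechanism for a suitable per-step budget $\varepsilon_0$. At iteration $k$, the private dataset $D$ enters the computation only through the forward pass: with the current parameters held fixed, the loss and hence the gradient step are a deterministic function of the SDE output $~h(T)$ produced from a fresh, independent Brownian path, while the intermediate states $~h(t_1),\dots,~h(t_{n-1})$ are never released. By Theorem~\ref{thm:SDEDP}, the map $D\mapsto~h(T)$ is $(\varepsilon_0,\delta)$-DP with $\varepsilon_0 = \sqrt{2\ln(1.25/\delta)}\,TL_{~f}^{(k)}/\sigma$, where $L_{~f}^{(k)}$ is the Lipschitz constant of the drift at the current parameters; replacing $L_{~f}^{(k)}$ by the uniform bound $L = \max_{~f}L_{~f}$ over the optimization trajectory gives $\varepsilon_0 \le \sqrt{2\ln(1.25/\delta)}\,TL/\sigma$ for every $k$. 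Immunity to post-processing then upgrades this to the full iteration map (sending $D$ to the updated parameters) being $(\varepsilon_0,\delta)$-DP as well.

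Next I would invoke the advanced composition theorem~\cite{dwork2006differential}: the adaptive composition of $K$ mechanisms, each $(\varepsilon_0,\delta)$-DP, is $(\varepsilon',K\delta+\delta')$-DP for any $\delta'>0$, where $\varepsilon' = \sqrt{2K\ln(1/\delta')}\,\varepsilon_0 + K\varepsilon_0(e^{\varepsilon_0}-1)$. The $K\delta$ term is the union bound over the per-step failure events, and the extra $\delta'$ is the slack paid for the $\sqrt{K}$-type (rather than linear) growth of the budget. Substituting the hypothesized noise level $\sigma \ge 4\sqrt{K\ln(1.25/\delta)\ln(1/\delta')}\,(TL/\varepsilon')$ yields $\varepsilon_0 \le \varepsilon'/(2\sqrt{2K\ln(1/\delta')})$, so the linear-in-$\varepsilon_0$ term is at most $\varepsilon'/2$. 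This already forces $\varepsilon_0 \le 1$ (indeed $\varepsilon_0 < \varepsilon' < 1$, which also verifies the per-step hypothesis $\varepsilon<1$ of Theorem~\ref{thm:SDEDP}), so by convexity of $e^x-1$ on $[0,1]$ we get $e^{\varepsilon_0}-1 \le (e-1)\varepsilon_0 \le 2\varepsilon_0$, whence $K\varepsilon_0(e^{\varepsilon_0}-1) \le 2K\varepsilon_0^2 \le \varepsilon'^2/(4\ln(1/\delta')) \le \varepsilon'/2$ for the admissible range of $\delta'$ (e.g.\ $\delta'\le e^{-1/2}$). Adding the two halves gives a total budget of at most $\varepsilon'$, so NSDE training is $(\varepsilon',K\delta+\delta')$-DP; the factor $4$ in the $\sigma$ bound is exactly what splits the composition error evenly between the two terms.

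The step I expect to be the main obstacle is the first one: rigorously establishing that each iteration touches the data only through the DP output $~h(T)$. This requires (i) a fresh, statistically independent Brownian increment at each iteration, so that the $K$ mechanisms genuinely form an adaptive composition rather than sharing randomness; (ii) that only the terminal state $~h(T)$, and not the intermediate time-steps, feeds the loss and hence the gradient, so post-processing applies cleanly; and (iii) a uniform bound $L$ on the drift's Lipschitz constant along the entire training run, which simultaneously keeps every per-iteration SDE well-posed under the sublinear-growth and Lipschitz assumptions. Once these are granted, the remainder is bookkeeping with the advanced composition inequality, and I would relegate the precise constants (the exact admissible range of $\delta'$, the elementary bound $e^{\varepsilon_0}-1\le 2\varepsilon_0$) to an appendix.
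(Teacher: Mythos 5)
Your proposal is correct and follows essentially the same route as the paper: each iteration is $(\varepsilon,\delta)$-DP via Theorem~\ref{thm:SDEDP} and post-processing, and the $K$-fold guarantee with the factor $4$ in the noise bound comes from strong/advanced composition (you simply unpack Dwork--Roth Corollary~3.21, which the paper cites directly, by re-verifying that $\varepsilon_0=\varepsilon'/(2\sqrt{2K\ln(1/\delta')})$ makes both terms of the composition bound at most $\varepsilon'/2$). The only detail the paper adds that you omit is the parallel-composition remark that disjoint mini-batching within an iteration does not increase the leakage, which is minor.
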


\begin{proof}
    Note that, by the parallel composition principle,
    mini-batching into disjoint subsets at each iteration 
    does not increase the total privacy leakage, 
    as each individual's data in the state $~h$
    is accessed only once by the SDE mechanism.
    Similarly, the composition of the NSDE output 
    with a downstream classifier does not degrade 
    $(\varepsilon,\delta)$-differential privacy guarantees, 
    as these guarantees are immune to post-processing by construction.
    Consequently,
    the total privacy leakage in NSDE training is directly proportional to 
    the number of training iterations $K$, 
    corresponding to the number of accesses to the private input $~h$.
    The result follows from the Strong Composition 
    Theorem in Dwork~\textit{et al.}~\cite{dwork2014algorithmic} 
    [Theorem 3.20, Corollary 3.21], %
    in combination with the variance bound established in Theorem~\ref{thm:SDEDP}.
\end{proof}

\begin{remark}
    Note that determining the Lipschitz constant $L$ in Corollary~\ref{cor:DPSDEtrain} is most easily accomplished by restricting the NSDE architecture to guarantee a particular Lipschitz bound, see, e.g.,~\cite{wang2023direct}. Moreover, it should be noted that the privacy guarantees presented here are not necessarily tight in every case 
    and may be improved with more sophisticated privacy accounting mechanisms, 
    such as those in~\cite{dpsgd, dong2022gaussian}.
\end{remark}

Corollary~\ref{cor:DPSDEtrain} guarantees the privacy of NSDE training provided that the diffusion term is constructed to add enough stochastic noise along NSDE solution trajectories.  This has the remarkable advantage of privatizing learning without requiring direct intervention in the training process: the network dynamics are already private, therefore no further privacy-preserving modifications are required during the backward pass. 
In contrast, DP-SGD must use inexact gradient information 
in order to retain privacy, which can potentially hinder the training process.

\begin{table*}[t]
\centering
\caption{%
    \textbf{Contrasting membership inference risks of NSDEs and NODEs.}
    We evaluate both models with six %
    attacks.
}
\label{tbl:main-sde}
\adjustbox{max width=\linewidth}{
\begin{tabular}{@{}ll | cccc | cccc | cccc | cccc@{}}
\toprule
 &  %
 & \multicolumn{4}{c|}{\textbf{TPR @ 0.1\% FPR}} & \multicolumn{4}{c|}{\textbf{TPR @ 1\% FPR}} & \multicolumn{4}{c|}{\textbf{AUC}} & \multicolumn{4}{c}{\textbf{Inference acc.}} \\ \midrule
\begin{tabular}{l}\textbf{Model}\end{tabular} & 
    \textbf{Method} &
    \textbf{F-M} & \textbf{C-10} & \textbf{C-100} & \textbf{T-I} & 
    \textbf{F-M} & \textbf{C-10} & \textbf{C-100} & \textbf{T-I} & 
    \textbf{F-M} & \textbf{C-10} & \textbf{C-100} & \textbf{T-I} & 
    \textbf{F-M} & \textbf{C-10} & \textbf{C-100} & \textbf{T-I} \\ \midrule \midrule
\multirow{6}{*}{\begin{tabular}{l}NODE\end{tabular}} & 
    Yeom et al.~\cite{yeom} & 
    0.00\% & 0.00\% & 0.01\% & 0.09\% &
    0.06\% & 0.00\% & 0.71\% & 1.12\% &
    0.518 & 0.558 & 0.701 & 0.645 & 
    52.55\% & 56.59\% & 70.75\% & 61.23\% \\
    & Shokri et al.~\cite{shokri} & 
    0.09\% & 0.04\% & 0.04\% & 0.10\% &
    0.90\% & 0.43\% & 0.37\% & 1.01\% &
    0.493 & 0.448 & 0.398 & 0.502 & 
    50.15\% & 50.01\% & 50.00\% & 50.21\% \\
    & Song and Mittal~\cite{song2021scorebased} & 
    0.00\% & 0.00\% & 0.15\% & 0.12\% &
    0.53\% & 0.00\% & 1.66\% & 1.19\% &
    0.507 & 0.529 & 0.498 & 0.500 &
    50.98\% & 54.22\% & 52.48\% & 50.45\% \\
    & Watson et al.~\cite{watson2022on} & 
    0.00\% & 0.00\% & 0.01\% & 0.12\% &
    0.04\% & 0.00\% & 0.71\% & 1.17\% &
    0.517 & 0.558 & 0.701 & 0.542 & 
    52.04\% & 56.69\% & 70.75\% & 53.24\% \\
    & Carlini et al.~\cite{lira} & 
    0.19\% & 1.01\% & 3.30\% & 0.24\% &
    1.48\% & 4.22\% & 12.11\% & 1.74\% & 
    0.529 & 0.616 & 0.782 & 0.543 & 
    52.81\% & 57.65\% & 72.88\% & 52.95\% \\ 
    & Zarifzadeh et al.~\cite{zarifzadeh2023low} & 
    0.67\% & 1.93\% & 5.41\% & 0.26\% &
    3.16\% & 6.57\% & 17.58\% & 2.04\% &
    0.523 & 0.592 & 0.797 & 0.570 &
    53.05\% & 58.17\% & 73.37\% & 55.04\% \\ 
    \midrule
\multirow{6}{*}{\begin{tabular}{l}SDENet ($\sigma\!=\!2$)\end{tabular}} & 
    Yeom et al.~\cite{yeom} & 
    0.00\% & 0.02\% & 0.05\% & 0.11\% & 
    0.06\% & 0.92\% & 1.09\% & 1.13\% & 
    0.526 & 0.525 & 0.588 & 0.557 &
    54.17\%& 52.44\% & 56.96\% & 54.72\% \\
    & Shokri et al.~\cite{shokri} & 
    0.09\% & 0.09\% & 0.08\% & 0.11\% & 
    0.94\% & 0.86\% & 0.85\% & 1.03\% & 
    0.498 & 0.484 & 0.473 & 0.502 & 
    50.14\% & 50.04\% & 50.00\% & 50.40\% \\
    & Song and Mittal~\cite{song2021scorebased} & 
    0.00\% & 0.00\% & 0.12\% & 0.12\% & 
    0.18\% & 0.20\% & 1.34\% & 1.05\% & 
    0.497 & 0.509 & 0.498 & 0.499 & 
    50.12\% & 50.95\% & 50.84\% & 50.12\% \\
    & Watson et al.~\cite{watson2022on} & 
    0.01\% & 0.02\% & 0.05\% & 0.11\% &
    0.18\% & 0.91\% & 1.09\% & 1.12\% &
    0.515 & 0.525 & 0.588 & 0.519 &
    51.02\% & 52.44\% & 52.44\% & 51.91\% \\
    & Carlini et al.~\cite{lira} & 
    0.10\% & 0.14\% & 0.43\% & 0.12\% & 
    0.97\% & 1.24\% & 3.00\% & 1.14\% &  
    0.500 & 0.625 & 0.612 & 0.509 & 
    50.01\% & 51.89\% & 57.64\% & 50.77\% \\ 
    & Zarifzadeh et al.~\cite{zarifzadeh2023low} & 
    0.27\% & 0.38\% & 0.83\% & 0.16\% & 
    1.82\% & 2.39\% & 4.60\% & 1.45\% & 
    0.527 & 0.554 & 0.646 & 0.535 & 
    51.12\% & 54.11\% & 59.25\% & 52.67\% \\
    \bottomrule
\end{tabular}
}
\end{table*}

\section{Empirical Evaluation}
\label{sec:eval}

We now empirically evaluate the effectiveness of NSDEs, 
as a DP-learner, in mitigating membership inference attacks.
We run our evaluation on the CIFAR-10 and CIFAR-100 datasets,
where membership inference attacks are shown to be most effective.
Our NSDE models (\textbf{SDENet}) adopt architectural configurations 
similar to ODENet, comprising three groups of ODEBlocks
with 16, 32, and 64 dimensional filters, respectively, 
widened by a factor of 2.
In SDENet, stochasticity is introduced by employing an SDE solver within the ODEBlocks, 
utilizing a stochastic fourth-order Runge-Kutta (RK4) integration method.
SDENet also incorporates a hyper-parameter ($\sigma$)
to control noise intensity,
influenced by factors such as time, step size, and the desired level of stochasticity.

\subsection{Effectiveness of NSDEs}
\label{subsec:effectiveness}

We conduct three comparisons:
(1) NSDEs with ResNet-14 to assess
whether stochasticity reduces the empirical membership risks.
(2) NSDEs with heuristic defenses discussed in \S\ref{sec:prelim},
which lack provable privacy guarantees.
(3) NSDEs with ``private" ResNet-14 trained with DP-SGD.
In (2), we consider four defenses studied in prior work%
~\cite{DBLP:journals/corr/abs-2006-05336, 10.1145/3319535.3363201, 
       10.1145/3422337.3447836, 10.1145/3580305.3599561}:
$\ell^1$/$\ell^2$-regularization, MMD-Mixup, and MemGuard.
For $\ell^1$/$\ell^2$-regularization, 
we set the penalty parameter \(\lambda\) to \(10^{-5}\).
We use the most robust version of MemGuard 
as described in Choquette~\textit{et al.}~\cite{choquette2021label},
which allows for arbitrary adjustments to the confidence vector 
while preserving the model's predicted label.
For MMD-Mixup, we adopt the implementation 
from the study by Li~\textit{et al.}~\cite{10.1145/3422337.3447836}, 
and set the penalty parameter \(\lambda\) to 5$\times10^{-4}$. 
In (3), when training with DP-SGD,
we set the privacy budget $\epsilon$ to 8,
a common choice for CIFAR-10 and 100 in prior work~\cite{10.1145/3580305.3599561}.
We use the popular Opacus library%
\footnote{Opacus: \url{https://github.com/pytorch/opacus/}}
to implement privacy accounting in NSDE training 
and to compute the total privacy leakage ($\varepsilon$).
\smallskip

\topic{(1) Mitigating membership inference attacks.}
We first compare the membership inference risks of 
NSDEs and NODEs against six existing attacks. 
Table~\ref{tbl:main-sde} shows our results.
Across four evaluation metrics,
NSDE reduce attack success by 4--10$\times$ compared to NODEs.
Particularly against the LiRA attack, 
NSDEs achieve a TPR of 0.14--1.24\% at both 0.1\% and 1\% FPRs,
while NODEs exhibit a higher TPR of 1.0--4.22\% 
under the same conditions (FPRs).
\smallskip

\begin{figure}[ht]
    \vspace{-0.8em}
    \centering
    \includegraphics[width=0.9\linewidth]{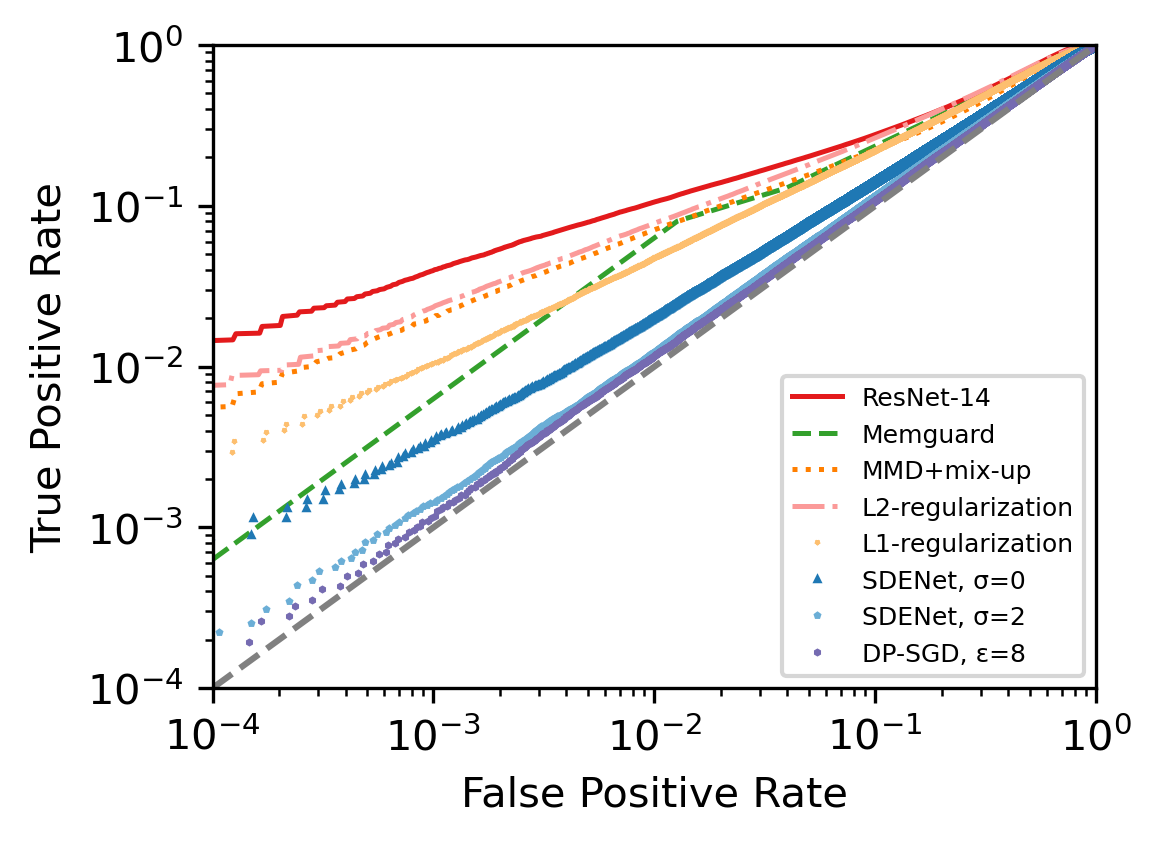}
    \vspace{-1.0em}
    \caption{%
        \textbf{Comparing the effectiveness of different defenses,}
        in CIFAR-10.
        We use LiRA to assess their membership risks. 
        All defenses, except for SDENets, are applied to ResNet-14.
    }
    \label{fig:nsde-vs-others}
\end{figure}

\topic{(2) Comparing with heuristic defenses.}
Figure~\ref{fig:nsde-vs-others} summarizes our results against LiRA,
the membership inference attack
where we observe the most significant reduction in attack success.
Overall, we first demonstrate that 
\emph{NSDEs effectively mitigate membership inference risks}.
Compared to ResNet-14, a feedforward model 
that yields 3.96--10.57\% %
TPR at both 0.1\% and 1\% FPR,
NSDEs with $\sigma\!=\!2$ achieve approximately 
two orders of magnitude lower TPRs (0.14--1.24\%). 
We also find that NSDEs significantly outperform 
heuristic defenses in reducing membership inference risks.
Compared to the four empirical defenses%
---$\ell^1$/$\ell^2$-regularization, MMD-Mixup and MemGuard%
---NSDEs with $\sigma\!=\!2$ achieves 2--8$\times$ lower TPRs at 0.1--1\% FPRs.
\smallskip

\topic{(3) Comparison with private models constructed via DP-SGD.}
NSDEs match the effectiveness of
ResNet-14 trained with DP-SGD at $\varepsilon\!=\!8$.
It is interesting to observe that
even the non-private variant of NSDEs (i.e., $\sigma\!=\!0$)
achieve lower TPRs than these defended models.
NSDEs with $\sigma\!=\!0$ achieve a test accuracy of 83.9\%, 
which, though slightly lower than ResNet-14's 86.0\%,
significantly narrows the generalization gap from 12.2\% to 5.2\%.
This narrowing leads to a markedly lower TPR at 0.1\% FPR: 
0.4\% for NSDEs versus 4.0\% for ResNet-14.
NSDEs with $\sigma\!=\!2$ attain a test accuracy of 81.9\% 
and nearly eliminate the generalization gap (-0.02\%),
further reducing the TPR to 0.1\% at 0.1\% FPR.
Compared to $\ell^1$ and $\ell^2$-regularization, 
NSDEs with $\sigma\!=\!2$ not only provide a modest reduction in test accuracy by 4.6\%--4.7\%, 
but also enhance defenses, achieving a TPR at 0.1\% FPR of 0.14\% 
versus 1.1\% for $\ell^1$ and 2.4\% for $\ell^2$).
While DP-SGD with $\varepsilon\!=\!8$ realizes a smallest generalization gap,
it significant compromises test accuracy,
dropping 7.5--9.5\% compared to NSDEs ($\sigma\!=\!0$ and $\sigma\!=\!2$).
Consequently, NSDEs demonstrates an improved overall performance 
in terms of both model training (higher test accuracy) 
and defense against membership inference attacks, 
making it a more balanced and practical approach.
Please refer to Appendix for further details.

\begin{figure*}[t]
    \centering
    \includegraphics[width=.242\linewidth]{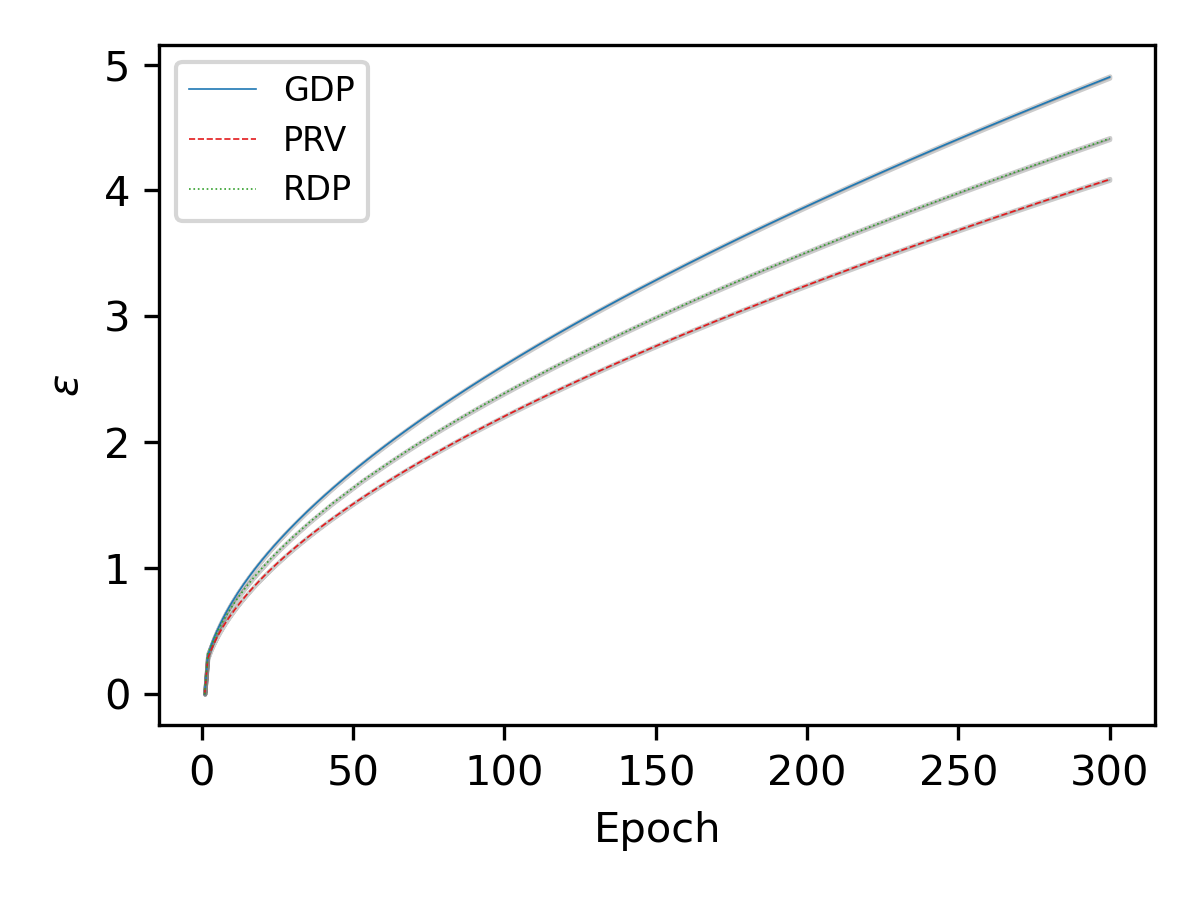}
    \hfill
    \includegraphics[width=.242\linewidth]{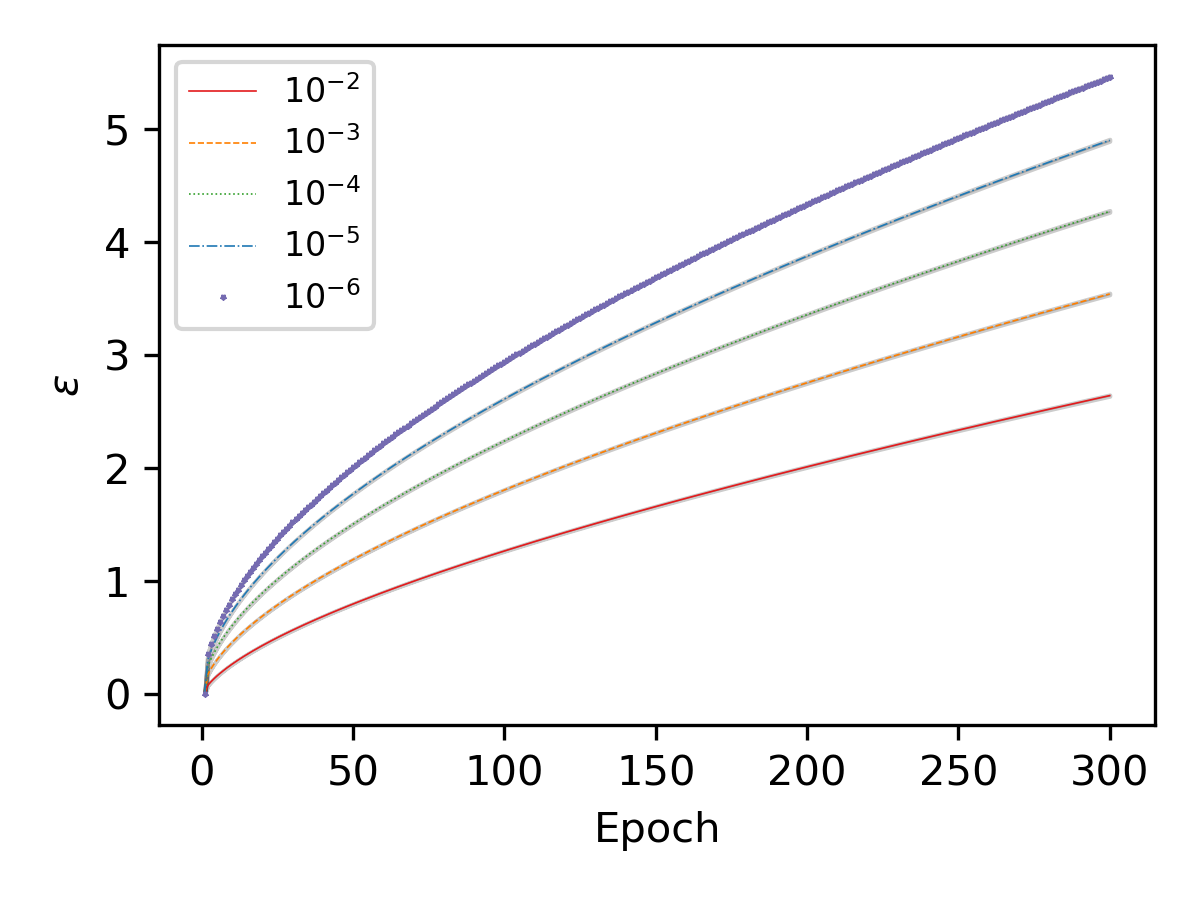}
    \hfill
    \includegraphics[width=.242\linewidth]{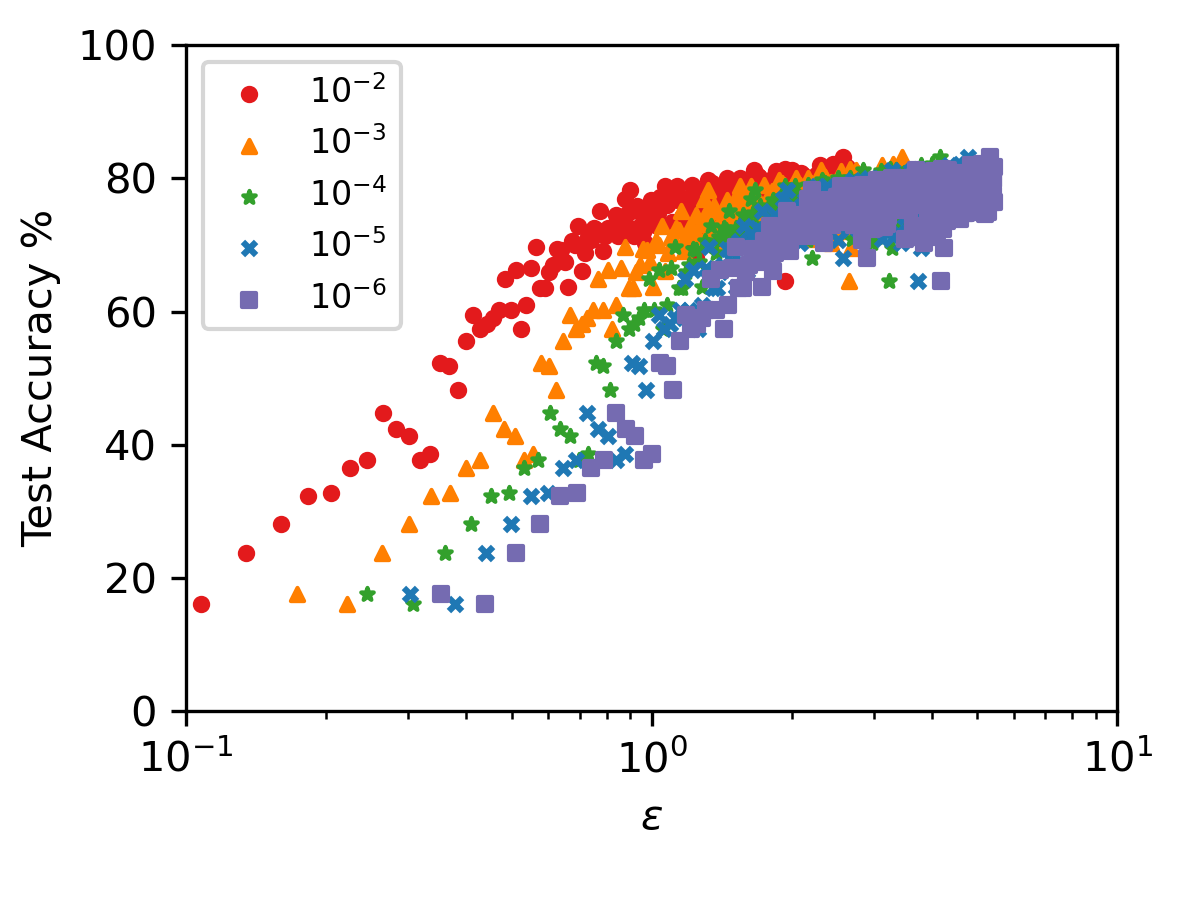}
    \hfill
    \includegraphics[width=.242\linewidth]{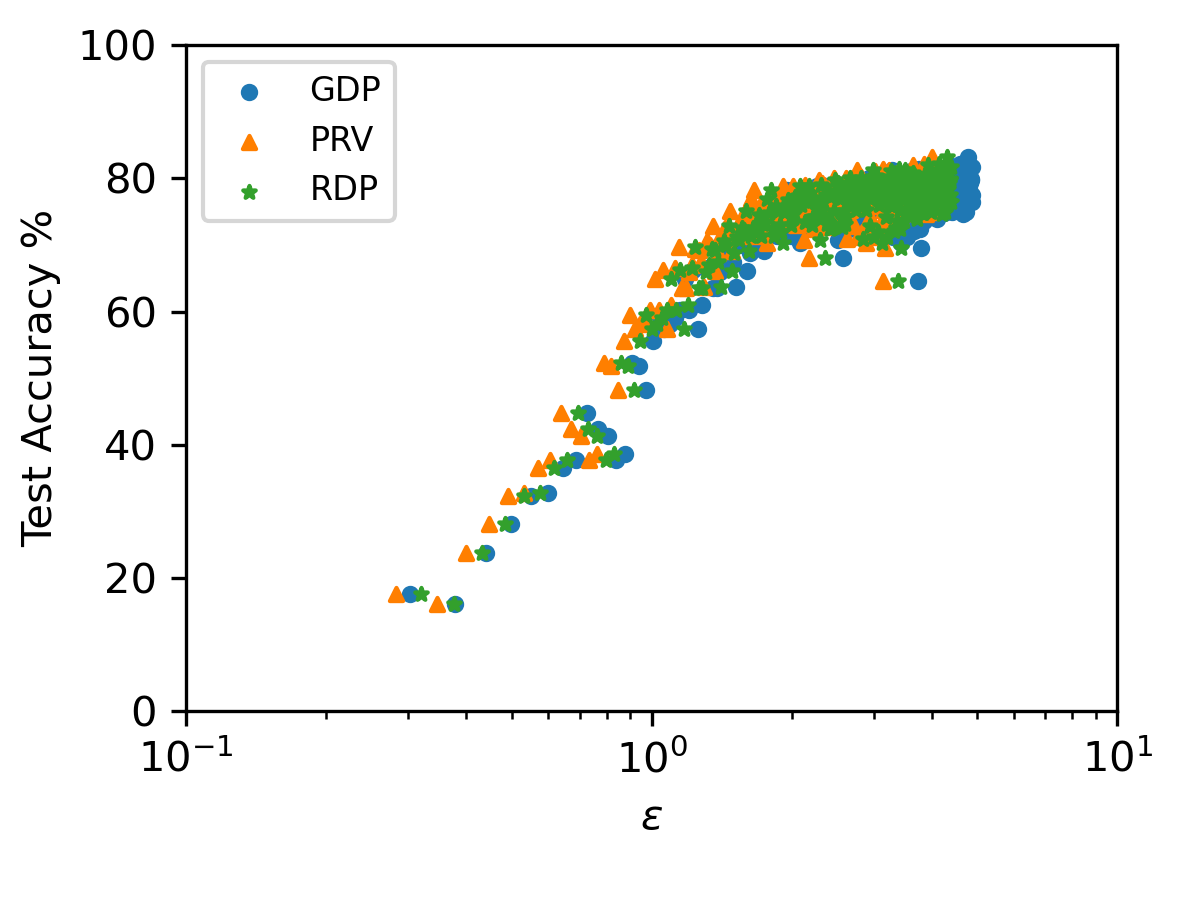}
    \vspace{-1.em}
    \caption{%
        \textbf{Impact of privacy hyper-parameters on NSDE training} in CIFAR-10.
        We analyze the total privacy leakage ($\varepsilon$) in the left two plots
        and the privacy-utility tradeoff in the right two plots,
        while varying key hyper-parameters.
    }
    \label{fig:nsde-privacy-analysis}
\end{figure*}

Moreover, to understand where the resilience of NSDEs against these attacks comes,
we compare the generalization gap in Table~\ref{tbl:main-sde-overfit}.
In NODEs, the gap is 9.9--29.89\%
for CIFAR-10 and CIFAR-100,
while in NSDEs, it is 0.0--5.9\%.
This result implies that 
\emph{the reduction in membership risks in NSDEs
is partly due to a significant reduction in overfitting}.

\begin{table}[ht]
\centering
\caption{%
    \textbf{Generalization gap},
    measured for both NODE models and NSDE models
    across four benchmark datasets.
}
\label{tbl:main-sde-overfit}
\vspace{-0.6em}
\adjustbox{max width=\linewidth}{
\begin{tabular}{@{}r|rr|rr|rr|rr@{}}
    \toprule
    \textbf{Task} & \multicolumn{2}{c|}{\textbf{F-M}} & \multicolumn{2}{c|}{\textbf{C-10}} & \multicolumn{2}{c|}{\textbf{C-100}} & \multicolumn{2}{c}{\textbf{T-I}} \\ \midrule
    \textbf{Model} & \textbf{NODE} & \textbf{NSDE} & \textbf{NODE} & \textbf{NSDE} & \textbf{NODE} & \textbf{NSDE} & \textbf{NODE} & \textbf{NSDE}\\ \midrule \midrule
    \textbf{Train} & 95.5\% & 96.2\% & 94.3\% & 81.9\% & 82.1\% & 56.5\% & 55.9\% & 39.9\%\\
    \textbf{Test} & 90.0\% & 89.2\% & 84.4\% & 81.9\% & 52.2\% & 50.6\% & 40.5\% & 38.1\%\\ \midrule
    \textbf{$\Delta$}& 5.5\% & 7.0\% & 9.9\% & 0.0\% & 29.9\% & 5.9\% & 15.4\% & 1.8\%\\ \bottomrule
\end{tabular}
}
\end{table}

\subsection{In-depth Analysis of Privacy of NSDEs}
\label{subsec:nsde-privacy-analysis}

We next analyze the privacy guarantee ($\varepsilon$) provided by NSDEs.
Because we theoretically show that NSDEs are DP-learners
(with the privacy guarantee equivalent to those of DP-SGD),
it is important to understand empirically 
whether their privacy interactions behave similarly to
observations from prior work~\cite{236254}.
We first compare the privacy leakage of NSDEs 
with the worst-case leakage bound established by DP-SGD.
We also examine the impact of privacy hyper-parameters---%
such as $\delta$ and the choice of privacy accounting algorithms---%
on the guarantee $\varepsilon$ and the model utility.
\smallskip

\begin{figure}[h]
\centering
\includegraphics[width=0.9\linewidth]{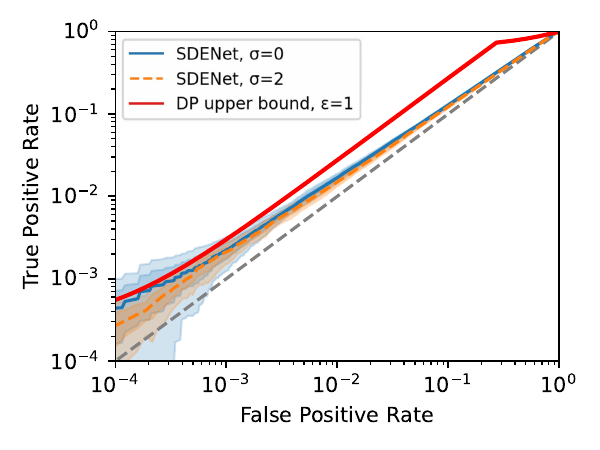}
\vspace{-1.0em}
\caption{%
    \textbf{Comparing the empirical membership risks of NSDEs 
    with the theoretical bound by DP-SGD.}
    We use LiRA to measure the membership risks 
    and the theoretical bound when $\varepsilon\!=\!1$
    is computed by the method presented in~\cite{lira}.
    NSDE with $\sigma\!=\!2$ is used to compare,
    and that with $\sigma\!=\!0$ is its non-private version.
    The shaded areas give the 68\% (darker) and 95\% (lighter) confidence intervals.}
\label{fig:compare-with-the-bound}
\vspace{-1.0em}
\end{figure}

\topic{Comparison to the theoretical DP-SGD bound.}
Figure~\ref{fig:compare-with-the-bound} compares 
the empirical success of LiRA against NSDEs with $\sigma\!=\!2$
to the theoretical upper-bound of DP-SGD (when $\varepsilon\!=\!1$).
Note that training NSDEs with $\sigma\!=\!2$ 
while targeting $\varepsilon\!=\!1$ 
completely destroys the model utility,
making any meaningful LiRA success unattainable.
To maintain utility, 
we target $\varepsilon\!=\!8$ 
in our experiments with NSDEs. %
This does not compromise the generality of our results,
because even against NSDEs with weaker privacy guarantees,
the empirical attack (LiRA) does not exceed the DP-SGD bound.
We first observe that the attack success of LiRA on NSDEs with $\sigma\!=\!2$
remains within the worst-case leakage bound established by DP-SGD.
However, when the diffusion mechanism is removed ($\sigma\!=\!0$),
LiRA's attack success often goes beyond the DP-SGD bound,
especially at lower FPR regions (e.g., at 0.1--1\% FPR),
highlighting the absence of formal privacy guarantees
in non-stochastic variants.
\smallskip

\topic{Privacy leakage $\varepsilon$ over training.}
The left two plots in Figure~\ref{fig:nsde-privacy-analysis}
illustrate the total privacy leakage ($\varepsilon$) during NSDE training.
As expected, both plots show that privacy leakage 
increases with training epochs.
However, the growth rate decreases over time---%
consistent with empirical findings from prior work
on training models with DP-SGD~\cite{dpsgd}.

The leftmost plot compares privacy leakage estimates
produced by three different accounting methods: 
Gaussian Differential Privacy (GDP), 
Renyi Differential Privacy (RDP), 
and Privacy loss Random Variables (PRVs). 
Among them, GDP consistently yield 
higher estimates of privacy leakage 
compared to the other two methods.
This is partly because GDP relies on Gaussian approximations,
which often deviate from the true privacy loss distribution.
In contrast, precise accounting methods like RDP or PRVs,
which enable tracking the exact or near-exact cumulative privacy loss,
provide tighter privacy bounds.
Moreover, because privacy loss accumulates over time, 
the discrepancy between GDP and the more precise methods 
progressively widens as the number of training epochs increases.

The second plot from the left 
shows the total privacy leakage computed 
using the GDP accountant across varying $\delta$ values,
from \(1 \times 10^{-2}\) to \(1 \times 10^{-6}\).
As $\delta$ increases, 
so does the total privacy leakage ($\varepsilon$), 
consistent with theoretical expectations.
This is because $\delta$ represents the probability of 
rare but catastrophic privacy failures 
(e.g., data leakage from physical breaches).
Allowing for a higher $\delta$ weakens the privacy guarantee
by tolerating a greater failure risk.
\smallskip

\begin{figure*}[t]
    \centering
    \includegraphics[width=.32\linewidth]{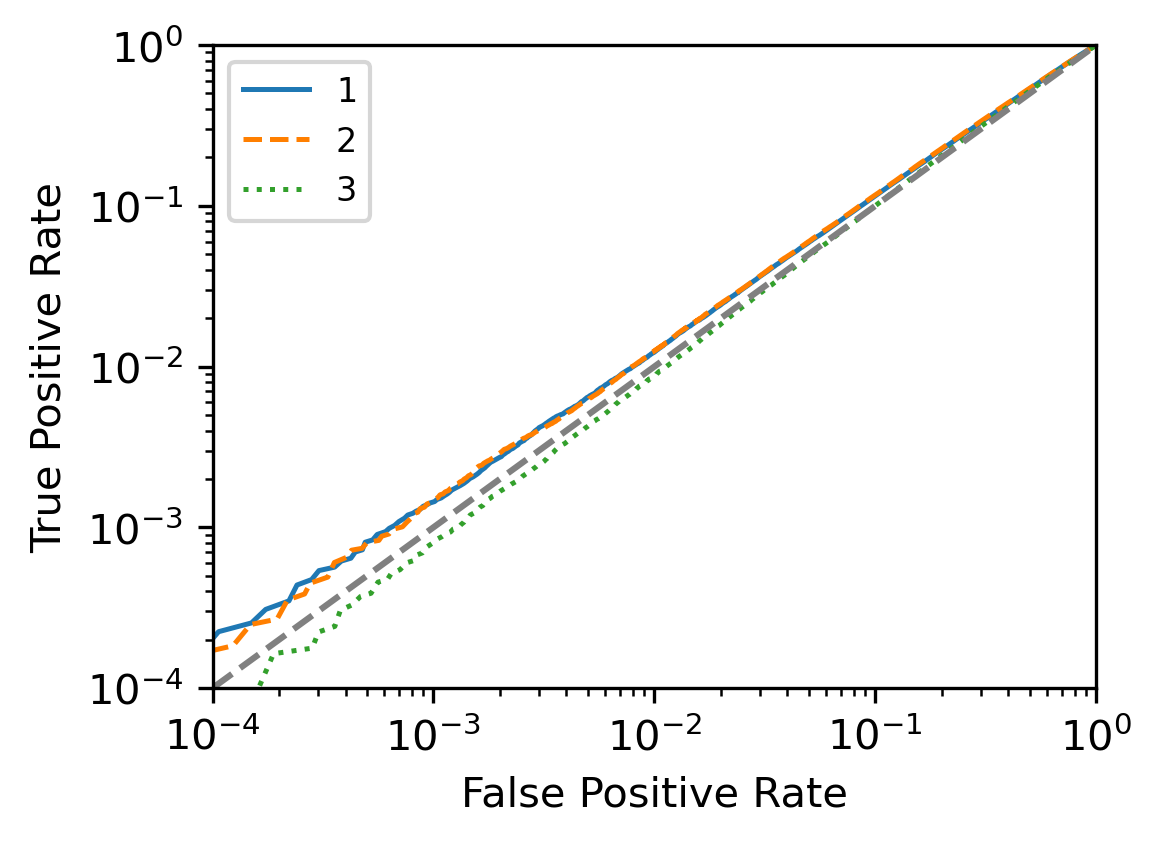}
    \includegraphics[width=.32\linewidth]{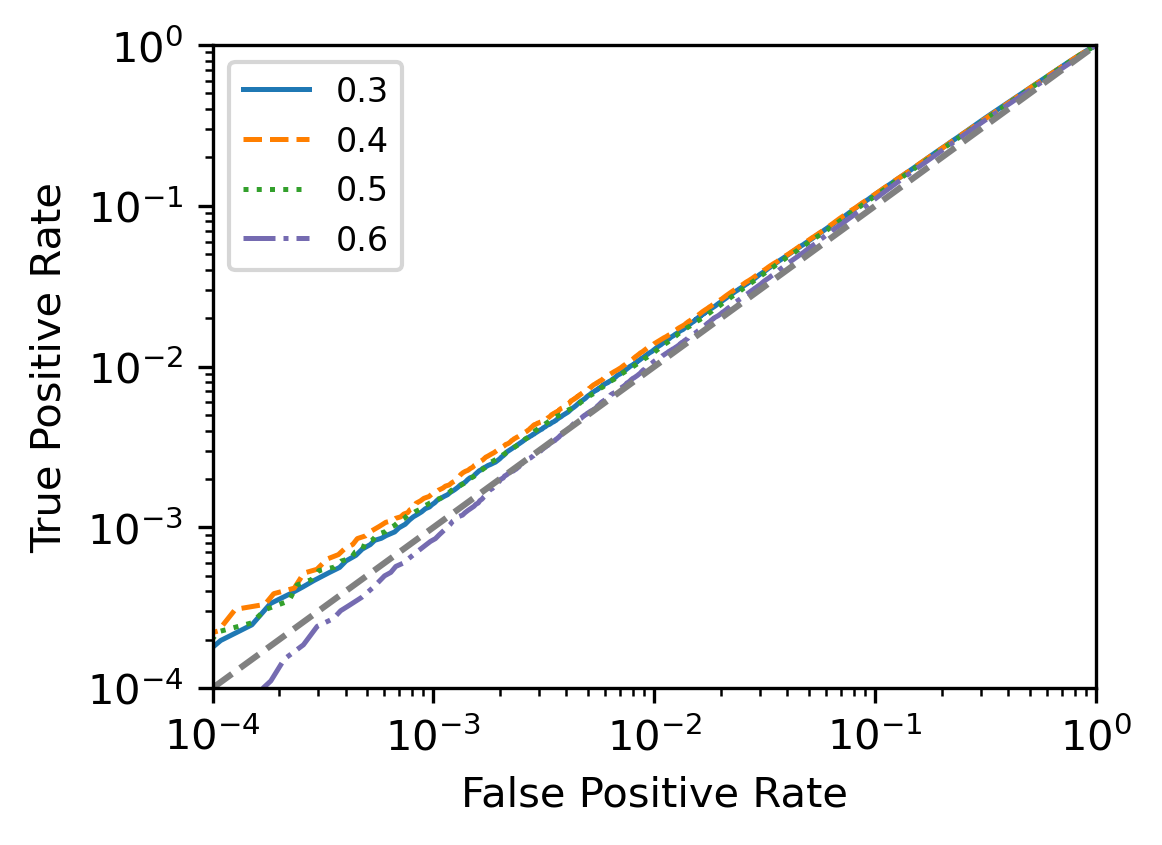}
    \includegraphics[width=.32\linewidth]{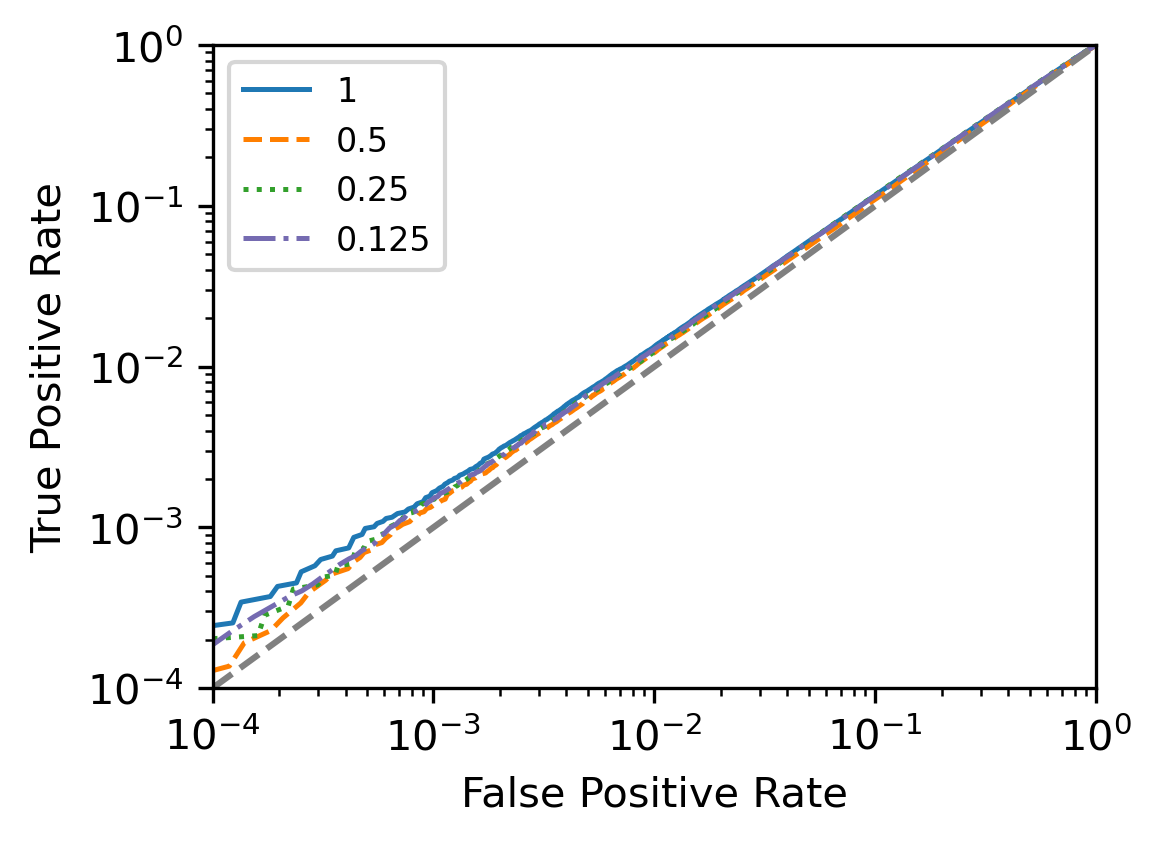}
    \includegraphics[width=.32\linewidth]{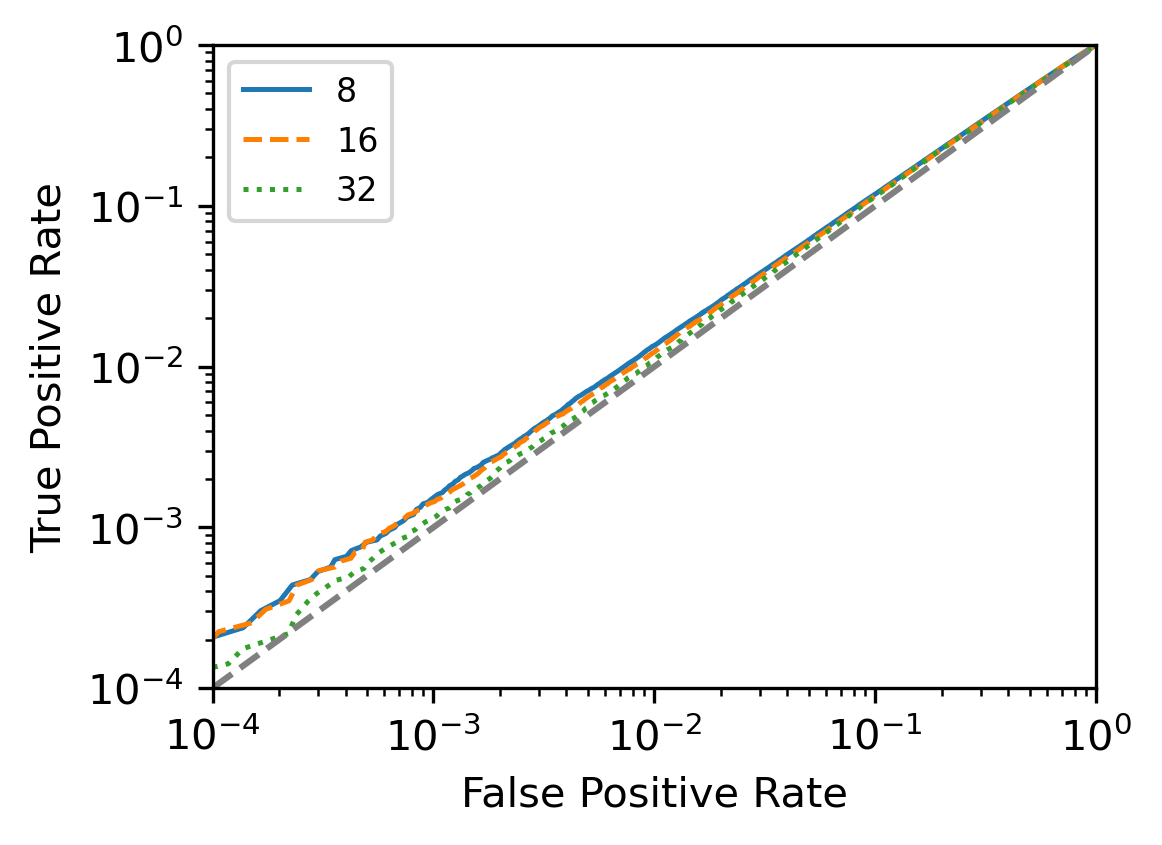}
    \includegraphics[width=.32\linewidth]{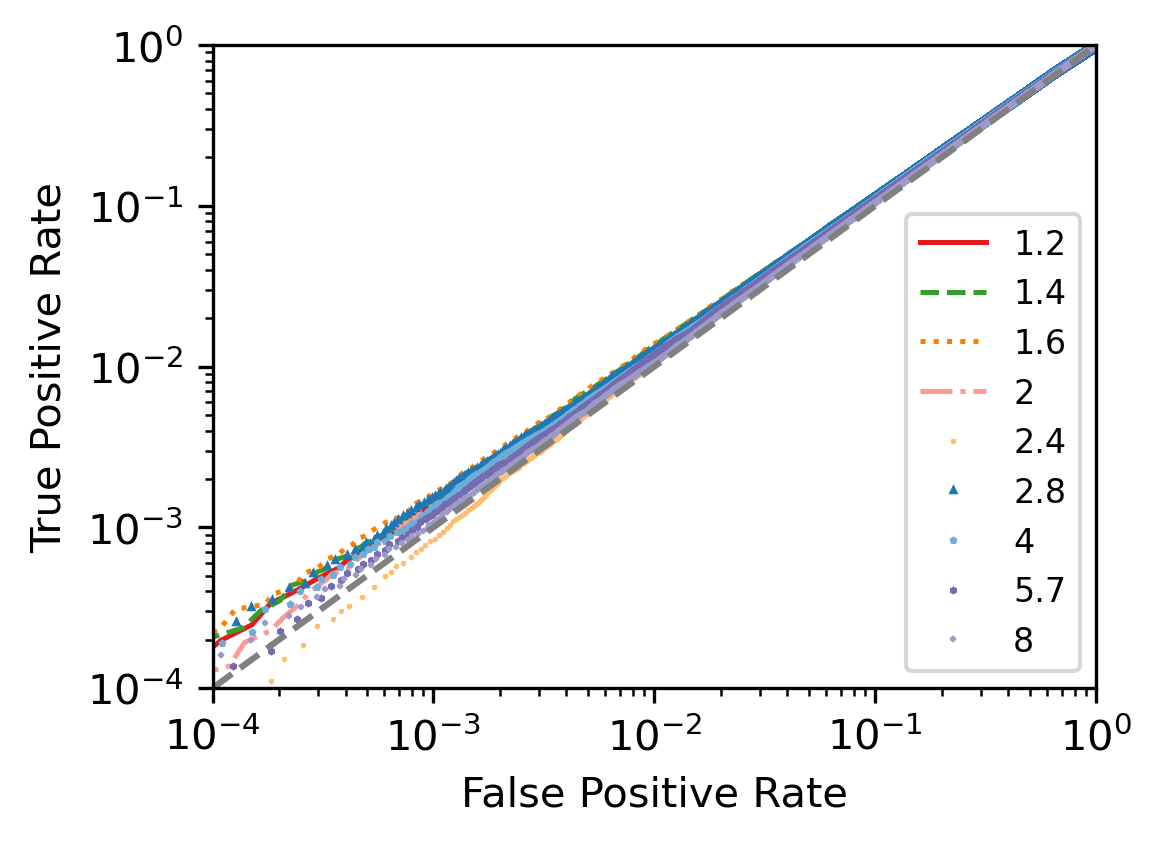}
    \vspace{-0.8em}
    \caption{%
        \textbf{Membership risks of NSDEs trained with different configurations.}
        (Top row, left to right) We illustrate the impact of the number of blocks ($n$), 
        the level of stochasticity ($k$), and the step-size ($s$). 
        (Bottom row, left to right) We show the impact of 
        the integration time $T$ and the noise intensity $\sigma$. 
        We use LiRA to assess the membership risks. The results are in CIFAR-10.
    }
    \label{fig:nsde-risk-ablation}
    \vspace{-0.8em}
\end{figure*}

\topic{Privacy-utility trade-off.}
The right two plots in Figure~\ref{fig:nsde-privacy-analysis}
illustrate the test accuracy of NSDEs trained 
with different privacy budgets ($\varepsilon$).
We record the test accuracy and privacy leakage 
at each training epoch for the 16 models 
used in our evaluation, and we plot them.
The second figure from the right shows 
the trade-offs across different failure probabilities ($\delta$).
As $\delta$ increases, the test accuracy of NSDEs 
increases under the same privacy budget.
This is because a larger $\delta$ permits 
a higher likelihood of privacy leakage.
Surprisingly, NSDEs achieve a test accuracy of 80\%
with $\varepsilon$ = 2 and $\delta$ = \(10^{-5}\) under GDP,
which demonstrates their ability to maintain high utility 
even at relatively low privacy budgets.
The rightmost plot demonstrates that 
all three accountants produce similar estimates of privacy leakage and test accuracy.
However it is important to note the variability in test accuracy:
even under the same privacy budget, e.g., $\varepsilon\!=\!2$,
test accuracy ranges from 65--82\%.
We attribute this variance to the factors like 
randomness introduced by the diffusion term 
or the training data stochasticity.
In practice, to train NSDEs with high utility 
we may need to train the same model multiple times,
as shown in prior work~\cite{10.1145/3580305.3599561}.

\subsection{Impact of NSDE Configurations}
\label{subsec:sde-ablation-study}

We now turn our attention to 
the impact of NSDE configurations on their membership risks.
In this analysis, we maintain consistent hyper-parameters 
that affect privacy accounting when varying,
such as the number of training iterations, batch-size and learning rate.
The only exception is the noise intensity ($\sigma$)
which directly influences the total privacy leakage.
We examine five configurations:
the number of NSDE blocks $n$,
the level of stochasticity $k$,
the integration time interval $T$,
the step-size {$1/s$}, 
and the noise intensity $\sigma$.
It is important to note that 
the noise intensity depends on the first three variables,
following the formulation $\frac{k}{\sqrt{T/s}}$.
\smallskip

\topic{The number of SDE blocks ($n$).}
We investigate the impact of adding more SDE blocks on privacy.
Due to the post-processing property of DP,
connecting multiple SDE blocks sequentially---%
each with the privacy guarantee $\varepsilon$---%
still results in an overally privacy leakage bounded by $\varepsilon$.
However, we expect that the diffusion terms in these blocks 
will introduce random noise multiple times, 
leading to a substantial decrease in model utility.
We vary $n \in \{1, 2, 3\}$.

The top-left plot in Figure~\ref{fig:nsde-risk-ablation}
illustrates our results.
Membership risks associated with NSDE models 
are consistent when $n\!=\!1$ or $n\!=\!2$.
But employing three SDE blocks reduces the risks by a factor of 2,
as measured by TPRs at 0.1--1\% FPR.
We attribute this reduction to poor generalization when $n\!=\!3$.
For instance, in CIFAR-10, 
while NSDEs with $n \in \{1, 2\}$ achieve test accuracy of 81.9--82.4\%,
models with $n\!=\!3$ show significantly lower test accuracy of 74.4\%.
Interestingly, unlike NODEs, 
augmenting the number of SDE blocks in NSDEs 
leads to a corresponding reduction in membership risks.
\smallskip

\topic{The level of stochasticity ($k$)}
directly increases the novel intensity ($\sigma$)
for the Gaussian diffusion term, following the formula $\sigma = \frac{k}{\sqrt{T/s}}$).
Increasing $k$ adds more noise to the modeling dynamics,
and therefore, a reduction in membership risks.
We vary $k$ across \{0.3, 0.4, 0.5, 0.6\},
with a default value of 0.5 for all our experiments.

The top-middle plot %
illustrates our findings.
We show that NSDEs with varying $k$ values
effectively mitigate membership risks, as measured by LiRA.
With $k$ values up to 0.5, NSDEs exhibit the same level of effectiveness.
But when $k$ reaches to 0.6,
it completely mitigates the membership risks,
bringing the ROC curve below that of a random classifier (a diagonal line).
We also examine how each choice of $k$ affects model performance. 
Increasing the level of stochasticity 
results in a decrease in model utility.
On CIFAR-10, the test accuracy drops from 82.6\% to 57.1\% 
as $k$ increases from 0.3 to 0.6.
This result implies that the state-of-the-art attack (LiRA) 
is yet weak so that one can completely mitigate with $k$ 
that does not significantly degrade model utility.
\smallskip

\topic{The integration time interval $T$.}
We evaluate the impact of varying integration time intervals by
adjusting $T$ and measuring its impact on membership risks.
We set $T$ across \{0.125, 0.25, 0.5, 1\},
with the default value of $T\!=\!1$ used in all our experiments.

The top-right plot in the same figure %
illustrates our findings.
Overall, we find no significant interaction between
the integration time interval ($T$) and membership risks.
All NSDE models consistently show low membership risks.
But theoretically, decreasing $T$ will increase the noise intensity ($\sigma$),
leading to a smaller privacy leakage ($\varepsilon$)
with reduced model utility and membership risks.
\smallskip

\topic{The step-size ($1/s$),}
which controls the granularity 
at which the integration over time $T$ is computed.
The increase in $s$ results in a model 
less capable of learning details of dynamics in the training data.
We vary $s$ across \{8, 16, 32\},
with the default value of $16$.

The bottom-left plot in Figure~\ref{fig:nsde-risk-ablation} summarizes our findings.
Overall, we observe only marginal variations in membership risks
caused by changes in step size ($1/s$).
On CIFAR-10, the membership risks observed at $s\!=\!8$ or $16$ are comparable.
However, we find a reduction of the risks with $s\!=\!32$,
likely because the model lacks learning detailed dynamics from the training data.
Our analysis indicates that
this reduction comes at the cost of accuracy.
A step size of $32$ limits NSDEs effectiveness in 
modeling underlying dynamics, and in consequence, 
they experience substantial performance degradation.
NSDEs at this step size achieve an average accuracy of 35.9\%,
which is 46.5--46.1\% lower than that achieved with $s$ of 8 or 16.
\smallskip

\topic{The noise intensity ($\sigma$).}
We finally examine how the noise intensity $\sigma$ affect membership risks,
though this parameter being influenced by our choice of $\sigma, T$ and $s$.
To evaluate this, we collect results
across all combinations of these three configurations from our previous experiments.
This results in unique $\sigma$ values of 
\{1.2, 1.4, 1.6, 2.0, 2.8, 4.0, 5.7, 8.0\}, with duplicates removed.
We use LiRA to quantify (empirical) membership risks.

The bottom-right plot %
illustrates our results.
Overall, we observe that all the noise intensity values 
we examine effectively mitigate membership risks,
consistent with prior work~\cite{lira}, 
which shows that even the strongest membership inference attacks
can be easily countered by a weak privacy guarantee provided by DP-SGD.
We also demonstrate that as the noise intensity gets stronger, NSDE's effectiveness increases.
However, it is important how we configure the noise intensity.
For instance, setting a lower noise intensity, such as 1.6,
by employing a larger step-size (e.g., 32),
may appear to reduce membership riusks in the figure.
But this reduction is primarily due to a substantial drop in model utility,
as shown in our analysis of different step-sizes.
\smallskip

\subsection{NSDE as a Drop-in Replacement Module}
\label{subsec:blocks}

All our evaluations so far assume that 
a defender trains NSDE models from scratch.
This approach often requires more computational resources than expected,
and could potentially face pushback, 
especially when well-performing pre-trained models are available.
To address this issue, 
we present an efficient \emph{replace-then-finetune} strategy.
Instead of training NSDEs from scratch, 
we propose \emph{replacing} the last few layers of an existing pre-trained model 
with an NSDE block and \emph{then finetune}.
This approach reduces computational demands 
by re-using the majority of pre-trained parameters,
as it requires fewer training iterations, 
while also benefiting from NSDEs’ privacy advantages 
(based on DP's composition theorem).
It also potentially improves model utility, as fine-tuning leverages the features 
(i.e., the latent representations/activations obtained at the penultimate layer), 
a strategy demonstrated to be effective in prior work 
on training models with DP-SGD~\cite{tramer2021differentially}.
In the following, we present results on CIFAR-10.
We refer the readers to Appendix 
for results on other datasets.
\smallskip

\begin{figure}[ht]
    \centering
    \vspace{-0.4em}
    \includegraphics[width=0.9\linewidth]{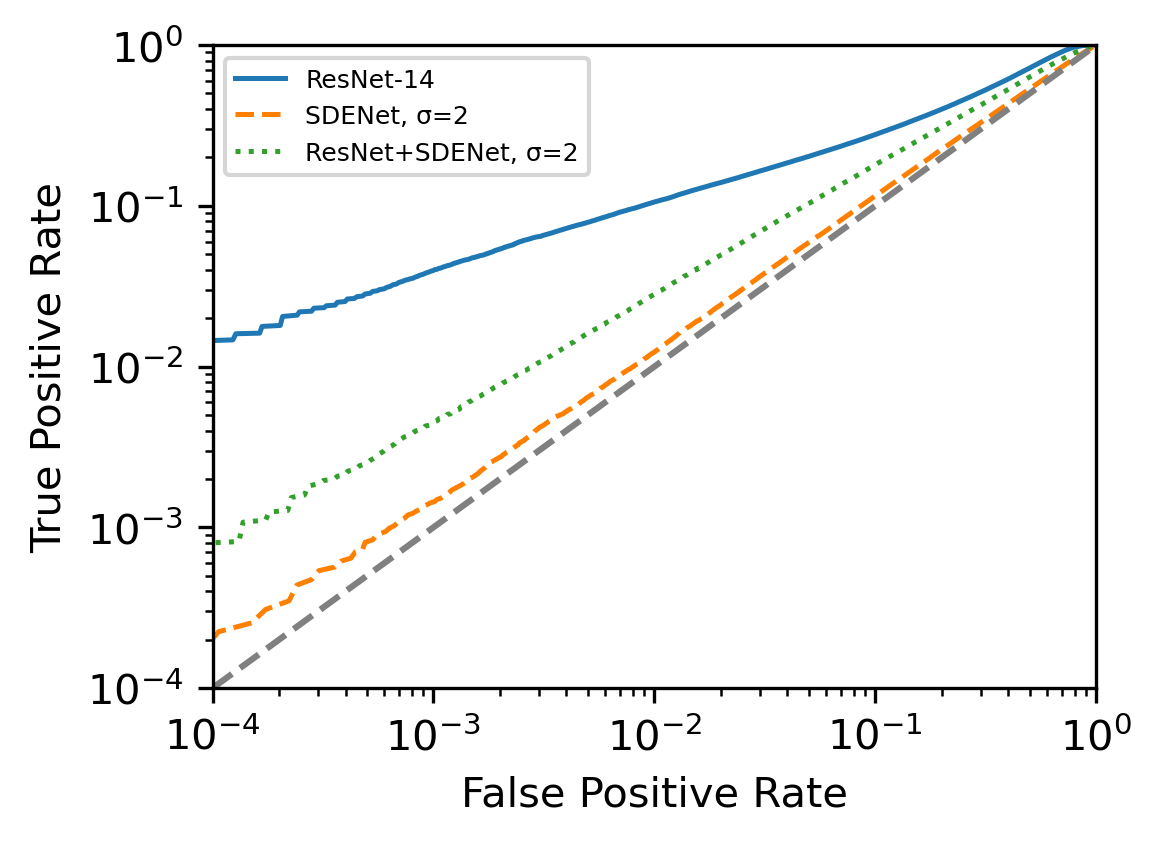}
    \vspace{-0.8em}
    \caption{\textbf{Effectiveness of \emph{replace-and-finetune} strategy.}
    We replace the last residual block of ResNet14 with a NSDE block 
    and then finetune the adapted model on CIFAR-10.}
    \label{fig:replace-and-finetune}
\end{figure}

\topic{Impact on privacy.}
Figure~\ref{fig:replace-and-finetune} demonstrates 
the effectiveness of the replace-then-finetune strategy.
When the last group of the final residual block is replaced with an NSDE block,
membership risks are reduced by more than an order of magnitude,
particularly in low FPR regions, compared to ResNet-14.
However, when compared to the NSDE models trained from scratch,
membership risks are higher.
This does not imply a difference in the privacy guarantees 
($\varepsilon$) between the two models.
Their guarantees remain the same due to the post-processing property.%
\footnote{Note that the last NSDE block is randomly initialized and fine-tuned.}
LiRA may hit the worst-case privacy leakage
against the replace-then-finetuned models,
as indicated by the bound shown in Figure~\ref{fig:compare-with-the-bound}.
\smallskip

\topic{Impact on model utility.}
We also observe that the test accuracy of 
the replace-then-finetuned models 
is 84.0\%, a modest decrease from the 86.0\% 
achieved by ResNet-14.
However, this reduction is comparatively smaller 
than that observed in NSDEs trained from scratch,
which achieves a test accuracy of 81.9\%.
We emphasize that this approach also offers 
potentially reductions in  
the computational demands of NSDE training.
Notably, we observe that test accuracy stabilizes after 150 epochs,
which is half the total number of epochs we use for training NSDEs.
As shown in our theoretical analysis in \S\ref{subsec:theoretical-analysis},
a reduction in the number of training iterations ($K$) 
also decreases the total privacy leakage of the trained models.
Our results demonstrate 
the effectiveness of the replace-then-finetune strategy 
in reducing membership inference risks 
while maintaining high utility and reducing computational overhead.

\section{Related Work}
\label{sec:related}

\topic{Privacy attacks on machine learning.}
Membership inference attacks%
---where an adversary, given a particular instance,
attempts to infer whether it was part of a model's training data%
---represent a worst-case form of privacy leakage 
and have been widely studied as a key mechanism 
for privacy auditing in machine learning.
However, other types of privacy attacks 
have also been explored in the literature.

Fredrikson \textit{et al.}~\cite{fredrikson2014privacy} 
introduced the concept of \emph{model inversion},
a class of reconstruction attacks 
aimed at reverse-engineering training samples
or recovering sensitive information 
in the training data that has not exposed to public.
Initial work~\cite{fredrikson2014privacy} 
focused on attacks targeting linear models,
while subsequent work~\cite{10.1145/2810103.2813677, 7536387} 
extended these attacks to non-linear models, 
including decision trees and neural networks.
In particular, these attacks against neural networks
typically involve iterative gradient-based optimization
to refine inputs until they resemble the original training data.

When a model inversion is used to infer sensitive attributes---%
such as gender or race---of specific training instances,
it is referred to as \emph{attribute inference attacks}~\cite{yeom}.
These attacks exploit query access to a target model 
to extract unknown attributes from partially observed training records. 
Various attack methodologies~\cite{yeom, 280036, fredrikson2014privacy} have been proposed, 
and attribute inference has been exploited across domains, 
such as social media~\cite{%
    10.1145/1526709.1526781, chaabane2012you, 10.1145/2594455, 
    197122, 10.1145/3038912.3052695, 10.1145/3154793} 
or recommender systems~\cite{10.1145/2365952.2365989, 10.1145/3397271.3401144}.

The recent emergence of advanced neural networks,
such as language models, diffusion models, 
and differential equation-based networks like NODEs,
necessitates the adaptation and evaluation of 
existing privacy attacks on these models.
While much attention has been given to 
language models and diffusion models, 
with prior studies~\cite{236216, 274574, 291199} 
demonstrating their susceptibility to 
membership inference and model inversion 
(or data extraction) attacks,
no prior work has systematically studied 
the privacy risks of differential equation-based 
neural networks such as NODEs and NSDEs.
Our work addresses this gap by presenting 
the first study on understanding and mitigating 
the membership risks associated 
with differential equation-based models.
\smallskip

\topic{Robustness of differential equation-based networks.}
In the existing literature, 
the security and privacy aspects of NODEs and NSDEs 
have mainly focused on countering 
\emph{adversarial examples}~\cite{43405}
and ensuring the stability of learned dynamics.
Time-invariant steady NODEs (TisODEs)~\cite{hanshurobustness},
for instance, empirically demonstrated that 
NODEs are more resilient to random Gaussian noise
and adversarial examples than 
traditional convolutional neural networks.
TisODEs introduced a steady-state regularizer 
that minimizes the deviation of hidden states $~h(t)$
beyond a terminal time $T$ (i.e., steady-state after $T$), 
effectively encouraging the network to learn stable dynamics.
Similarly, Kang~\textit{et al.}~\cite{kang2021stable}
proposed guiding trajectory evolution toward 
Lyapunov-stable equilibrium points,
enabling NODEs to extract features around stable attractors.
This approach led to improved robustness against adversarial examples.
The idea has been further extended to non-autonomous dynamical systems, 
guiding trajectories to asymptotically stable equilibrium points~\cite{li2022defending}. 
In~\cite{zeqiri2023efficient},
robustness analysis has also been extended to 
high-dimensional NODEs using graph-based verification tools.

Other techniques to improve NODE robustness 
include using skew-symmetric ODE blocks~\cite{huang2022adversarial} 
and integrating Guassian process~\cite{anumasa2021improving},
both of which empirically enhance robustness.
Though not explicitly designed for adversarial robustness, 
some efforts contribute similarly by regularizing NODE dynamics.
Kinetic and Jacobian norm regularizers~\cite{finlay2020train}, 
along with higher-order derivative regularizers~\cite{kelly2020learning}, 
have been proposed to enforce smoothness in NODE dynamics 
and have been shown to be effective. %
The method proposed in STEER~\cite{ghosh2020steer}
mitigates the learning of stiff ODEs 
by randomizing the terminal times during training.

In contrast, there are only handful of works 
that have investigated security and privacy aspects of NSDEs.
A few notable works include Jia~\textit{et al.}~\cite{jha2021smoother},
who demonstrated that attributions generated from NSDEs 
are less noisy, visually sharper, and quantitatively more robust 
than those computed from deterministic NODEs.
Liu \textit{et al.}~\cite{liu2020how} 
also showed formally and empirically that NSDEs improve 
robustness against adversarial examples.
More broadly, NSDEs have been widely adopted 
as a tool to estimate uncertainty~\cite{kong2020sde}
and generative modeling~\cite{kidger2021neural}.%

\section{Conclusion}
\label{sec:conclusion}

This work studies the privacy implications of emerging neural networks,
specifically NODEs, through the lens of membership inference attacks.
To our knowledge, this is the first work that
conducts a \emph{comprehensive analysis} of the membership risks
associated with NODEs.
Our key findings are:
(1) NODEs are twice as resilient to 
existing membership inference attacks
compared to conventional feedforward models like ResNets,
while maintaining comparable accuracy.
(2) This privacy benefit stems from reduced model overfitting.
NODEs exhibit a smaller generalization gap 
compared to conventional models,
which we attribute to their constrained expressivity%
---restricting the modeling of underlying dynamics 
to a system of differential equations.
(3) Moreover, advanced, non-stochastic variants of NODEs
can also reduce overfitting, leading to both improved accuracy 
and a further reduction of membership inference risks.

Our analysis does \emph{not} imply 
that NODEs are free from membership inference risks.
To further mitigate these risks,
we study NSDEs, a stochastic extension of NODEs that
incorporates an additional diffusion term into system modeling.
Most importantly, we \emph{formally show} that 
this diffusion term functions as 
a differentially-private (DP) mechanism,
thereby establishing NSDEs as DP-learners.
Our theoretical analysis shows that
the privacy guarantee ($\varepsilon$) offered by NSDEs 
is equivalent to that of DP-SGD when using identical mini-batch training configurations.
We demonstrate empirically that 
NSDEs are more effective at reducing membership risks
compared to existing defenses that lack provable privacy guarantees.
Unlike DP-SGD, NSDEs achieve this protection 
with a smaller impact on model utility.

Moreover, we present an effective strategy,
\emph{replace-then-finetune}, for using NSDEs 
as a drop-in privacy-enhancing module.
We replace the last few layers of 
a pre-trained conventional feedforward model 
with an NSDE block and finetune the adapted model for a few epochs.
In evaluation, this strategy reduces the membership inference risks 
of the original pre-trained model by an order of magnitude,
while achieving higher model utility than DP-SGD
and requiring only half the training cost 
(see Appendix for comparisons).

Our work demonstrates that 
systems learned by neural networks 
can be effectively \emph{modeled with privacy}.
Our findings imply that advanced neural networks
incorporating diffusion mechanisms, 
such as stable diffusion models, 
may, with careful modeling, serve as inherently private models
while preserving its strong performance.
As future work, investigating these private systems
may offer valuable insights into 
how DP shapes what learned by neural networks.
We also hope our findings inspire the broader adoption 
of the replace-then-finetune strategy
as a practical and effective method 
for enhancing model privacy.

\ifCLASSOPTIONcompsoc
  \section*{Acknowledgments}
\else
  \section*{Acknowledgment}
\fi
\noindent
S.H. is partially supported by the Google Faculty Research Award (2023). K.L. acknowledges support from the U.S. National Science Foundation under grant IIS 2338909. 
A.G. acknowledges support from the U.S. Department of Energy, Office of Advanced Scientific Computing Research under the Scalable, Efficient, and Accelerated Causal Reasoning for Earth and Embedded Systems (SEA-CROGS) project. Sandia National Laboratories is a multimission laboratory managed and operated by National Technology \& Engineering Solutions of Sandia, LLC, a wholly owned subsidiary of Honeywell International Inc., for the U.S. Department of Energy’s National Nuclear Security Administration under contract DE-NA0003525. This paper describes objective technical results and analysis. Any subjective views or opinions that might be expressed in the paper do not necessarily represent the views of the U.S. Department of Energy or the United States Government. This article has been co-authored by an employee of National Technology \& Engineering Solutions of Sandia, LLC under Contract No. DE-NA0003525 with the U.S. Department of Energy (DOE). The employee owns all right, title and interest in and to the article and is solely responsible for its contents. The United States Government retains and the publisher, by accepting the article for publication, acknowledges that the United States Government retains a non-exclusive, paid-up, irrevocable, world-wide license to publish or reproduce the published form of this article or allow others to do so, for United States Government purposes. The DOE will provide public access to these results of federally sponsored research in accordance with the DOE Public Access Plan https://www.energy.gov/downloads/doe-public-access-plan.

{
    \bibliographystyle{IEEEtran}
    \bibliography{bib/this_work}
}

\appendix

\section*{A. Model Architectures}
\label{appendix:model-architectures}

We adopt architectures examined in prior work~\cite{oganesyan2020stochasticity}.
\smallskip

\topic{ResNet-14} is defined as
\[
\begin{split}
    [&\text{Conv2d} \rightarrow \text{BN} 
    \rightarrow [\text{ResBlock} \times 3 \rightarrow \text{BN} \rightarrow \text{Conv2d}]\times 2 \\
    &\rightarrow \text{ResBlock} \times 3 \rightarrow \text{AdaptiveAvgPool2d} \rightarrow \text{Linear}],
\end{split}
\]
where \text{ResBlock} is composed as 
$
    [\text{BN} \rightarrow \text{Conv2d} \rightarrow \text{BN} \rightarrow \text{Conv2d}].
$

\topic{ODENet-16-32-64} is defined same as ResNet-14 except that ResBlock is replaced with ODEBlock, which is defined as
$
    [\text{ConcatConv2d} \rightarrow \text{ConcatConv2d}],
$
where \text{ConcatConv2d} augments $x$ by concatenating an additional feature channel derived from $t$. 
\smallskip

Both models begin with Conv2d that processes a 3-channel input image to produce 32 feature maps using a 3$\times$3 filter with a stride of 2 and padding of 1. Following the initial convolutional layer, the features pass through a group of either three ResBlocks or a single ODEBlock, depending on the configuration. Subsequently, the network transitions through a Downsampling layer, which effectively reduces the spatial dimensions while adjust the channel depth, preparing the features for subsequent layers. After the third group of three ResBlocks or an ODEBlock, the network utilizes an adaptive average pooling layer to condenses the feature maps into a single dimensional vector per feature map. This vector is then flattened and passed through a linear layer which acts as the classifier of the network, outputting the final predictions across the categories.
\smallskip

\smallskip

\begin{table*}[t]
\centering
\caption{%
    \textbf{Membership inference risks of ODENet-64 models and SDENet $\sigma\!=\!0$ models.}
    We evaluate ODENet-64 models and and SDENet $\sigma\!=\!0$ models against six existing attacks on CIFAR-10 and CIFAR-100.
}
\label{tbl:mia-odenet64}
\adjustbox{max width=\linewidth}{
\begin{tabular}{@{}llcccccccccccc@{}}
\toprule
 &  & \multicolumn{2}{c}{\textbf{Acc. (C-10)}} & \multicolumn{2}{c}{\textbf{Acc. (C-100)}} & \multicolumn{2}{c}{\textbf{TPR @ 0.1\% FPR}} & \multicolumn{2}{c}{\textbf{TPR @ 1\% FPR}} & \multicolumn{2}{c}{\textbf{AUC}} & \multicolumn{2}{c}{\textbf{Inference acc.}} \\ \midrule
\textbf{Model} & 
    \textbf{Method} &
    \textbf{Train} & \textbf{Test} & \textbf{Train} & \textbf{Test} &
    \textbf{C-10} & \textbf{C-100} & 
    \textbf{C-10} & \textbf{C-100} & 
    \textbf{C-10} & \textbf{C-100} & 
    \textbf{C-10} & \textbf{C-100} \\ \midrule \midrule
\multirow{6}{*}{ODENet-64} & 
    Yeom et al.~\cite{yeom} & 
    \multirow{6}{*}{94.14\%} & \multirow{6}{*}{84.36\%} & \multirow{6}{*}{70.93\%} & \multirow{6}{*}{51.85\%} & 
    0.00\% & 0.03\% & 0.28\% & 1.17\% & 0.536 & 0.659 & 53.57\% & 66.09\% \\
    & Shokri et al.~\cite{shokri} & 
    &  &  &  &  
    0.08\% & 0.09\% & 0.75\% & 0.85\% & 0.473 & 0.480 & 50.03\% & 50.01\% \\
    & Song and Mittal~\cite{song2021scorebased} & 
    &  &  &  &  
    0.01\% & 0.13\% & 0.53\% & 1.58\% & 0.511 & 0.497 & 51.17\% & 51.63\% \\
    & Watson et al.~\cite{watson2022on} & 
    &  &  &  &  
    0.00\% & 0.03\% & 0.28\% & 1.17\% & 0.536 & 0.659 & 53.57\% & 66.09\% \\
    & Carlini et al.~\cite{lira} & 
    &  &  &  &  
    0.86\% & 1.72\% & 4.17\% & 7.70\% & 0.639 & 0.741 & 59.57\% & 69.49\% \\ 
    & Zarifzadeh et al.~\cite{zarifzadeh2023low} & 
    &  &  &  &  
    2.72\% & 2.91\% & 8.40\% & 11.47\% & 0.617 & 0.757 & 59.96\% & 70.65\% \\
    \midrule
\multirow{6}{*}{SDENet $\sigma\!=\!0$} & 
    Yeom et al.~\cite{yeom} &
    \multirow{6}{*}{89.09\%} & \multirow{6}{*}{83.93\%} & \multirow{6}{*}{68.01\%} & \multirow{6}{*}{52.49\%} &
    0.00\% & 0.09\% & 0.39\% & 1.12\% & 0.538 & 0.625 & 54.00\% & 60.52\% \\
    & Shokri et al.~\cite{shokri} &
    &  &  &  & 0.07\% & 0.05\% & 0.68\% & 0.63\% & 0.468 & 0.449 & 50.02\% & 50.00\% \\
    & Song and Mittal~\cite{song2021scorebased} &
    &  &  &  & 0.00\% & 0.14\% & 0.50\% & 1.40\% & 0.516 & 0.496 & 51.65\% & 51.13\% \\
    & Watson et al.~\cite{watson2022on} &
    &  &  &  & 0.00\% & 0.09\% & 0.44\% & 1.12\% & 0.538 & 0.625 & 54.00\% & 60.52\% \\
    & Carlini et al.~\cite{lira} &
    &  &  &  & 0.35\% & 1.03\% & 2.02\% & 5.36\% & 0.559 & 0.676 & 54.09\% & 62.26\% \\
    & Zarifzadeh et al.~\cite{zarifzadeh2023low} & 
    &  &  &  &  
    0.91\% & 2.04\% & 4.01\% & 8.04\% & 0.579 & 0.701 & 55.76\% & 63.58\% \\
    \bottomrule
\end{tabular}
}
\end{table*}

\section*{B. Differential Privacy of Gaussian Mechanism}
\label{app:proof}

The following result is proved in \cite{dwork2014algorithmic}. Since it is crucial to the results in the body, a slightly different proof is provided here for completeness.

\begin{proposition}[Theorem A.1 in \cite{dwork2014algorithmic}]\label{prop:GaussianDP}
    Let $\varepsilon \in (0,1)$.  The Gaussian mechanism, $~M(~q) = ~f(~q) + \mathcal{N}\lr{0,\sigma^2~I}$ with variance parameter $\sigma \geq \sqrt{2\ln\lr{1.25/\delta}}(S_{~f}/\varepsilon)$ is $(\epsilon,\delta)$-differentially private.
\end{proposition}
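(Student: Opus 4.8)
\textbf{Proof plan for Proposition~\ref{prop:GaussianDP}.}
The plan is to reduce the multivariate statement to a one-dimensional Gaussian tail estimate by projecting onto the direction of the sensitivity vector. First I would fix two neighboring query outputs $~f(D)$ and $~f(D')$ and write $~v = ~f(D)-~f(D')$, noting $\nn{~v}\leq S_{~f}$. Because the noise $\mathcal N(~0,\sigma^2 ~I)$ is spherically symmetric, the privacy loss random variable depends only on the component of the noise along $~v$; all orthogonal components contribute equally to the densities at $~f(D)$ and $~f(D')$ and cancel in the likelihood ratio. Concretely, writing the output as $~x$, the log privacy loss is
\begin{align*}
    \ln\frac{p_{~f(D)}(~x)}{p_{~f(D')}(~x)} = \frac{1}{2\sigma^2}\lr{\nn{~x-~f(D')}^2 - \nn{~x-~f(D)}^2} = \frac{1}{\sigma^2}\IP{~x-~f(D)}{~v} + \frac{\nn{~v}^2}{2\sigma^2}.
\end{align*}
Under $~x\sim\mathcal N(~f(D),\sigma^2~I)$, the inner product $\IP{~x-~f(D)}{~v}$ is distributed as $\mathcal N(0,\sigma^2\nn{~v}^2)$, so the privacy loss is a univariate Gaussian with mean $\nn{~v}^2/2\sigma^2$ and standard deviation $\nn{~v}/\sigma$.

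Next I would show that this privacy loss exceeds $\varepsilon$ with probability at most $\delta$. Since the loss is monotone increasing in $\nn{~v}$ for the relevant range, it suffices to treat the worst case $\nn{~v}=S_{~f}$. I would then bound $\Pr\!\left[\,\text{privacy loss} > \varepsilon\,\right]$ by rewriting it as a tail event for a standard normal $Z$: the loss exceeds $\varepsilon$ iff $Z > \sigma\varepsilon/S_{~f} - S_{~f}/(2\sigma)$. Plugging in the hypothesized bound $\sigma \geq \sqrt{2\ln(1.25/\delta)}\,(S_{~f}/\varepsilon)$ and using the standard Gaussian tail inequality $\Pr[Z>t]\leq \frac{1}{t\sqrt{2\pi}}e^{-t^2/2}$ for $t>0$, together with the assumption $\varepsilon<1$ to control the lower-order term $S_{~f}/(2\sigma)$, one checks that the right-hand side is at most $\delta$. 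This is essentially the computation in the proof of Theorem A.1 of \cite{dwork2014algorithmic}; I would reproduce the key algebraic inequality chain but omit the most tedious constant-chasing.

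Finally, I would convert the tail bound on the privacy loss into the $(\varepsilon,\delta)$-DP inequality. Let $B$ be the event that the privacy loss exceeds $\varepsilon$; we have shown $\Pr[B]\leq\delta$. For any measurable $S$, split $\Pr[~M(D)\in S] = \Pr[~M(D)\in S\cap B^c] + \Pr[~M(D)\in S\cap B] \leq e^{\varepsilon}\Pr[~M(D')\in S] + \delta$, where the first term is bounded using the pointwise definition of privacy loss off $B$ and the second by $\Pr[B]\leq\delta$. The main obstacle is the second step: verifying that the chosen $\sigma$ actually forces the Gaussian tail below $\delta$ requires care with the two competing terms in the threshold $t = \sigma\varepsilon/S_{~f} - S_{~f}/(2\sigma)$, and it is precisely here that the hypothesis $\varepsilon<1$ (rather than merely $\varepsilon>0$) is used; the rest is standard. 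Everything else—the symmetry reduction, the distributional identification, and the post-processing-style splitting argument—is routine.
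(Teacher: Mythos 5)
Your proposal follows essentially the same route as the paper's proof: identify the privacy loss random variable as a univariate Gaussian with mean $\nn{~v}^2/(2\sigma^2)$ and standard deviation $\nn{~v}/\sigma$, reduce to a standard normal tail event at threshold $\sigma\varepsilon/\nn{~v} - \nn{~v}/(2\sigma)$, and close with Mill's inequality plus the constant-chasing that uses $\varepsilon<1$ (which the paper carries out in full and you defer). The only cosmetic differences are that you make the final tail-bound-to-$(\varepsilon,\delta)$-DP conversion explicit, whereas the paper leaves it implicit, and the paper bounds the two-sided tail (targeting $\delta/2$ per side) while you work one-sided; neither affects correctness.
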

\begin{proof}
    Consider two adjacent datasets $~q, ~q'$, i.e., $|~q-~q'|<1$. Without loss of generality, write $~f(~q)=~f(~q')+~v$ for some $~v\in\mathbb{R}^n$.  Drawing $~X\sim\mathcal{N}\lr{~0,\sigma^2~I}$, the proof is based on analyzing the privacy loss random variable,
    \begin{align*}
        \ln&\lr{\frac{\Pr[~M(~q)=~f(~q)+~X]}{\Pr[~M(~q')=~f(~q)+~X]}} = \ln\lr{\frac{\exp\lr{\frac{-1}{2\sigma^2}\nn{~X}^2}}{\exp\lr{\frac{-1}{2\sigma^2}\nn{~X+~v}^2}}} \\
        &= \frac{1}{2\sigma^2}\lr{\nn{~X+~v}^2-\nn{~X}^2} = \frac{\IP{~v}{~X}}{\sigma^2} + \frac{\nn{~v}^2}{2\sigma^2},
    \end{align*}
    which is normally distributed with mean $\nn{~v}^2/\lr{2\sigma^2}$ and variance $\nn{~v}^2/\sigma^2$.  Choosing $~Z\sim\mathcal{N}\lr{~0,~I}$, the privacy loss random variable is bounded by $\varepsilon$ when 
    \begin{align*}
        \nn{\frac{\nn{~v}}{\sigma}~Z + \frac{\nn{~v}^2}{2\sigma^2}} \leq \frac{\nn{~v}}{\sigma}\lr{\nn{~Z}+\frac{\nn{~v}}{2\sigma}} \leq \varepsilon.
    \end{align*}
    In view of this, it suffices to bound the tail probability
    \begin{align*}
        \Pr\left[\nn{~Z}>\frac{\varepsilon\sigma}{\nn{~v}}-\frac{\nn{~v}}{2\sigma}\right] = 2\Pr\left[~Z>\frac{\varepsilon\sigma}{\nn{~v}}-\frac{\nn{~v}}{2\sigma}\right]<\delta.
    \end{align*}
    Choosing the standard deviation $\sigma=t\nn{~v}/\varepsilon$ for some $t>0$ to be determined, Mill's inequality guarantees that it is enough to require 
    \begin{equation*}
        \Pr\!\left[Z > t - \tfrac{\varepsilon}{2t}\right] \leq \tfrac{1}{\sqrt{2\pi}(t - \varepsilon/2t)} \exp\!\left(-\tfrac{1}{2}(t - \varepsilon/2t)^2\right) < \delta/2
    \end{equation*}
    or, said differently, it is enough to choose $t$ such that 
    \begin{align*}
        \ln\lr{t-\frac{\varepsilon}{2t}} + \frac{1}{2}\lr{t-\frac{\varepsilon}{2t}}^2 > \ln\lr{\frac{1}{\delta}\sqrt{\frac{2}{\pi}}}.
    \end{align*}
    Note that, by enforcing $t\geq 1$ and $0<\varepsilon <1$, the first term on the left-hand side is nonnegative when 
        $t-\frac{\varepsilon}{2t} \geq t - \frac{1}{2} \geq 1$.
    Therefore, it suffices to choose $t\geq 3/2$
    to ensure positivity.  On the other hand, the second term is increasing in $t$ when $t>\sqrt{\varepsilon/2}$ and decreasing in $\varepsilon$ when $\varepsilon<2t^2$, so it follows that we may ensure
    {\small
\begin{equation*}
(t\!-\!\tfrac{\varepsilon}{2t})^2 = t^2\!-\!\varepsilon\!+\!(\tfrac{\varepsilon}{2t})^2 \geq t^2\!-\!1\!+\!1\cdot 3^{-2} = t^2\!-\!\tfrac{8}{9} > 2\ln(\tfrac{1}{\delta}\sqrt{\tfrac{2}{\pi}}).
\end{equation*}
}
This implies that 
{\small
    \begin{equation*}
        t^2 > 2\ln\left(\frac{1}{\delta}\sqrt{\frac{2}{\pi}}\right) + \ln\left(e^{8/9}\right) = 2\ln\left(\frac{1}{\delta}\right) + \ln\left(e^{8/9}\frac{2}{\pi}\right)
    \end{equation*}}
    and since $(2/\pi)e^{8/9}<1.25^2$, this condition is satisfied when 
        $t^2 > 2\ln\lr{\frac{1.25}{\delta}}$.
    This shows that the privacy loss random variable is bounded above by $\varepsilon$ with probability $1-\delta$ on $~q,~q'$ whenever $\sigma \geq \sqrt{2\ln\lr{1.25/\delta}}\allowbreak (\nn{~v}/\varepsilon)$.  To remove the dependence on $~q,~q'$, it suffices to choose $\sigma \geq \sqrt{2\ln\lr{1.25/\delta}}(S_{~f}/\varepsilon)$.
\end{proof}

\section*{C. Additional Evaluation Results}
\label{appendix:additional-results}

\topic{Membership risks of NODEs, based on ODENet-64 architecture.}
Table~\ref{tbl:mia-odenet64} summarizes our additional results on NODE (ODENet-64) models and NSDEs with $\sigma\!=\!0$, respectively. It is notable that ODENet-64 demonstrates comparable results to ODENet-16\_32\_64 but with a decrease in test accuracy of 1.64\% and 2.41\% on CIFAR-10 and CIFAR-100, respectively. It is 2.5-6 $\times$ less susceptible to LiRA attacks compared to ResNet-14. On CIFAR-100, ODENet exhibits superior defense capabilities against LiRA attacks than ODENet-16\_32\_64, demonstrating a lower TPR@0.1\%FPR of 1.72\%, thereby demonstrating enhanced security against membership inference attacks compared to ResNet-14. 
For NSDEs with $\sigma\!=\!0$, despite being the non-privacy variant, it exhibits robust defense capabilities. NSDEs with $\sigma\!=\!0$ maintains competitive test accuracy and generalization gaps on both CIFAR datasets. Notably, NSDEs $\sigma\!=\!0$ showcases it strengths in defending against membership inference attacks, with TPR@0.1\%FPR of 0.35\% and 1.03\% for CIFAR-10 and CIFAR-100, respectively, demonstrating a stronger defense compared to ResNet-14 and NODE models. This superior performance in TPR@0.1\%FPR, coupled with robust accuracy metrics, underscores NSDE's ability to enhance security against advanced threats.
\smallskip

\begin{figure}[h]
    \centering
    \includegraphics[width=0.9\linewidth]{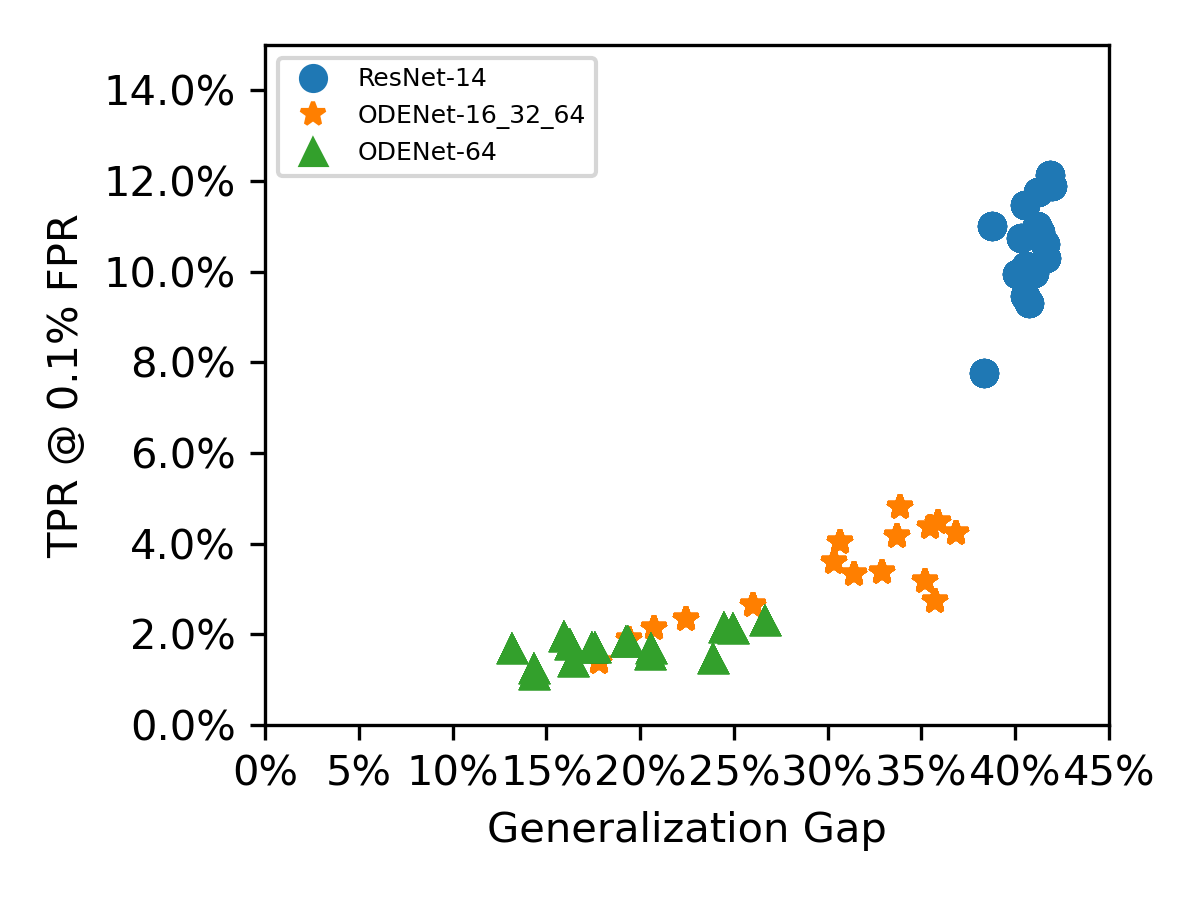}
    \caption{\textbf{Membership risks and overfitting} in CIFAR-100.}
    \label{fig:generalization_gap_cifar100}
\end{figure}

\topic{Impact of generalization gap in attack success (CIFAR-100).}
Figure~\ref{fig:generalization_gap_cifar100} illustrates the relationship 
between membership risks measured by LiRA 
and the generalization gap for models trained on CIFAR-100.
For ResNet-14, the TPR at 0.1\% FPR ranges from 8\%--14\%, 
even as the generalization gap remains consistently around 40\%.
This suggests that overfitting contributes to membership inference risks.
In contrast, NODEs (ODENet-16\_32\_64 and ODENet-64) show 
slightly lower generalization gaps, typically ranging from 19.08\%--30\%.
Their TPR at 0.1\% FPR values are also lower, 
fluctuating betweeen 0.5\%--5.5\%.
This suggests that NODEs are less prone to overfitting, 
thereby reducing their risks to membership inference attacks.
\smallskip

\topic{Membership risks of replace-then-finetune strategy.}
We analyze the computational resources required for training ResNet, NODEs, NSDEs and NSDEs with replace-then-finetune models. As shown in Table~\ref{tbl:computational_resources}, training NSDE-based models from scratch is computationally expensive. Table~\ref{tbl:mia-res-sde} demonstrates the results with membership inference risks of replace-then-finetune strategy. We evaluate NSDEs with replace-then-finetune models against six existing attacks on four datasets. Although only replacing the final blocks of a ResNet with SDENet, the replace-then-finetune models narrows the generation gap while achieving lower TPR at both 0.1\% and 1\% FPR across four dataset, indicating improved robustness compared to the ResNet models.  

\smallskip
\topic{Figure and tables for additional results.}
Here, we present full experimental results 
to support the findings reported in the main body of the manuscript: 
Tables~\ref{tbl:mia-blocks}--\ref{tbl:mia-defense}, 
where each table covers specific aspects. 
Table~\ref{tbl:mia-blocks} demonstrates the results 
with varying number of blocks of models in Figure~\ref{fig:node-risk-ablation} and~\ref{fig:nsde-risk-ablation}.
Table~\ref{tbl:mia-solvers} and~\ref{tbl:mia-variants} provide supporting data for Figure~\ref{fig:node-risk-ablation}, 
illustrating the results of the impact of solvers and variants for NODEs.
The detailed evaluation results of defenses, 
which provide an in-depth analysis corresponding to Figure~\ref{fig:nsde-vs-others}, 
are presented in Table~\ref{tbl:mia-defense}.

\begin{table}[htbp]
\centering
\caption{%
    \textbf{Computational resources required for training on CIFAR-10 and CIFAR-100.}
}
\label{tbl:computational_resources}
\adjustbox{max width=\linewidth}{
\begin{tabular}{@{}lcc@{}}
\toprule
\textbf{Model}  & \textbf{Time per epoch} & \textbf{Memory usage} 

\\ \midrule \midrule
    ResNet-14 & 10s & 941MB  \\ 
    NODE (ODENet-16-32-64)\_block1& 23s & 1305MB  \\
    NSDE (SDENet $k\!=\!2$)\_block1& 30s & 5051MB  \\
    NSDE (replace-then-finetune)& 15s & 1575MB  \\
    \bottomrule
\end{tabular}
}
\end{table}

\begin{table*}[t]
\centering
\caption{%
    \textbf{Membership risks of replace-and-finetune strategy ResNet+SDENet.}
}
\label{tbl:mia-res-sde}
\vspace{-1.em}
\adjustbox{max width=\linewidth}{
\begin{tabular}{@{}l | cccc | cccc | cccc | cccc@{}}
\toprule
 & \multicolumn{4}{c|}{\textbf{TPR @ 0.1\% FPR}} & \multicolumn{4}{c|}{\textbf{TPR @ 1\% FPR}} & \multicolumn{4}{c|}{\textbf{AUC}} & \multicolumn{4}{c}{\textbf{Inference acc.}} \\ \midrule
    \textbf{Method} &
    \textbf{F-M} & \textbf{C-10} & \textbf{C-100} & \textbf{T-I} & 
    \textbf{F-M} & \textbf{C-10} & \textbf{C-100} & \textbf{T-I} & 
    \textbf{F-M} & \textbf{C-10} & \textbf{C-100} & \textbf{T-I} & 
    \textbf{F-M} & \textbf{C-10} & \textbf{C-100} & \textbf{T-I} \\ \midrule \midrule
    Yeom et al.~\cite{yeom} & 
    0.01\% & 0.00\% & 0.07\% & 0.09\% & 
    0.06\% & 0.14\% & 1.10\% & 1.12\% & 
    0.558 & 0.555 & 0.646 & 0.638 &
    55.72\%& 55.81\% & 64.41\% & 61.22\% \\
    Shokri et al.~\cite{shokri} & 
    0.06\% & 0.05\% & 0.05\% & 0.12\% & 
    0.65\% & 0.52\% & 0.56\% & 1.04\% & 
    0.477 & 0.450 & 0.433 & 0.502 & 
    50.01\% & 50.00\% & 50.00\% & 50.20\% \\
    Song and Mittal~\cite{song2021scorebased} & 
    0.00\% & 0.01\% & 0.13\% & 0.13\% & 
    0.24\% & 0.24\% & 1.48\% & 1.19\% & 
    0.468 & 0.530 & 0.496 & 0.500 & 
    51.05\% & 53.29\% & 51.15\% & 50.36\% \\
    Watson et al.~\cite{watson2022on} & 
    0.00\% & 0.00\% & 0.07\% & 0.11\% &
    0.10\% & 0.15\% & 1.10\% & 1.17\% &
    0.550 & 0.555 & 0.646 & 0.537 &
    53.81\% & 55.81\% & 64.41\% & 52.83\% \\
    Carlini et al.~\cite{lira} & 
    1.39\% & 0.45\% & 1.65\% & 0.21\% & 
    4.48\% & 2.84\% & 7.51\% & 1.57\% &  
    0.576 & 0.600 & 0.726 & 0.535 & 
    53.27\% & 56.78\% & 67.37\% & 52.42\% \\ 
    Zarifzadeh et al.~\cite{zarifzadeh2023low} & 
    0.73\% & 1.41\% & 2.91\% & 0.26\% & 
    4.74\% & 5.68\% & 10.88\% &1.89\% & 
    0.544 & 0.607 & 0.743 & 0.561 & 
    54.80\% & 57.97\% & 68.67\% & 54.54\% \\
    \midrule

\end{tabular}
}
\end{table*}

\begin{table*}[t]
\centering
\caption{%
    \textbf{Comparing membership risks of NODEs and NSDEs.}
    We vary the number of blocks in \{1, 2, 3\} and use LiRA.
}
\label{tbl:mia-blocks}
\vspace{-1.em}
\adjustbox{max width=\linewidth}{
\begin{tabular}{@{}llcccccccccccc@{}}
\toprule
 &  & \multicolumn{2}{c}{\textbf{Acc. (C-10)}} & \multicolumn{2}{c}{\textbf{Acc. (C-100)}} & \multicolumn{2}{c}{\textbf{TPR @ 0.1\% FPR}} & \multicolumn{2}{c}{\textbf{TPR @ 1\% FPR}} & \multicolumn{2}{c}{\textbf{AUC}} & \multicolumn{2}{c}{\textbf{Inference acc.}} \\ \midrule
\textbf{Model} & 
    \textbf{\# of Blocks} &
    \textbf{Train} & \textbf{Test} & \textbf{Train} & \textbf{Test} &
    \textbf{C-10} & \textbf{C-100} & 
    \textbf{C-10} & \textbf{C-100} & 
    \textbf{C-10} & \textbf{C-100} & 
    \textbf{C-10} & \textbf{C-100} \\ \midrule \midrule
\multirow{3}{*}{ODENet-16\_32\_64} & 
    1 & 94.30\% & 84.41\% &82.05\% & 52.16\%& 1.01\% & 3.30\%&  4.22\% &12.11\%& 0.616 & 0.782& 57.65\% & 72.88\%\\
    & 2 & 96.39\% & 85.26\% & 93.40\% & 52.8\%&  1.57\% & 7.01\%& 5.95\% & 20.86\%& 0.638 & 0.848& 58.99\%  & 75.77\% \\
    & 3 & 96.82\% & 85.25\% & 94.71\%& 53.07\%& 1.66\% & 7.78\%& 6.17\% &22.38\%& 0.641 &0.852& 58.86\% & 76.32\% \\ \midrule
\multirow{3}{*}{ODENet-64} & 
    1 & 94.14\% & 84.36\% &70.93\%&51.85\% & 0.86\% &1.72\% &4.17\% &7.70\%& 0.639 & 0.741&59.57\% & 69.49\%\\
    & 2 & 97.07\% & 85.20\% &80.74\%&53.39\%& 1.25\% &2.41\% &5.43\% &10.59\%& 0.656 &0.788& 60.78\% & 64.33\%\\
    & 3 & 97.74\% & 85.82\% &87.26\%&53.78\%& 1.49\% & 3.29\%&6.05\% &13.71\%& 0.669 & 0.815&60.89\% &69.85\% \\ \midrule
\multirow{3}{*}{SDENet-16\_32\_64} & 
    1 & 81.89\% & 81.91\% &56.54\% & 50.64\%& 0.14\% & 0.43\%& 1.24\% & 3.00\%& 0.525 &0.612 & 51.89\% & 57.64\%\\
    & 2 & 82.41\% & 82.42\% & 57.20\%& 51.12\%& 0.14\% & 0.45\% &1.25\% & 2.99\%& 0.528&0.619 & 52.23\% & 58.57\%\\
    & 3 & 74.44\% & 74.96\% & 55.73\%& 48.27\%& 0.08\% &0.32\% &0.90\% & 2.58\%& 0.511&0.604& 50.00\% &57.43\%\\    \bottomrule
\end{tabular}
}
\end{table*}

\begin{table*}[t]
\centering
\caption{%
    \textbf{Membership risks of NODEs trained with different solvers.}
    We compare the LiRA success on them.
}
\label{tbl:mia-solvers}
\vspace{-1.em}
\adjustbox{max width=\linewidth}{
\begin{tabular}{@{}llcccccccccccc@{}}
\toprule
 &  & \multicolumn{2}{c}{\textbf{Acc. (C-10)}} & \multicolumn{2}{c}{\textbf{Acc. (C-100)}} & \multicolumn{2}{c}{\textbf{TPR @ 0.1\% FPR}} & \multicolumn{2}{c}{\textbf{TPR @ 1\% FPR}} & \multicolumn{2}{c}{\textbf{AUC}} & \multicolumn{2}{c}{\textbf{Inference acc.}} \\ \midrule
\textbf{Model} & 
    \textbf{Solver} &
    \textbf{Train} & \textbf{Test} & \textbf{Train} & \textbf{Test} &
    \textbf{C-10} & \textbf{C-100} & 
    \textbf{C-10} & \textbf{C-100} & 
    \textbf{C-10} & \textbf{C-100} & 
    \textbf{C-10} & \textbf{C-100} \\ \midrule \midrule
\multirow{3}{*}{ODENet-16\_32\_64} & 
    Euler & 94.30\% & 84.41\% &82.05\% & 52.16\%& 1.01\% & 3.30\%&  4.22\% &12.11\%& 0.616 & 0.782& 57.65\% & 72.88\%\\
    & Rk4 & 94.28\% & 84.39\% & 83.32\% & 52.19\%&  0.89\% & 3.70\%& 4.02\% & 13.07\%& 0.617 & 0.792& 57.78\%  & 73.05\% \\
    & Dopri5 & 94.41\% & 84.61\% & 80.20\%& 52.20\%& 0.68\% & 3.02\%& 3.40\% &11.20\%& 0.609 &0.777& 57.43\% & 64.34\% \\ \bottomrule
\end{tabular}
}
\end{table*}

\begin{table*}[t]
\centering
\caption{%
    \textbf{Membership risks of NODE variants.}
    We compare the LiRA success on these models in CIFAR-10.
}
\label{tbl:mia-variants}
\vspace{-1.em}
\adjustbox{max width=\linewidth}{
\begin{tabular}{@{}lccccccc@{}}
\toprule
\textbf{Model} & \textbf{Acc. (Train)} & \textbf{Acc. (Test)} & \textbf{Train-Test Acc. Gap} & \textbf{TPR @ 0.1\% FPR} & \textbf{TPR @ 1\% FPR} & \textbf{AUC} & \textbf{Inference acc.}

\\ \midrule \midrule
    SONODE & 92.35\% & 86.52\% & 5.9\% & 0.44\% & 2.62\% & 0.585 & 55.81\% \\
    ANODE(+64) & 89.09\% & 85.10\% & 4.0\% & 0.35\%& 2.15\% &  0.572 & 55.07\% \\ 
    ANODE(+16) & 87.60\% & 84.62\% & 3.0\% & 0.24\% & 1.76\% & 0.558 & 54.10\% \\
    HBNODE & 84.91\% & 82.11\% & 2.8\% & 0.24\% & 1.74\% & 0.545 & 53.10\% \\ \midrule
    ODENet-64 (Baseline) & 94.14\% & 84.36\% & 9.7\% & 0.86\%& 4.17\% & 0.639 & 59.57\%\\ \bottomrule
\end{tabular}
}
\end{table*}

\begin{table*}[t]
\centering
\caption{%
    \textbf{Effectiveness of defenses in CIFAR-10.}
    We use LiRA. %
    All defenses, except for SDENets, are applied to ResNet-14.
}
\label{tbl:mia-defense}
\vspace{-1.em}
\adjustbox{max width=\linewidth}{
\begin{tabular}{@{}lccccccc@{}}
\toprule
\textbf{Model} & \textbf{Acc. (Train)} & \textbf{Acc. (Test)} & \textbf{Train-Test Acc. Gap} & \textbf{TPR @ 0.1\% FPR} & \textbf{TPR @ 1\% FPR} & \textbf{AUC} & \textbf{Inference acc.}

\\ \midrule \midrule
    Memguard & 98.14\% & 85.97\% & 12.17\% & 0.00\% & 0.00\% & 0.580 & 54.48\% \\
    MMD+mix-up & 95.10\% & 87.39\% & 7.71\% & 2.03\%& 7.09\% &  0.628 & 57.95\% \\ 
    L2-Regularization & 98.84\% & 86.60\% & 12.24\% & 2.36\% & 7.81\% & 0.675 & 61.54\% \\
    L1-Regularization & 97.39\% & 86.48\% & 10.91\% & 1.06\% & 4.77\% & 0.641 & 59.47\% \\ 
    DP-SGD, $\epsilon$=8 & 73.49\% & 74.44\% & -0.95\% & 0.12\% & 1.18\% & 0.516 & 51.28\% \\ 
    \midrule
    ResNet-14 & 98.14\% & 85.97\% & 12.17\% & 3.96\%& 10.57\% & 0.679 & 61.15\%\\ 
    SDENet $\sigma\!=\!0$ & 89.09\% & 83.93\% & 5.16\% & 0.35\%& 2.02\% & 0.559 & 54.09\%\\ 
    SDENet $\sigma\!=\!2$ & 81.89\% & 81.91\% & -0.02\% & 0.14\%& 1.24\% & 0.525 & 51.89\%\\ 
    \bottomrule
\end{tabular}
}
\end{table*}

\begin{figure}[ht]
    \centering
    \includegraphics[width=0.8\linewidth]{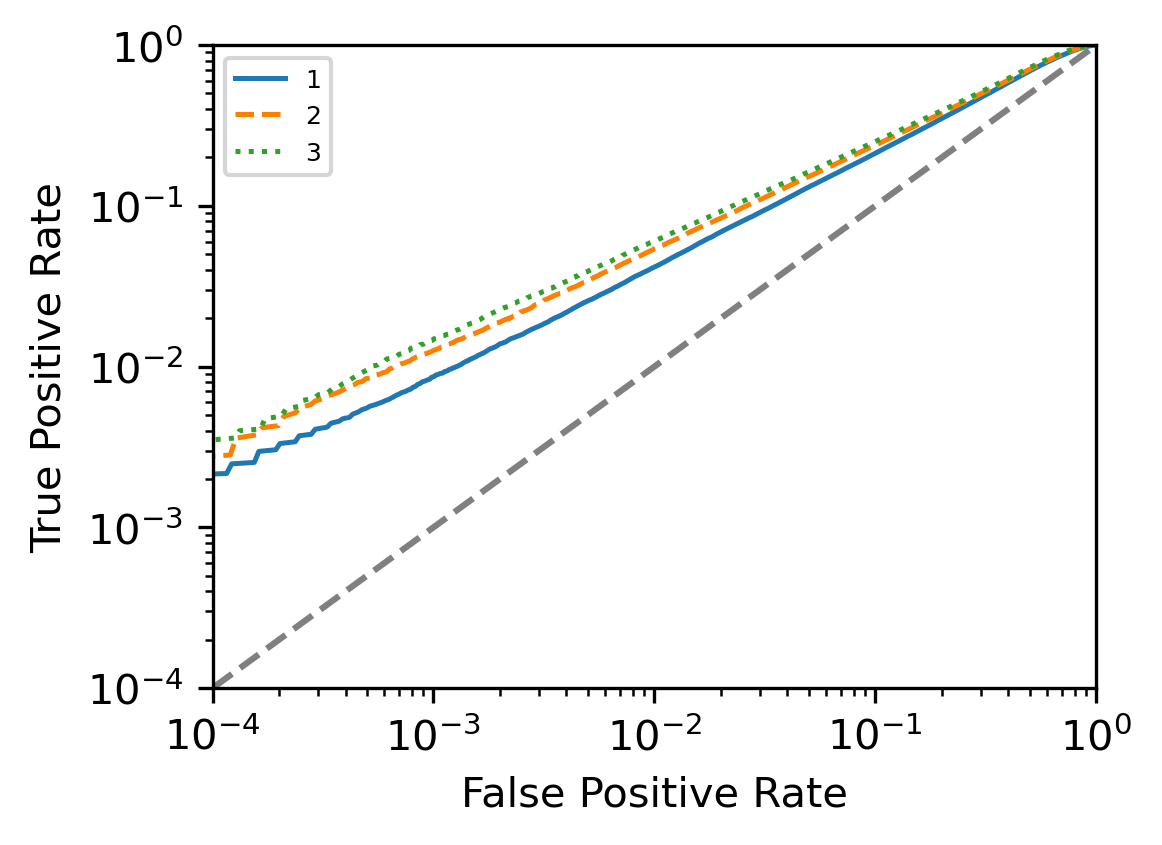}
    \caption{\textbf{Membership inference risks of ODENet-64 models.} We evaluate ODENet-64 models with varying numbers of NODE blocks against the attack of LiRA in CIFAR-10.}
    \label{fig:ode-64_blocks}
\end{figure}

\begin{figure*}[t]
    \centering
    \includegraphics[width=.3\linewidth]{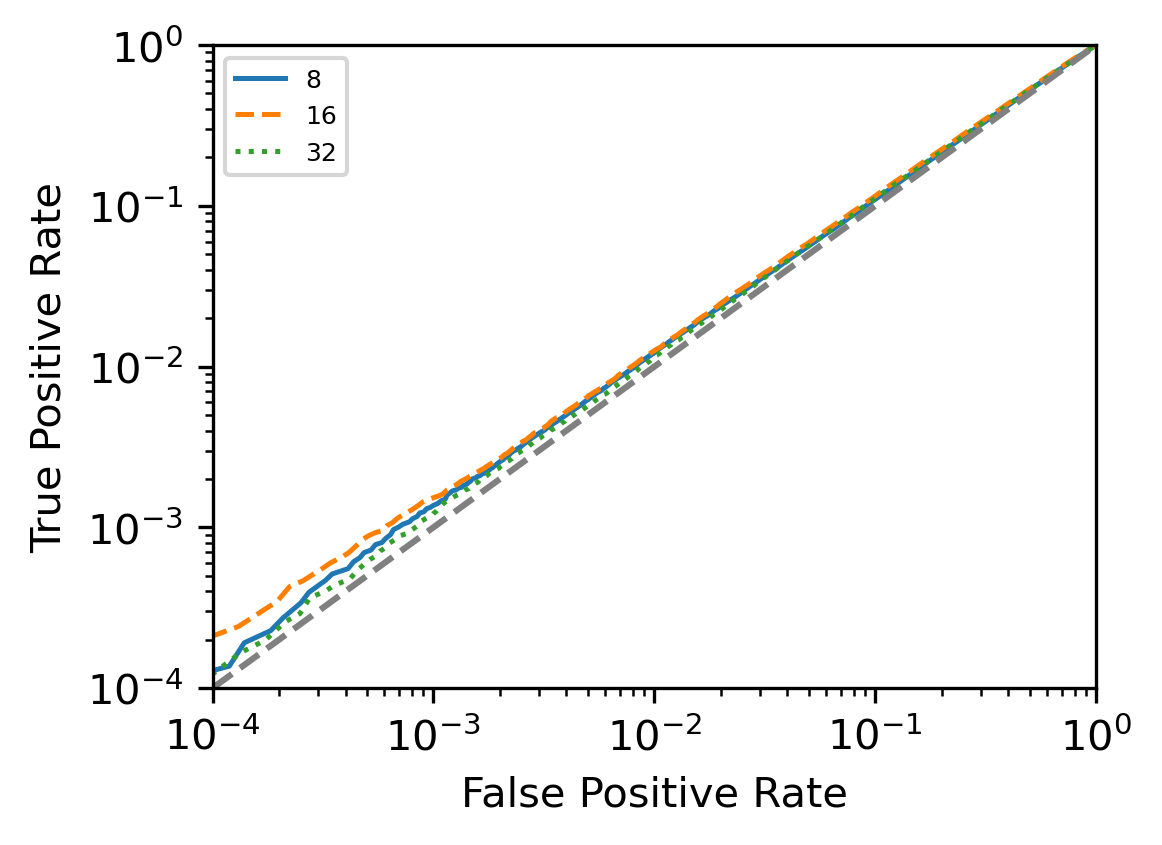}
    \includegraphics[width=.3\linewidth]{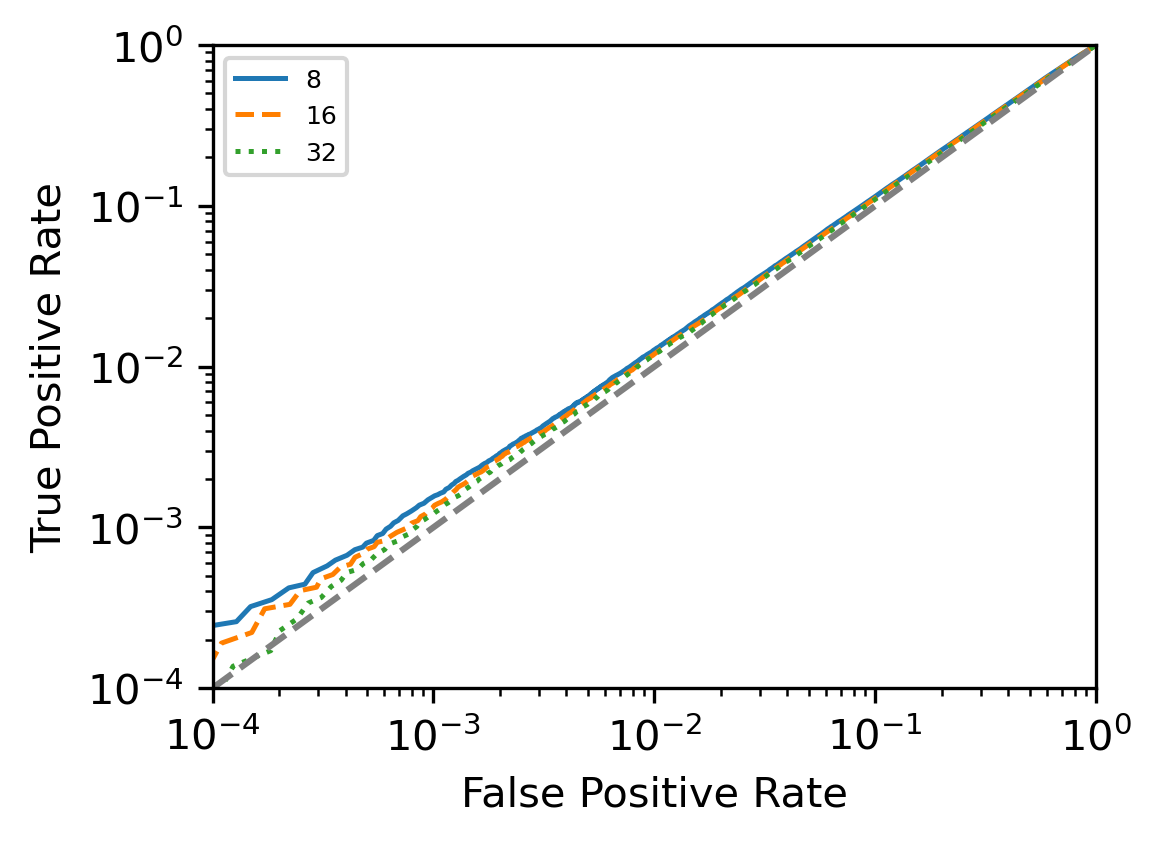}
    \includegraphics[width=.3\linewidth]{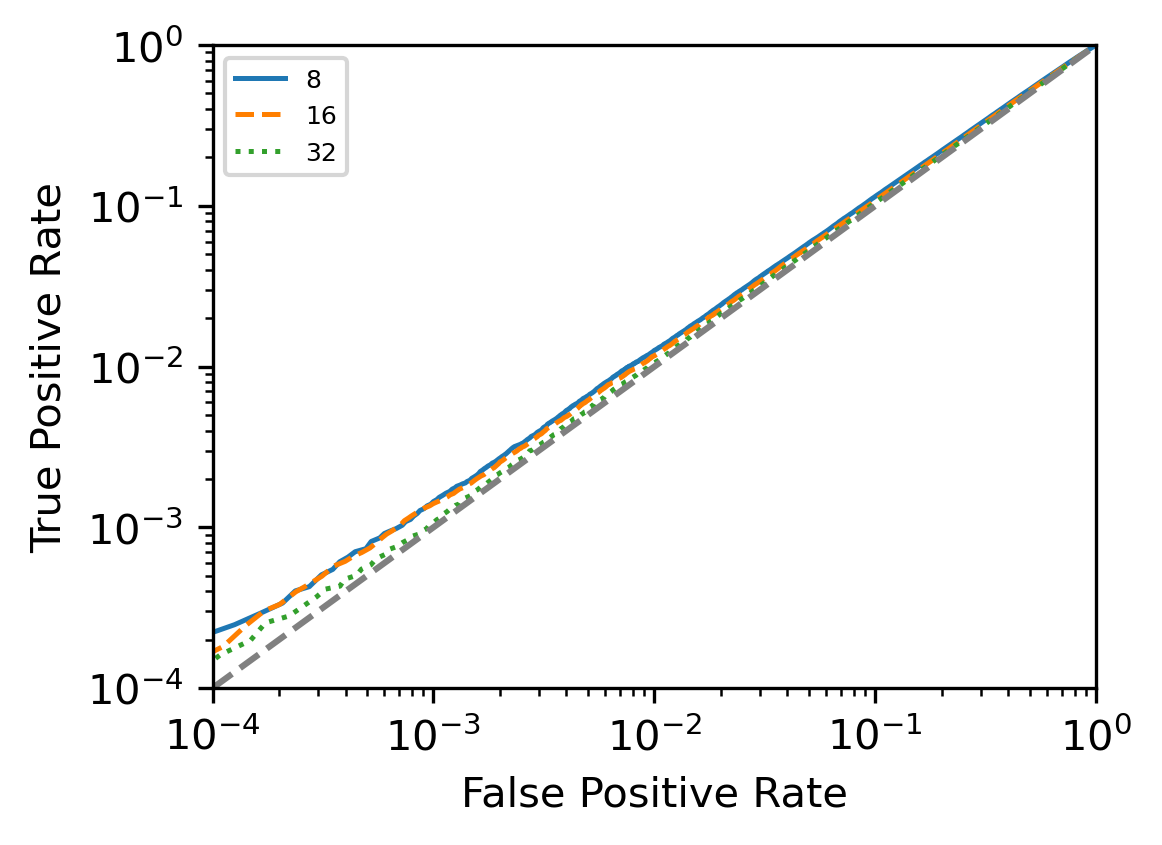}
    \includegraphics[width=.3\linewidth]{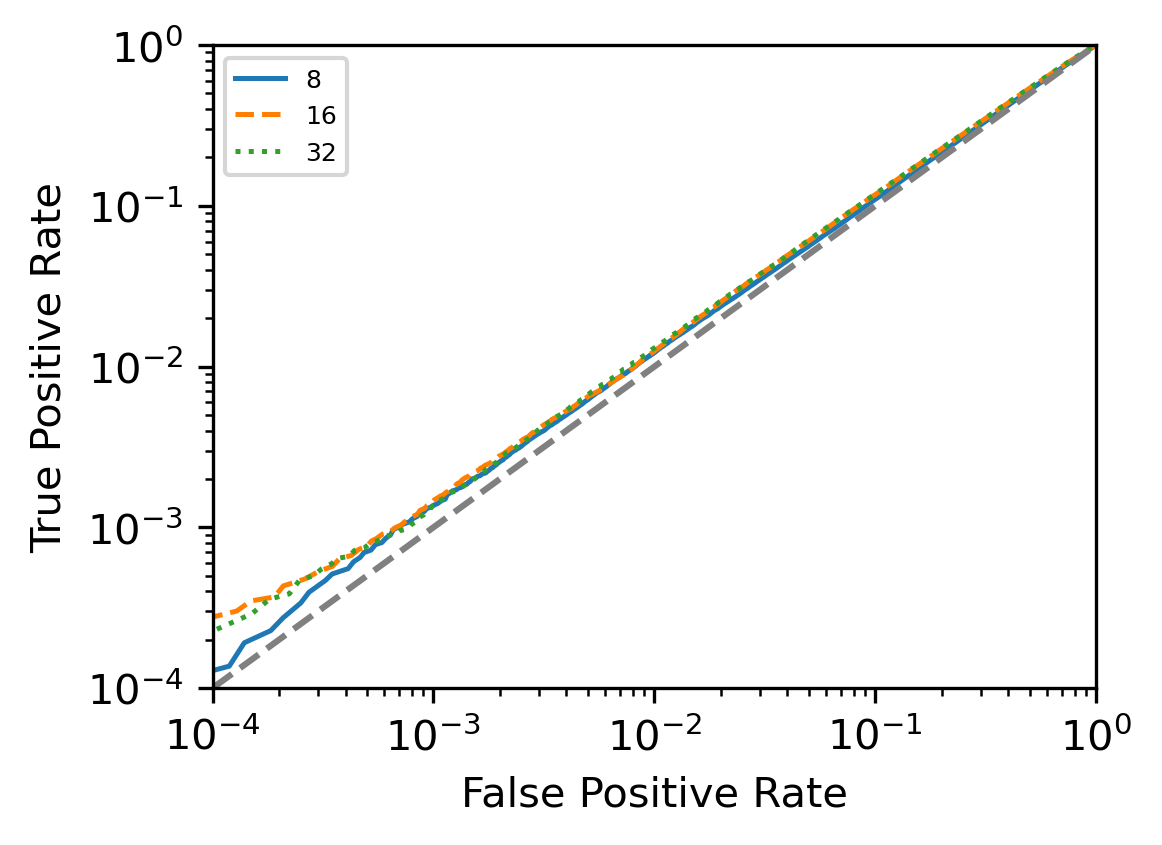}
    \includegraphics[width=.3\linewidth]{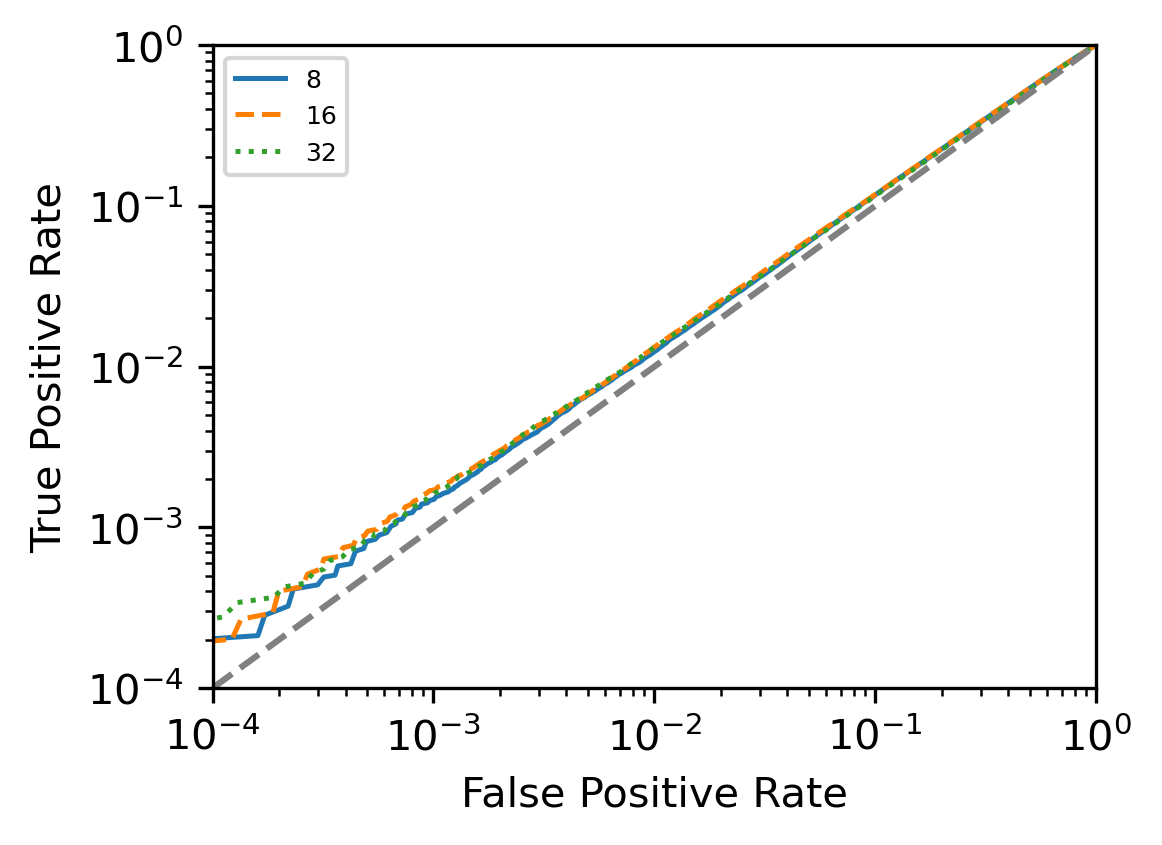}
    \includegraphics[width=.3\linewidth]{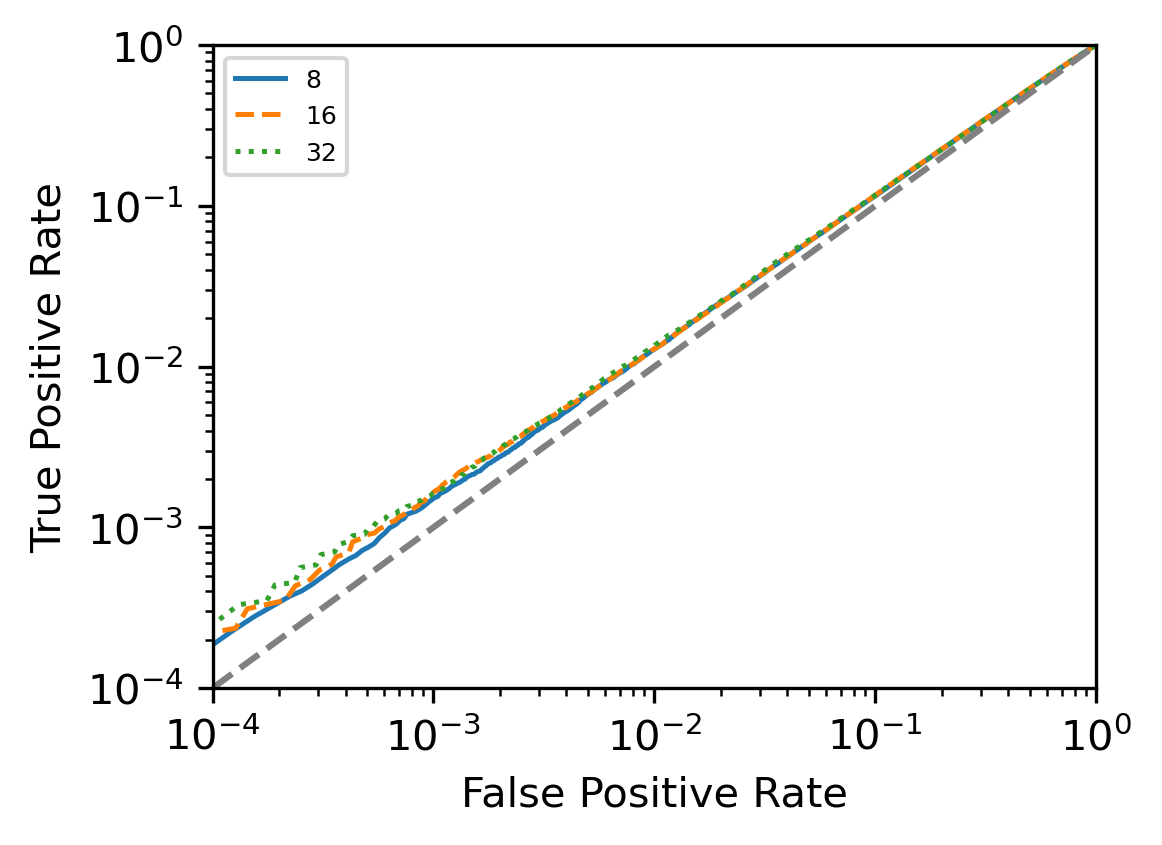}
    \vspace{-1.em}
    \caption{Membership risks of NSDEs train with different step-size ($1/s$). (Top row, left to right) We illustrate the impact of step-size ($1/s$) with the level of stochasticity $k\!=\!0.5$, and varying the integration Time $T\!=\!0.5, 0.25, 0.125$, respectively. (Bottom row, left to right) We illustrate the impact of step-size ($1/s$) with the level of noise intensity $\sigma\!=\!2$, and varying the integration Time $T\!=\!0.5, 0.25, 0.125$, respectively.}
\end{figure*}

\begin{figure*}[t]
    \centering
    \includegraphics[width=.23\linewidth]{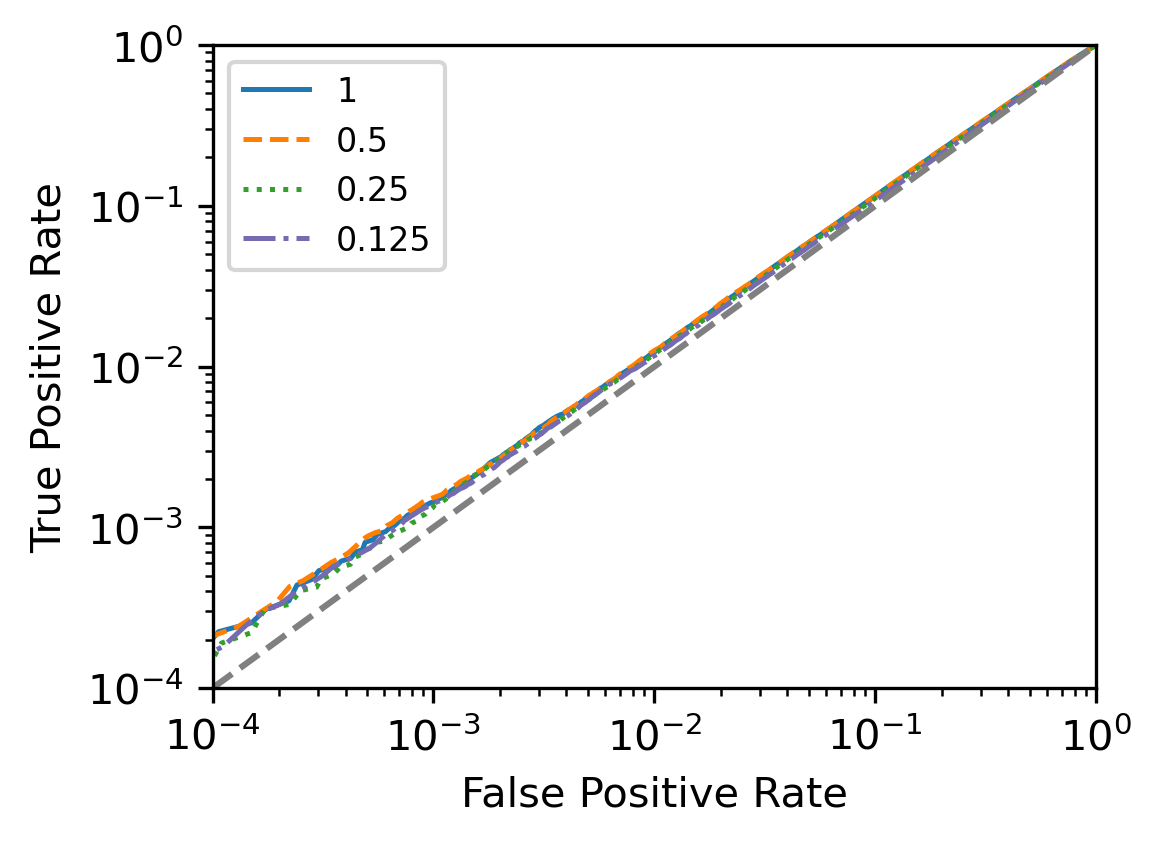}
    \includegraphics[width=.23\linewidth]{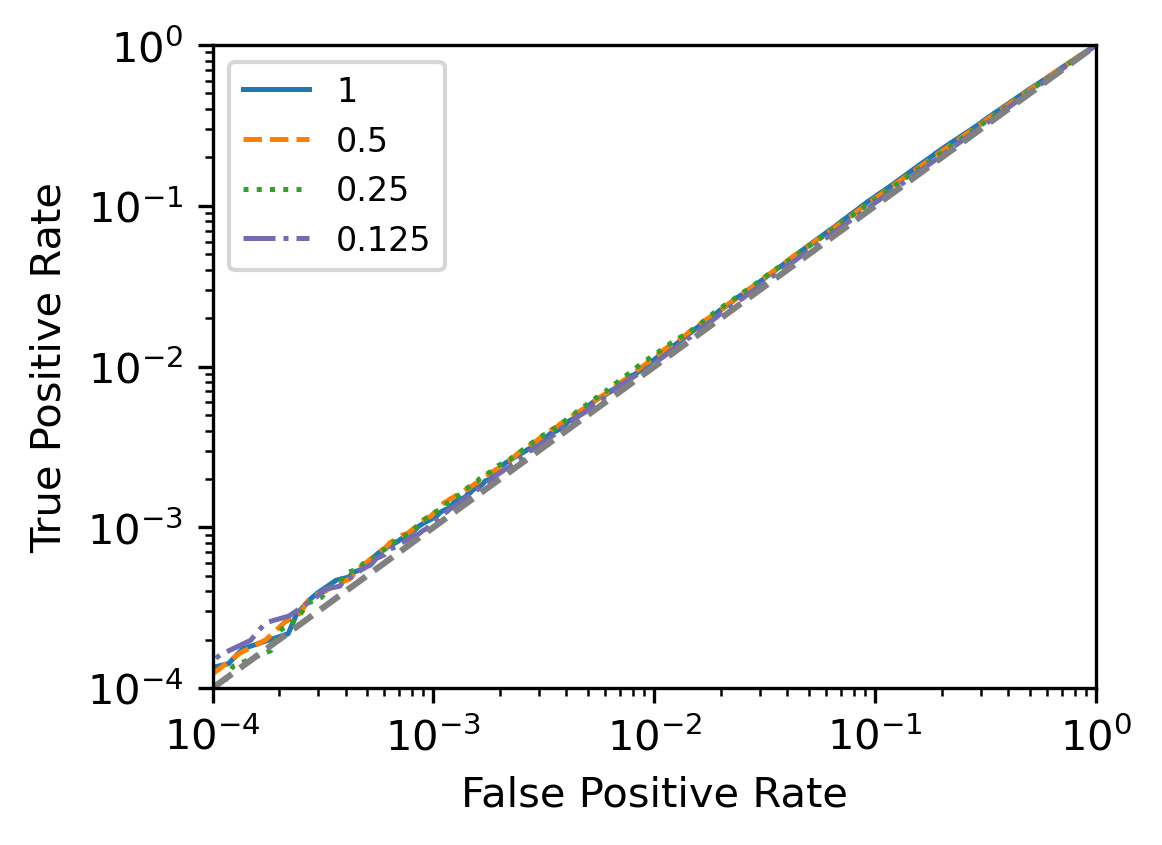}
    \includegraphics[width=.23\linewidth]{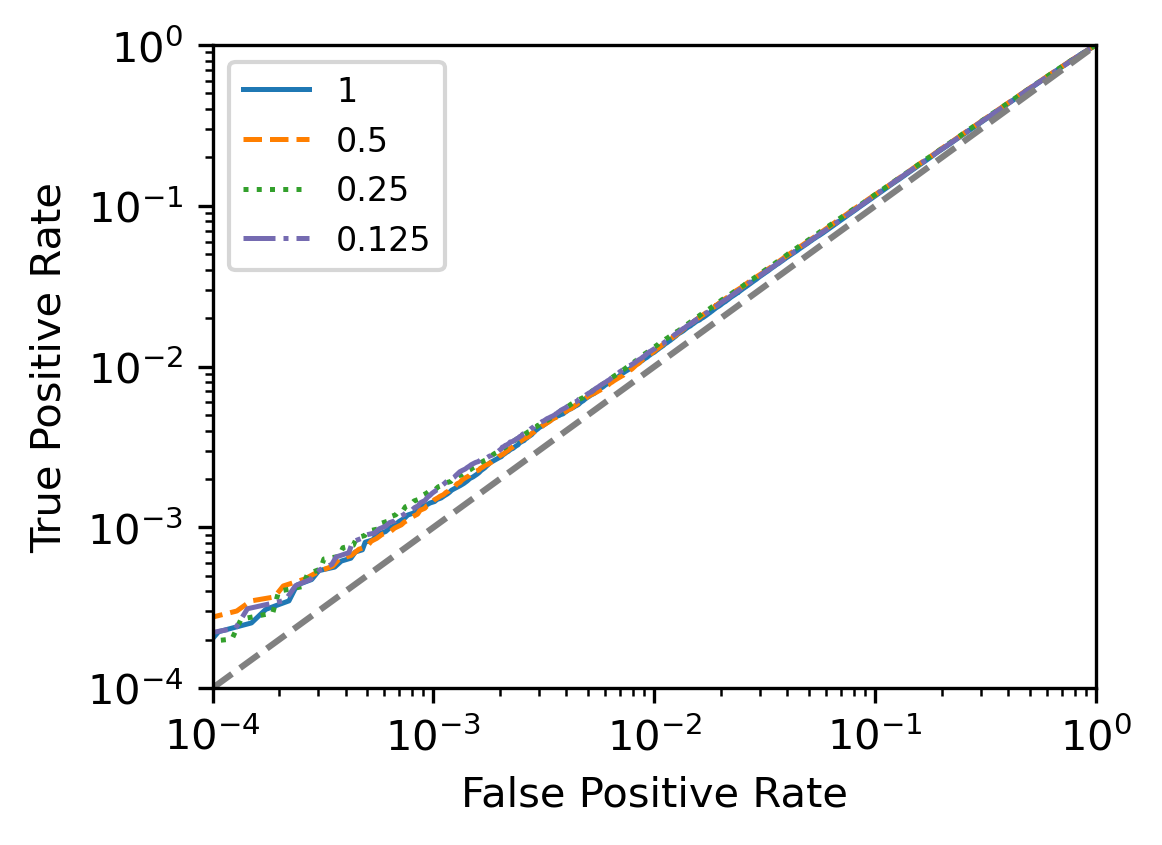}
    \includegraphics[width=.23\linewidth]{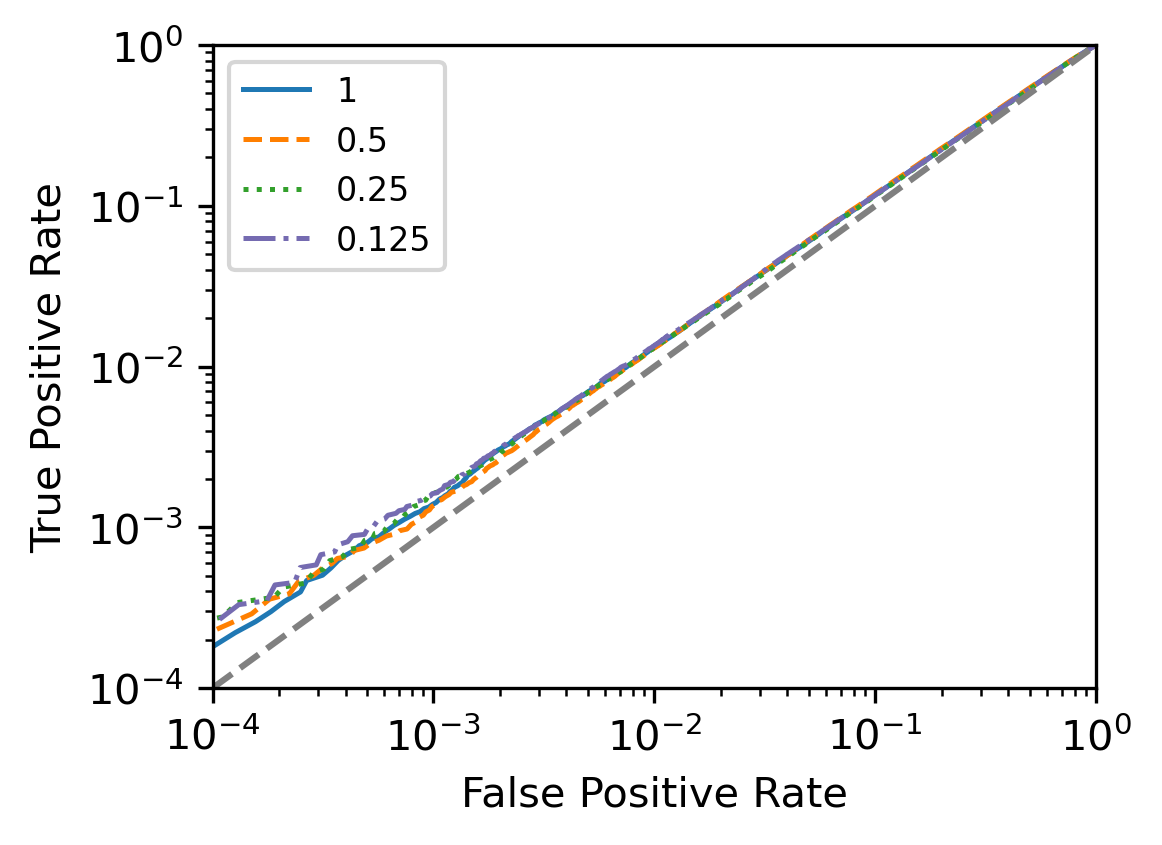}
    \vspace{-1.em}
    \caption{Membership risks of NSDEs train with different time interval ($T$). We illustrate the impact of time ($T$) with the level of stochasticity $k\!=\!0.5$ (left two) and noise intensity $\sigma\!=\!2$ (right two),  and varying $s\!=\!16, 32$, respectively.}
\end{figure*}

\begin{table*}[t]
\centering
\caption{%
    \textbf{Full evaluation results of different NSDE configurations.}
}
\label{tbl:nsde-ablation}
\vspace{-1.em}
\adjustbox{max width=\linewidth}{
\begin{tabular}{@{}lccccccccccc@{}}
\toprule
\textbf{Impact} & 
    \textbf{Step\_size} & \textbf{Time} & \textbf{Stochasticity} & \textbf{Noise} &
    \textbf{Train Acc.} & \textbf{Test Acc.} & \textbf{TPR @ 0.1\% FPR} & \textbf{TPR @ 1\% FPR} &
    \textbf{AUC} & 
    \textbf{Inference acc.} & \\ \midrule \midrule

\multirow{3}{*}{Step\_size} & 
    8 & 1 & 0.5 & 1.4& 83.27\% & 82.38\% & 0.15\% & 1.37\% &  0.527& 52.04\% \\
    & 16 & 1 & 0.5 & 2 & 81.89\% & 81.91\% & 0.14\% & 1.24\% & 0.525 &  51.89\%  \\
    & 32 & 1 & 0.5 & 2.8& 46.65\% & 35.86\% & 0.14\% & 1.29\% & 0.523 & 52.10\%   \\ \midrule
\multirow{3}{*}{Step\_size} & 
    8 & 0.5 & 0.5 & 2& 83.09\% & 82.52\% & 0.14\% & 1.27\% & 0.527 & 51.60\% \\
    & 16 & 0.5 & 0.5 & 2.8 & 81.61\% & 82.00\% & 0.15\% & 1.26\% & 0.527 &  51.99\%   \\
    & 32 & 0.5 & 0.5 & 4& 79.26\% & 80.92\% & 0.14\% & 1.15\% & 0.519 &  51.37\%  \\ \midrule
\multirow{3}{*}{Step\_size} & 
    8 & 0.25& 0.5& 2.8& 82.40\% & 82.05\% & 0.16\% & 1.27\% & 0.525 & 51.88\% \\
    & 16 & 0.25& 05& 4& 80.89\% & 81.21\% & 0.13\% & 1.20\% & 0.521 & 51.59\% \\
    & 32 & 0.25& 0.5& 5.7& 77.90\% & 80.12\% & 0.12\% & 1.19\% & 0.514 & 50.84\% \\ \midrule
\multirow{3}{*}{Step\_size} & 
    8 & 0.125 & 0.5 & 4& 80.61\% & 80.77\% & 0.14\% & 1.26\% & 0.522 & 51.49\% \\
    & 16 & 0.125 & 0.5 & 5.7 & 78.52\% & 79.66\% & 0.14\% & 1.18\% & 0.514 &  51.88\%   \\
    & 32 & 0.125 & 0.5 & 8& 75.01\% & 77.43\% & 0.11\% & 1.06\% & 0.507 & 50.41\%   \\ \midrule
\multirow{3}{*}{Step\_size} & 
    8 & 1& 0.71& 2& 81.66\% & 81.96\% & 0.17\% & 1.34\% & 0.524 & 51.67\%  \\
    & 16 & 1& 0.5& 2& 81.89\% & 81.91\% & 0.14\% & 1.24\% & 0.525 & 51.89\% & \\
    & 32 & 1& 0.35& 2& 81.64\% & 82.02\% & 0.15\% & 1.31\% & 0.527  & 51.78\% &\\ \midrule
\multirow{3}{*}{Step\_size} & 
    8 & 0.5& 0.5& 2& 83.09\% & 82.52\% & 0.14\% & 1.27\% & 0.527 & 51.60\% \\
    & 16 & 0.5& 0.35& 2& 83.09\% & 82.50\% & 0.14\% & 1.25\% & 0.528 & 52.25\% \\
    & 32 & 0.5& 0.25& 2& 82.98\% & 82.32\% & 0.14\% & 1.30\% & 0.530 & 52.03\% \\ \midrule
\multirow{3}{*}{Step\_size} & 
    8 & 0.25& 0.35& 2& 83.70\% & 82.46\% & 0.15\% & 1.26\% & 0.528 & 52.17\% \\
    & 16 & 0.25& 0.25& 2& 83.53\% & 82.49\% & 0.18\% & 1.33\% & 0.529 & 52.71\% \\
    & 32 & 0.25& 0.18& 2& 83.44\% & 82.52\% & 0.17\% & 1.32\% & 0.524 & 51.67\% \\ \midrule
\multirow{3}{*}{Step\_size} & 
    8 & 0.125 & 0.25 & 2 & 82.98\% & 81.95\% & 0.16\% & 1.29\% & 0.525 & 52.34\% \\
    & 16 & 0.125 & 0.18 & 2 & 82.91\% & 81.71\% & 0.17\% & 1.28\% & 0.527 &  51.55\%  \\
    & 32 & 0.125 & 0.13 & 2 & 82.88\% & 81.70\% & 0.17\% & 1.36\% & 0.526 & 52.34\% \\ \midrule
\multirow{4}{*}{Time} & 
    8 & 1 & 0.5 & 1.4& 83.27\% & 82.38\% & 0.15\% & 1.37\% & 0.528 & 52.04\% \\
    & 8 & 0.5 & 0.5 & 2 & 83.09\% & 82.52\% & 0.14\% & 1.27\% & 0.527 &  51.60\%   \\
    & 8 & 0.25 & 0.5 & 2.8& 82.40\% & 82.05\% & 0.16\% & 1.27\% & 0.525 & 51.88\%   \\ 
    & 8 & 0.125 & 0.5 & 4& 80.61\% & 80.77\% & 0.14\% & 1.26\% & 0.523 & 51.49\%   \\    
    \midrule
\multirow{4}{*}{Time} & 
    16 & 1 & 0.5 & 2& 81.89\% & 81.91\% & 0.14\% & 1.24\% & 0.525 & 51.89\% \\
    & 16 & 0.5 & 0.5 & 2.8 & 81.61\% & 82.00\% & 0.15\% & 1.26\% & 0.527 &  51.99\%  \\
    & 16 & 0.25 & 0.5 & 4& 80.89\% & 81.21\% & 0.13\% & 1.20\% & 0.521 & 51.59\%   \\ 
    & 16 & 0.125 & 0.5 & 5.7& 78.52\% & 79.66\% & 0.14\% & 1.18\% & 0.514 & 51.88\%  \\ \midrule
\multirow{4}{*}{Time} & 
    32 & 1 & 0.5 & 2.8& 46.65\% & 35.86\% & 0.14\% & 1.29\% & 0.523 & 52.10\% \\
    & 32 & 0.5 & 0.5 & 4 & 79.26\% & 80.92\% & 0.14\% & 1.15\% & 0.519 &  51.37\%   \\
    & 32 & 0.25 & 0.5 & 5.7& 77.90\% & 80.12\% & 0.12\% & 1.19\% & 0.514 & 50.84\%  \\ 
    & 32 & 0.125 & 0.5 & 8& 74.60\% & 77.22\% & 0.12\% & 1.01\% & 0.510 & 50.14\% &  \\ \midrule
\multirow{4}{*}{Time} & 
    8 & 1 & 0.71 & 2& 81.66\% & 81.96\% & 0.17\% & 1.34\% & 0.524 & 51.67\% \\
    & 8 & 0.5 & 0.5 & 2 & 83.09\% & 82.52\% & 0.14\% & 1.27\% & 0.527 &  51.60\%   \\
    & 8 & 0.25 & 0.35 & 2& 83.70\% & 82.46\% & 0.15\% & 1.26\% & 0.528 & 52.17\%  \\ 
    & 8 & 0.125 & 0.25 & 2& 82.98\% & 81.95\% & 0.16\% & 1.29\% & 0.525 & 52.34\%  \\    
    \midrule
\multirow{4}{*}{Time} & 
    16 & 1 & 0.5 & 2& 81.89\% & 81.91\% & 0.14\% & 1.24\% & 0.525 & 51.89\% \\
    & 16 & 0.5 & 0.35 & 2 & 83.09\% & 82.50\% & 0.14\% & 1.25\% & 0.528 &  0.523\%   \\
    & 16 & 0.25 & 0.25 & 2& 83.53\% & 82.49\% & 0.18\% & 1.33\% & 0.529 & 52.71\%  \\ 
    & 16 & 0.125 & 0.18 & 2& 82.91\% & 81.71\% & 0.17\% & 1.28\% & 0.527 & 51.6\%   \\ \midrule
\multirow{4}{*}{Time} & 
    32 & 1 & 0.35 & 2& 81.64\% & 82.02\% & 0.15\% & 1.31\% & 0.527 & 51.78\% \\
    & 32 & 0.5 & 0.25 & 2 & 82.98\% & 82.32\% & 0.14\% & 1.30\% & 0.530 &  52.03\%    \\
    & 32 & 0.25 & 0.18 & 2& 83.44\% & 82.52\% & 0.17\% & 1.32\% & 0.524 & 51.67\%   \\ 
    & 32 & 0.125 & 0.13 & 2& 82.88\% & 81.70\% & 0.17\% & 1.36\% & 0.526 & 52.34\%  \\ \midrule
\multirow{4}{*}{k} & 
    16 & 1 & 0.3 & 1.2& 83.56\% & 82.62\% & 0.15\% & 1.29\% & 0.530 & 51.29\% \\
    & 8 & 1 & 0.5 & 1.4 & 82.38\% & 82.38\% & 0.15\% & 1.37\% & 0.528 &  52.04\%   \\
    & 16 & 1 & 0.4 & 1.6& 82.58\% & 82.22\% & 0.17\% & 1.40\% &0.528  & 52.27\%   \\ 
    & 8 & 0.5 & 0.5 & 2& 83.09\% & 82.52\% & 0.14\% & 1.27\% & 0.527 & 51.60\% &  \\
    & 16 & 1 & 0.6 & 2.4 & 63.23\% & 57.12\% & 0.09\% & 1.09\% & 0.516 &  52.02\%   \\
    & 8 & 0.25 & 0.5 & 2.8& 82.40\% & 82.05\% & 0.16\% & 1.27\% & 0.525 & 51.88\% &  \\ 
    & 16 & 0.25 & 0.5 & 4& 80.89\% & 81.21\% & 0.13\% & 1.20\% & 0.521 & 51.59\% &  \\
    & 32 & 0.25 & 0.5 & 5.7 & 77.90\% & 80.12\% & 0.12\% & 1.19\% & 0.514 &  50.84\%  &  \\
    & 32 & 0.125 & 0.5 & 8& 75.01\% & 77.43\% & 0.11\% & 1.06\% & 0.507 & 50.41\% &  \\ 
    \midrule
\multirow{4}{*}{Stochasticity} & 
    0.3 & 1& 16& 1.2& 83.56\% & 82.62\% & 0.15\% & 1.29\% & 0.516 & 52.02\% & \\
    & 0.4 & 1& 16& 1.6& 82.58\% & 82.22\% & 0.17\% & 1.40\% & 0.528 & 52.27\% & \\
    & 0.5 & 1& 16& 2& 81.89\% & 81.91\% & 0.14\% & 1.24\% & 0.525 & 51.89\% &\\   
    & 0.6 & 1& 16& 2.4& 63.23\% & 57.12\% & 0.09\% & 1.09\% & 0.516 & 52.02\% &\\ \bottomrule
\end{tabular}
}
\end{table*}

\end{document}